\renewcommand {\a}{ \alpha }
\renewcommand{\b}{\beta}
\newcommand{\g}{\gamma}
\newcommand{\U}{\Upsilon}
\renewcommand{\d}{\delta}
\renewcommand{\l}{\lambda}
\newcommand{\z}{\zeta}
\renewcommand{\t}{\theta}
\newcommand{\p}{\partial}
\newcommand{\oq}{\ {\raise 7pt\hbox{${\scriptstyle\circ}$}}
	\kern -7pt{%\lower 2pt
		\hbox{$Q$}}}
\newcommand{\R}{ \mathbb R}
\newcommand {\ba}{\mathbf a}
\newcommand {\BP}{\mathbf P}
\newcommand {\bx}{\mathbf x}
\newcommand {\bk}{\mathbf k}
\newcommand {\bq}{\mathbf q}
\newcommand {\bm}{\mathbf m}
\newcommand {\bl}{\mathbf l}
\newcommand {\bw}{\mathbf w}
\newcommand {\bz}{\mathbf z}
\newcommand {\by}{\mathbf y}
\newcommand {\bn}{\mathbf n}
\newcommand{\SR}{{\sf{R}}}
\newcommand{\SD}{{\sf D}}
\newcommand{\se}{{\sf e}}
\newcommand{\SQ}{{\sf Q}}
\newcommand{\SP}{{\sf P}}
\newcommand{\SfS}{{\sf S}}
\newcommand{\BSP}{\boldsymbol{\sf P}}
\newcommand {\bal}{\boldsymbol\alpha}
\newcommand {\BOLG}{\boldsymbol\Gamma}
\newcommand{\CH}{\mathcal H}
\newcommand{\CM}{\mathcal M}
\newcommand{\CD}{\mathcal D}
\newcommand{\plainW}[1]{\textup{{\textsf{W}}}^{#1}}
\newcommand{\plainC}[1]{\textup{{\textsf{C}}}^{#1}}
\newcommand{\plainL}[1]{\textup{{\textsf{L}}}^{#1}}
\DeclareMathOperator{\card}{{card}}
\DeclareMathOperator {\re} {{Re}}
\DeclareMathOperator{\supp}{{supp}}
\DeclareMathOperator{\dc}{d}
\newtheorem{thm}{Theorem}[section]
\newtheorem{cor}[thm]{Corollary}
\newtheorem{lem}[thm]{Lemma}%[section]
\newtheorem{prop}[thm]{Proposition}
\theoremstyle{definition}
\newtheorem{defn}[thm]{Definition}%[section]
\newtheorem{rem}[thm]{Remark}
\numberwithin{equation}{section}
\newcommand{\bee}{\begin{equation}}
	\newcommand{\ene}{\end{equation}}
\newcommand{\bees}{\begin{equation*}}
	\newcommand{\enes}{\end{equation*}}
\newcommand{\bes}{\begin{split}}
	\newcommand{\ens}{\end{split}}
\newcommand{\bet}{\begin{thm}}
	\newcommand{\ent}{\end{thm}}
\newcommand{\bel}{\begin{lem}}
	\newcommand{\enl}{\end{lem}}
\newcommand{\bec}{\begin{cor}}
	\newcommand{\enc}{\end{cor}}
\newcommand{\bep}{\begin{proof}}
	\newcommand{\enp}{\end{proof}}
\newcommand{\ber}{\begin{rem}}
	\newcommand{\enr}{\end{rem}}
\newcommand{\1}{\mathbbm 1}
\newcommand{\scalel}[1]
{{\scaleto{#1}{3pt}}}
\begin{document}
	\hoffset -4pc

\title
[One-particle density matrix]
{The diagonal behaviour of the one-particle Coulombic density matrix}
\author{Peter Hearnshaw}
\author{Alexander V. Sobolev}
\address{Department of Mathematics\\ University College London\\
	Gower Street\\ London\\ WC1E 6BT UK}
\email{peter.hearnshaw.18@ucl.ac.uk}
\email{a.sobolev@ucl.ac.uk}

\begin{abstract}	 
 We obtain bounds for all derivatives of the non-relativistic 
 Coulombic one-particle density matrix $\g(x, y)$ near the diagonal $x = y$. 
\end{abstract}

\keywords{Multi-particle system, Schr\"odinger equation, one-particle density matrix}
\subjclass[2010]{Primary 35B65; Secondary 35J10,  81V55}

\maketitle

\section{Introduction}

Consider on $\plainL2(\R^{3N})$ the Schr\"odinger operator 
\begin{align}
\CH = &\ \CH_0 + V,\quad \CH_0 = - \Delta = - \sum_{k=1}^N \Delta_k,\notag\\
%\label{eq:ham}\\
V(\bx) = &\ - Z
\sum_{k=1}^N \frac{1}{|x_k|}  
 + \sum_{1\le j< k\le N} \frac{1}{|x_j-x_k|},\label{eq:potential}
\end{align}
describing an atom with $N$ particles 
(e.g. electrons)  
with coordinates $\bx = (x_1, x_2, \dots, x_N)$, $x_k\in\R^3$, $k= 1, 2, \dots, N$, 
and a nucleus with charge $Z>0$.  
The notation $\Delta_k$ is used for 
the Laplacian w.r.t. the variable $x_k$. 
The operator $\CH$ acts on the Hilbert space $\plainL2(\R^{3N})$ and by
 standard methods one proves that it is self-adjoint on the domain 
$D(\CH) =\plainW{2, 2}(\R^{3N})$, 
see e.g. \cite[Theorem X.16]{ReedSimon2}. (Here and throughout the paper we use the standard notation 
$\plainW{l, p}$ for the Sobolev spaces, where $l$ and $p$ indicate the smoothness and summability 
respectively). 
Our methods allow consideration of the molecular 
Schr\"odinger operator, but we restrict our attention to the atomic case for simplicity. 
Let $\psi = \psi(\bx)$,  
be an eigenfunction of the operator $\CH$ with an eigenvalue $E\in\R$, i.e. $\psi\in D(\CH)$ and 
\begin{align*}
(\CH-E)\psi = 0.
\end{align*}
For each $j=1, \dots, N$, we represent
\begin{align*}
\bx = (x_j, \hat\bx_j), \quad \textup{where}\ 
\hat\bx_j = (x_1, \dots, x_{j-1}, x_{j+1},\dots, x_N),
\end{align*}
with obvious modifications if $j=1$ or $j=N$. 
The one-particle density matrix is defined as the function 
\begin{align}\label{eq:den}
\tilde\g(x, y) = \sum_{j=1}^N\int\limits_{\R^{3N-3}}\psi(x, \hat\bx_j) 
 \overline{\psi(y, \hat\bx_j)}\,  d\hat\bx_j,\quad (x,y)\in\R^3\times\R^3. 
\end{align} 
This function is one of the key objects in 
the  multi-particle quantum mechanics, see 
\cite{RDM2000}, \cite{Davidson1976}, \cite{LLS2019}, \cite{LiebSei2010} for details and futher references. 
If one assumes that all $N$ particles are spinless 
fermions (resp. bosons), i.e. that the function $\psi$ is 
antisymmetric (resp. symmetric) under the permutations $x_j\leftrightarrow x_k$, 
then the definition \eqref{eq:den} simplifies:
\begin{align*}
%\label{eq:fb}
\tilde\g(x, y) = N \int_{\R^{3N-3}} \psi(x, \hat\bx)  \overline{\psi(y, \hat\bx)} d\hat\bx,\ 
\quad \textup{where} \ \hat\bx = \hat\bx_1. 
\end{align*}
Our main result however does not require any symmetry assumptions. 
We are interested in the smoothness properties of the function \eqref{eq:den}. It is clear that 
for this purpose it suffices to study each term in \eqref{eq:den} individually. 
Moreover, using permutations of the variables it is sufficient to focus just on one term on the 
right-hand side of \eqref{eq:den}:
\begin{align}\label{eq:gamma}
\g(x, y) = \int_{\R^{3N-3}}
\psi(x, \hat\bx)  \overline{\psi(y, \hat\bx)} \, d\hat\bx,\ 
\quad \hat\bx = (x_2, x_3, \dots, x_N).
\end{align}
Throughout the paper we refer to this function as the \textit{one-particle density matrix}. 
In \cite{FHOS2004} the \textit{one-particle density} 
\begin{align}\label{eq:dens}
\rho(x) = \g(x, x) = \int_{\R^{3N-3}}
|\psi(x, \hat\bx)|^2 \, d\hat\bx,
\end{align}
was shown to be 
a real-analytic function of $x\not = 0$. 
The real analyticity of the function $\g(x, y)$ as a 
function of two variables on the domain
\begin{align}\label{eq:D}
\CD = \{(x, y): |x|\,|y|\not = 0,\quad x\not = y\}\subset \R^3\times\R^3,
\end{align} 
%
%
%$x\not = 0, y\not = 0, x\not=y$, 
%
was proved in \cite{HearnSob2022}.  
As was pointed out in \cite{HearnSob2022},  
one cannot expect analyticity in $x$ and $y$ to hold on the diagonal $x = y$. 
In fact, quantum chemistry calculations in 
\cite{Cioslowski2020} (see also \cite{Cioslowski2021}) show that in the neighbourhood of the diagonal 
$x=y\not = 0$, the function $\g(x, y)$ has the following behaviour:
\begin{align}\label{eq:cio}
\re\g(x, y) - \g(x, x)= C(x) |x-y|^5 + o(|x-y|^5),\quad \textup{as}\quad y\to x,
\end{align}
with some non-zero function $C(x)$. 
Motivated in part by the formula \eqref{eq:cio}, 
in the current paper we aim to obtain 
explicit bounds for all partial derivatives of $\g(x, y)$  
near the diagonal $x = y$ or the points $x = 0$ and $y=0$.
For the derivatives of $\g$ we use the standard notation $\p_x^m \p_y^l\g(x, y)$, 
where $m, l\in\mathbb N_0^3$, 
$\mathbb N_0 = \mathbb N\cup\{0\}$.  
%
%Before stating the main result we adopt the following convention. 
%
For two non-negative numbers (or functions) 
$X$ and $Y$ depending on some parameters, 
we write $X\lesssim Y$ (or $Y\gtrsim X$) if $X\le C Y$ with 
some positive constant $C$ independent of those parameters. 
The notation $B(x, R)$ is used for the open ball of radius $R$ centred at the point $x$.  

%
%The main result is the following bounds for the derivatives of $\g(x, y)$ for all 
%$x\not = 0$, $y\not = 0$, $x\not = y$. 
%

For $b\ge 0$, $t > 0$ we define 
\begin{align}\label{eq:h}
 h_b(t) = 
\begin{cases} 
1,\ \textup{if}\ b <5,\\
\log\big(t^{-1}+2),\ \textup{if}\ b=5,\\
1 + t^{5-b},\ \textup{if}\ b >5.
\end{cases}
\end{align}
The next theorem contains the main result.  

\begin{thm}\label{thm:derbounds} 
Let $\psi$ be an eigenfunction, and let $R>0$ be fixed.  
Then for all $x \not = 0, y\not = 0$, $x\not =y$, and all $l, m\in\mathbb N_0^3$ such that 
$|l|\ge 1, |m|\ge 1$, we have
\begin{align}\label{eq:der2}
|\p_x^l\p_y^m \g(x, y)|\lesssim   \big(1 + |x|^{2-|l|-|m|} 
+ |y|^{2-|l|-|m|} 
 & +  h_{|l|+|m|}(|x-y|)\big)\notag\\
  &\ \times \big(\|\rho\|_{\plainL1(B(x, R))}\big)^{\frac{1}{2}}
\big(\|\rho\|_{\plainL1(B(y, R))}\big)^{\frac{1}{2}}.
\end{align}
Furthermore, for all $|l|\ge 1$,
\begin{align}\label{eq:der1}
|\p_x^l\g(x, y)| + |\p_y^l\g(x, y)|\lesssim \big(1 + |x|^{1 - |l|} 
+ |y|^{1 - |l|}  
&\ + h_{|l|}(|x-y|)\big)\notag\\ 
&\ \times \big(\|\rho\|_{\plainL1(B(x, R))}\big)^{\frac{1}{2}}
\big(\|\rho\|_{\plainL1(B(y, R))}\big)^{\frac{1}{2}}.
\end{align}
The implicit constants in \eqref{eq:der2} and \eqref{eq:der1} 
may depend on $l$, $m$ and $R$, but are independent of $x$, $y$ and the function $\psi$.
\end{thm}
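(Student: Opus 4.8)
The plan is to derive the derivative bounds from a quantitative version of the elliptic regularity analysis that underlies the real-analyticity result of \cite{HearnSob2022}, tracking carefully how the constants degenerate as $x\to y$, $x\to 0$, or $y\to 0$. The starting point is the observation that $\g(x,y)=\int_{\R^{3N-3}}\psi(x,\hat\bx)\overline{\psi(y,\hat\bx)}\,d\hat\bx$ can be viewed, for fixed $\hat\bx$ in a suitable region, through the lens of the equation satisfied by $\psi$. Writing $\CH\psi=E\psi$ and separating the $x_1$-variable, the function $u(x)=\psi(x,\hat\bx)$ satisfies an equation of the form $-\Delta_x u = F$ where $F$ collects the eigenvalue term, the Coulomb singularity $Z|x|^{-1}u$, the electron-electron interactions $|x-x_k|^{-1}u$, and the $\hat\bx$-Laplacian contributions. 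The crucial point, already exploited in the cited works, is the Kato-type cusp structure: after an explicit factorisation $\psi = e^{F_c}\phi$ (with $F_c$ the Fock--Kato first-order term capturing the $|x|^{-1}$ and $|x-x_k|^{-1}$ cusps), $\phi$ is substantially more regular, and one obtains control on derivatives of $\g$ by combining derivatives of the smooth factor with the explicit, computable derivatives of the cusp factor.

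The key steps, in order, are as follows. First, I would localise: fix $x_0,y_0$ with $|x_0|,|y_0|,|x_0-y_0|$ comparable to a common scale $\ell$, and rescale the relevant balls to unit size, so that the problem becomes one of estimating derivatives of a solution to an elliptic system on a fixed domain with coefficients whose size is governed by $\ell$. Second, I would invoke the representation of $\g$ as (a finite sum of) integrals against products of $\psi$, and use the a priori $\plainH1$-type bounds on $\psi$ together with the subsolution/Sobolev estimates to bound low-order quantities by $\|\rho\|_{\plainL1(B(x,R))}^{1/2}\|\rho\|_{\plainL1(B(y,R))}^{1/2}$ — this is where the right-hand side factor in \eqref{eq:der2}, \eqref{eq:der1} comes from, via Cauchy--Schwarz in $\hat\bx$ and the identity $\rho(x)=\g(x,x)$. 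Third, and this is the technical heart, I would run an iterative (bootstrap) elliptic-regularity argument on $\phi$: each application of interior Schauder/$\plainL p$ estimates gains one derivative at the cost of a factor controlled by the coefficient sizes, i.e. by negative powers of $\ell$ and by the Coulomb cusp contributions. Tracking the worst term through $|l|+|m|$ iterations yields the powers $|x|^{2-|l|-|m|}$, $|y|^{2-|l|-|m|}$. Fourth, the genuinely diagonal contribution — the one producing $h_{|l|+|m|}(|x-y|)$ — comes from the $|x-y|$-type singularity that appears when the two ``active'' variables collide; here one must integrate the bound $|x-y|^{5-|l|-|m|}$-type behaviour (cf.\ \eqref{eq:cio}) over the region, and the three cases $b<5$, $b=5$, $b>5$ in the definition \eqref{eq:h} of $h_b$ arise precisely from whether this power is integrable, borderline, or not. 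Finally, I would assemble the local estimates and remove the localisation, checking that the constants depend only on $l,m,R$ (and $N,Z,E$, which are fixed).

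The main obstacle I anticipate is the simultaneous and uniform tracking of \emph{three} distinct degeneration mechanisms — the nuclear cusp at $x=0$ or $y=0$, the diagonal singularity at $x=y$, and (hidden inside the $\hat\bx$-integration) the electron-electron cusps at $x=x_k$ or $y=x_k$ — through a high-order bootstrap. In particular, controlling the interaction terms $|x-x_k|^{-1}\psi$ requires care: after differentiating many times, these contribute singular factors in the integration variable $\hat\bx$, and one must show that integrating them against $\psi$ still produces only the claimed mild growth and not worse. The correct bookkeeping device is almost certainly to work with the factored function $\phi$ and to prove, as an intermediate lemma, weighted derivative bounds on $\phi$ of the schematic form $|\p^\alpha\phi(x,\hat\bx)|\lesssim (\text{powers of }|x|,|x-x_k|)\cdot(\text{local }\plainL2\text{ norm of }\psi)$, uniformly in $\hat\bx$; once such a lemma is in place, \eqref{eq:der2} and \eqref{eq:der1} follow by differentiating the product $\psi(x,\hat\bx)\overline{\psi(y,\hat\bx)}$ under the integral sign, applying Leibniz, and integrating in $\hat\bx$ with Cauchy--Schwarz.
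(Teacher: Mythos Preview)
Your proposal identifies several correct ingredients: the Jastrow factorisation $\psi=e^{F}\phi$, elliptic bootstrap for $\phi$, and the Cauchy--Schwarz argument producing the factor $\|\rho\|_{\plainL1(B(x,R))}^{1/2}\|\rho\|_{\plainL1(B(y,R))}^{1/2}$. These are indeed used in the paper. However, the heart of the argument is missing, and as written the approach would fail.

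The gap is in your ``differentiate under the integral and integrate the pointwise bound'' step. Differentiating $\psi(x,\hat\bx)$ directly in the variable $x=x_1$ gives, even after Jastrow factorisation, bounds of the form $|\p_{x_1}^{\alpha}\psi(x,\hat\bx)|\lesssim \l_{\{1\}}(x,\hat\bx)^{1-|\alpha|}f_\infty(\bx;R)$, where $\l_{\{1\}}(x,\hat\bx)\le \min_{k\ge 2}|x-x_k|$. The factor $|x-x_k|^{1-|\alpha|}$ is a singularity \emph{in the integration variable} $x_k$, with no lower cutoff; once $|\alpha|\ge 4$ it is not locally $\plainL2$ in $x_k\in\R^3$, and the $\hat\bx$--integral in your final Cauchy--Schwarz step diverges. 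Your proposal anticipates ``the electron--electron cusps at $x=x_k$'' as an obstacle but offers no mechanism to neutralise it.

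The paper's key idea is precisely such a mechanism. One introduces a partition of unity of \emph{extended cut-offs} $\Phi(x,y,\hat\bx;\varepsilon)$, with $\varepsilon\asymp\min\{|x|,|y|,|x-y|\}$. Each $\Phi$ determines a cluster $\SP\ni 1$ of indices $j$ for which $x_j$ is close to $x$ on $\supp\Phi$. One then makes the change of variables $x_j\mapsto x_j+x$ for $j\in\SP^*$ under the integral, which converts $\p_x$ acting on $\g$ into the \emph{cluster derivative} $\SD_{\SP}=\sum_{j\in\SP}\p_{x_j}$ acting on $\psi$. The crucial gain is that $\SD_{\SP}$ annihilates the intra-cluster singularities $|x_j-x_k|^{-1}$ with $j,k\in\SP$, so the bound on $\SD_{\SP}^{\alpha}\psi$ involves only the distance $\l_{\SP}(\bx)=\min\{|x_j|,\,|x_j-x_k|:j\in\SP,\,k\notin\SP\}$, and on $\supp\Phi$ every such quantity is bounded below by $c\varepsilon$. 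Integrating these controlled singularities over $\hat\bx$ (Lemma~5.2 in the paper) is what produces $h_{|l|+|m|}(\varepsilon)$: it comes from $\int_{\{\varepsilon<|s|<1\}}|s|^{-a}\,ds$ in $\R^3$, not from the asymptotic formula \eqref{eq:cio}, which is motivation rather than input.

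Two further points. First, the paper needs cluster-derivative bounds for \emph{pairs} of clusters $\{\SP,\SfS^*\}$ simultaneously (one for $x$, one for $y$); this is a nontrivial extension of the single-cluster estimates in the earlier literature and is the content of Theorem~3.2 and Corollary~4.4. Second, the one-sided bound \eqref{eq:der1} is \emph{not} obtained by the same argument with $m=0$: that route loses a power of the scale. Instead the paper writes $\p_x^l\g(x,y)-\p_x^l\g(x,z)=-\int_0^1(z-y)\cdot\nabla_y\p_x^l\g(x,y_s)\,ds$ along a short segment ending at a convenient $z$ (with $|z-y|\asymp\min\{1,|x|,|y|,R\}$), applies the already-proved \eqref{eq:der2} to the integrand, and integrates in $s$; the $s$--integration converts $h_{|l|+1}$ into $h_{|l|}$.
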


\begin{rem}\label{rem:main}
\begin{enumerate}
\item 
Theorem \ref{thm:derbounds} naturally extends  
the case of a molecule with several nuclei whose positions
are fixed. The modifications are straightforward.
\item 
Observe that $\|\rho\|_{\plainL1(B(x, R))}\le \|\psi\|_{\plainL2(\R^{3N})}^2$, $R>0$,  
so that the right-hand sides of 
\eqref{eq:der2} and \eqref{eq:der1} are finite. 
\item 
The bound \eqref{eq:der1} 
for $|l| = 1$ implies that $\g(x, y)$ is a Lipschitz function on $\R^3\times\R^3$. 
\item 
%Let 
%\begin{align*}
%\CD = \{(x, y): |x|\,|y|\not = 0,\quad x\not = y\}\subset \R^3\times\R^3.
%\end{align*}
%
The bounds \eqref{eq:der2}, \eqref{eq:der1} ensure that 
$\g\in\plainW{5, p}_{\textup{\tiny loc}}(\CD)$ 
with arbitrary $p <\infty$, see \eqref{eq:D} for the definition of $\CD$. 
By Proposition \ref{prop:remove} in the Appendix, this implies that 
$\g\in\plainW{5, p}_{\textup{\tiny loc}}\big((\R^3\setminus\{0\})\times(\R^3\setminus\{0\})\big)$, 
which means 
that $\g\in\plainC{4, \t}_{\textup{\tiny loc}}
\big((\R^3\setminus\{0\})\times(\R^3\setminus\{0\})\big)$ 
for all $\t <1$. This result just barely misses the $\plainC{4, 1}$-smoothness 
of the factor $|x-y|^5$ in the formula \eqref{eq:cio}. 
In this sense Theorem 
\ref{thm:derbounds} is sharp, up to a $\log$-term.  
\item 
The one-electron density $\rho(x) = \g(x, x)$ 
is known to be real analytic for $x\not = 0$, see \cite{FHOS2004}. 
As shown in \cite{FS2021}, the function $\rho(x)$ satisfies the bound 
\begin{align*}%\label{eq:densderiv}
|\p_x^l\rho(x)|\lesssim \big(1 + |x|^{1 - |l|}\big)\, \|\rho\|_{\plainL1(B(x, R))},
\quad \textup{for all} \quad l\in\mathbb N_0^3, \quad x\not = 0.
\end{align*}
%Since 
%\begin{align*}
%\p_x\rho(x) = \left.\big(\p_x\g(x, y) + \p_y\g(x, y)\big)\right|_{y=x},
%\end{align*} 
%
For $|l|\le 4$ this bound 
%
%\eqref{eq:densderiv} 
%
follows from \eqref{eq:der2} and 
\eqref{eq:der1}.
\item  
There is an independent (indirect) 
argument indicating that $\g(x, y)$ should have 
a $|x-y|^5$-singularity on the diagonal.  More precisely, 
we can show that if the function $\g(x, y)$ has on the diagonal 
a singularity of the type $|x-y|^b$ with some $b >-3$, then necessarily $b = 5$. 
This argument is based on the analysis of spectral asymptotics 
of the (non-negative) operator 
$\BOLG$ with kernel $\tilde\g(x, y)$. It was shown in \cite{Sobolev2022}
that the eigenvalues $\l_k(\BOLG)$ have asymptotics of order $k^{-8/3}$ as $k\to\infty$. 
%
%behave asymptotically as 
%$A k^{-8/3}$, $k\to\infty$, with some coefficient $A\ge 0$.    
%
On the other hand, according to the results 
on spectral asymptotics for integral operators 
with homogeneous kernels, see \cite{BS1970} (and also 
\cite{Sobolev2022} for a summary), 
the singularity $|x-y|^b$, $b\not = 0, 2, 4, \dots$,  
would produce the spectral asymptotics of order $k^{-(1+b/3)}$. 
The exponent $1+b/3$ coincides with $8/3$ exactly  for $b=5$, which proves the point.  
\end{enumerate}
\end{rem}

Our proofs have two main ingredients. 
At the heart of our method are regularity properties of the eigenfunction $\psi$. 
By the standard elliptic argument, the function $\psi$ is real analytic away from the singularities 
of the potential \eqref{eq:potential}, i.e. away from the particle coalescence points. According   
to T. Kato's seminal paper \cite{Kato1957}, at the coalescence points the function 
$\psi$ is Lipschitz. A detailed study of smoothness properties of $\psi$ 
was conducted in the recent paper 
\cite{FS2021}, to which we also refer for further bibliography. In particular, this paper provides global 
pointwise bounds for partial derivatives of $\psi$. 
In the study of the one-particle density \eqref{eq:dens} conducted in \cite{FS2021} the key point was 
to obtain bounds for certain directional derivatives of $\psi$. 
Such derivatives are also critical for our analysis in the current paper, and we explain their importance 
below.  

Let $\SR = \{1, 2, \dots, N\}$ be the set of all particle labels.  
A subset $\SP\subset\SR$ is called \textit{cluster}. 
%
%The cluster $\SR$ is called \textit{maximal}. 
%
For each cluster $\SP$ we define the following 
\textit{cluster (directional) derivative}: 
\begin{align}\label{eq:clusterder}
\SD_{\SP}^m  = &\ \biggl(\sum_{k\in \SP} \p_{x_k'}\biggr)^{m'}
\bigg(\sum_{k\in \SP} \p_{x_k''}\bigg)^{m''}
\bigg(\sum_{k\in \SP} \p_{x_k'''} \bigg)^{m'''}, 
\end{align}
Here $m = (m', m'', m''')$ with $m', m'', m'''\in \mathbb N_0 := \mathbb N\cup \{0\}$ and 
$x = (x', x'', x''')$ with $x', x'', x'''\in\R$. Observe that for all $m\not = 0$, 
\begin{align*}
\SD_\SP^m \frac{1}{|x_j-x_k|} = 0,\quad \textup{if}\quad j, k\in \SP \quad\textup{or}
\quad j, k\notin \SP.
\end{align*}   
Therefore the potential \eqref{eq:potential} is infinitely smooth with respect to $\SD_\SP$ as long as 
$x_j\not = 0, j = 1, 2, \dots, N,$ and $x_j\not = x_k$, where $j\in \SP, k\notin \SP$ 
or $j\notin \SP, k\in \SP$. 
As a consequence, the function $\psi$ is also infinitely smooth with respect to $\SD_\SP$ 
for the same values of the coordinates, 
see \cite{FHOS2004} or \cite{HearnSob2022}. In particular, the cluster derivatives of $\psi$ 
do not have singularities at the coalescence points $x_j = x_k$, if $j, k\in \SP$ 
or $j, k\notin \SP$. 
One of the pivotal points in \cite{FS2021} was the pointwise 
bound for the cluster derivative $\SD_\SP^m\psi(\bx)$ with explicit dependence on the distance 
of $\bx\in\R^{3N}$ to the coalescence set
\begin{align*}
\Sigma_{\SP} = \bigg\{\bx\in\R^{3N}: \prod_{j\in\SP} |x_j| 
\prod_{k\in\SP, l\in \SP^{\rm c}}|x_k-x_l| = 0\bigg\}.
\end{align*}
For our purposes we need bounds of such type 
%
%
%Such a bound is not sufficient for us. We establish 
% an analogous estimate 
 %
 %
 for cluster derivatives 
involving an arbitrary finite number $M$ of clusters $\SP_1, \SP_2, \dots, \SP_M$,  
i.e. for $\SD_{\SP_1}^{m_1}\SD_{\SP_2}^{m_2}\cdots\SD_{\SP_M}^{m_M}\psi$, 
see Corollary \ref{cor:clusterpsi}. 
This generalization is not immediate and requires substantial further work, which is done in Sect. 
\ref{sect:reg} and  \ref{sect:schr}.
%
%
%some modifications of the proof from 
%\cite{FS2021} that are described in Sect. \ref{sect:schr}. 
%
%

The next step of the proof is to use the bounds obtained for cluster 
derivatives to estimate partial derivatives of the function \eqref{eq:gamma}. To 
this end we use a partition of unity consisting of smooth functions $\Phi(x, y, \hat\bx)$ of 
$3N+3$ variables that we call \textit{extended cut-off functions}, and study the integrals 
\begin{align}\label{eq:cutdens}
\g(x, y; \Phi) = \int_{\R^{3N-3}} \Phi(x, y, \hat\bx)
\psi(x, \hat\bx) \overline{\psi(y, \hat\bx)} \, d\hat\bx.
\end{align}
Extended cut-offs were introduced in 
\cite{HearnSob2022} to establish the 
real analyticity of $\g(x, y)$ outside the diagonal. 

The support of each extended cut-off divides the particles $x_2, x_3, \dots, x_N$ 
into three disjoint groups: the first two 
groups consist of particles that are ``close" to the particles $x$ and $y$ respectively, and the third one 
contains the particles that are ``far" from the first two groups. 
%
%
%
%Each extended cut-off keeps some particles ``close" to each 
%other and to the particle $x$ (or $y$) and separate from the rest of them. 
%
This partition naturally gives rise to two clusters, denoted $\SP$ and $\SfS$: cluster $\SP$ 
labels the particles close 
to $x$, and cluster $\SfS$ -- the particles close to $y$. 
%
%Loosely speaking, 
%
Thus, when differentiating the integral 
\eqref{eq:cutdens} w.r.t. $x$ and $y$, under the integral 
these derivatives convert into the cluster derivatives $\SD_\SP$ and $\SD_\SfS$ respectively.  
The actual calculation is more involved, but we can 
illustrate this conversion using the following simplified example.    
Let us differentiate with respect to $x$ the integral  
\begin{align*}
F(x) = \int f(x, \hat\bx) d\hat\bx, 
\end{align*}
where the integration is conducted over the space $\R^{3N-3}$, and we assume for simplicity that $f\in \plainC\infty_0(\R^{3N})$. 
To this end under the integral we make the change of variables 
$\hat\bx = \hat\bw + \hat\bz$, where 
$\hat\bz = (z_2, z_3, \dots, z_N)\in\R^{3N-3}$ is defined by 
$z_j = x, j\in \SP,$ and $z_j = 0, j\notin \SP$, for some cluster $\SP$. 
Thus  $F(x)$ rewrites as 
\begin{align*}
F(x) = \int f(x, \hat\bw + \hat\bz)\, d\hat\bw.
\end{align*}
Consequently, for all $l\in\mathbb N_0^3$, we have  
\begin{align*}
\p_x^{l}F(x) = \int \big(\SD_\SP^l f\big)(x, \hat\bw+\hat\bz)\,   d\hat\bw
 = \int \SD_\SP^l f(x, \hat\bx)\, d\hat\bx.
\end{align*}  
In order to estimate the derivatives of the integral 
\eqref{eq:cutdens}, we use the bounds for cluster derivatives of $\psi$ 
obtained in the first stage of the proof. Integrating these bounds in
$\hat\bx$ leads to \eqref{eq:der2} and \eqref{eq:der1} thereby completing the proof. 
One should say that the application these bounds is not immediate, 
but we defer the discussion of this technical step 
until Section \ref{sect:estim}, see Remark \ref{rem:grp}.

The paper is organized as follows. 
In Sect. \ref{sect:clusters} we gather information about 
cut-off functions and clusters associated with them. Most of the required facts are borrowed from
\cite{HearnSob2022}. Sect. \ref{sect:reg} 
considers the general elliptic equation of the form \eqref{eq:mod} in a ball 
$B(\bx_0, \ell) = \{\bx: |\bx-\bx_0|< \ell\}\subset \R^{3N}$. 
Here we study cluster derivatives of solutions of \eqref{eq:mod}, and the main focus is on the 
explicit dependence  of these estimates on the radius $\ell$, see Theorem \ref{thm:reg2}. 
In Sect. \ref{sect:schr} these estimates are applied to the Schr\"odinger equation to derive the bounds 
for the cluster derivatives of $\psi$, summarized in Corollary \ref{cor:clusterpsi}. 
Some estimates for integrals 
emerging in the proof of Theorem \ref{thm:derbounds} are gathered in Section \ref{sect:aux}. 
The proof itself 
%
%of Theorem \ref{thm:derbounds} 
%
is completed in Sect. \ref{sect:estim}. 
The appendix (Section \ref{sect:app}) 
contains an elementary extension property for the Sobolev spaces that was proved in \cite{Sobolev2022a}.  
 
\textbf{Notation.} 
We conclude the introduction with some general notational conventions.  

%
%\textit{Constants.}
%By $C$ or $c$ with or without indices, we denote 
%various positive constants whose exact value is of no importance.  
%

\textit{Coordinates.} 
As mentioned earlier, we use the following standard notation for the coordinates: 
$\bx = (x_1, x_2, \dots, x_N)$,\ where $x_j\in \R^3$, $j = 1, 2, \dots, N$. 
As a rule we represent $\bx$ in the form 
$\bx = (x_1, \hat\bx)$ with  
$\hat\bx = (x_2, x_3, \dots, x_N)\in\R^{3N-3}$. 

For $N\ge 3$ it is also useful to introduce the notation 
for $\hat\bx$ with $x_j$, $j\ge 2$, taken out. Let 
\begin{align}\label{eq:xtilde}
\tilde\bx_j = (x_2, \dots, x_{j-1}, x_{j+1},\dots, x_N),  
\end{align}
so that $\hat\bx = (x_j, \tilde\bx_j)$ and $\bx = (x_1, x_j, \tilde\bx_j)$.

\textit{Clusters.}
Let $\SR = \{1, 2, \dots, N\}$. 
A subset $\SP\subset\SR$ is called 
\textit{cluster}. 
%
%The cluster $\SR$ is called \textit{maximal}. 
%
We denote $|\SP| = \card \SP$, 
$\SP^{\rm c} = \SR\setminus \SP$,\ $\SP^* = \SP\setminus\{1\}$. 
If $\SP = \varnothing$, then $|\SP| = 0$ and $\SP^{\rm c} = \SR$.

For $M$ clusters $\SP_1, \dots, \SP_M$ we write $\BP = \{\SP_1, \SP_2, \dots, \SP_M\}$, 
%
% $\BP^* = \{\SP_1^*, \SP_2^*, \dots, \SP_M^*\}$
%
and call $\BP$  \textit{cluster set}. 
Clusters $\SP_1, \SP_2, \dots, \SP_M$ in a cluster set are not assumed to be all disjoint or distinct. 

\textit{Derivatives.} 
Let $\mathbb N_0 = \mathbb N\cup\{0\}$.
If $x = (x', x'', x''')\in \R^3$ and $m = (m', m'', m''')\in \mathbb N_0^3$, then 
the derivative $\p_x^m$ is defined in the standard way:
\begin{align*}
\p_x^m = \p_{x'}^{m'}\p_{x''}^{m''}\p_{x'''}^{m'''}.
\end{align*}
This notation extends to $x\in\R^d$ with an arbitrary dimension $d\ge 1$ in the obvious way. 
Denote also 
\begin{align*}
\p^{\bm} =  
\p_{x_1}^{m_1} \p_{x_2}^{m_2}\cdots \p_{x_N}^{m_N},\quad 
\bm = (m_1, m_2, \dots, m_N)\in \mathbb N_0^{3N}.
\end{align*}
%
%
%Function spaces; $\plainW{l,p}$, $\plainH{l} = \plainW{l, 2}$, $\plainC{l}$.
%
%
%A central role is played by the following directional derivatives. 
%For a cluster $\SP$ and each $m = (m', m'', m''')\in \mathbb N_0^3$, we define 
%the \textit{cluster derivatives} 
%\begin{align}\label{eq:clusterder}
%\SD_{\SP}^m  = &\ \biggl(\sum_{k\in \SP} \p_{x_k'}\biggr)^{m'}
%\bigg(\sum_{k\in \SP} \p_{x_k''}\bigg)^{m''}
%\bigg(\sum_{k\in \SP} \p_{x_k'''} \bigg)^{m'''}.
%\end{align}
%These operations can be viewed as partial derivatives w.r.t. the variable 
%$\sum_{k\in\SP} x_k$. 
Let $\BP = \{\SP_1, \SP_2, \dots, \SP_M\}$ be a cluster set, and let 
$\bm = (m_1, m_2, \dots, m_M)$, $m_k\in \mathbb N_0^3$, $k = 1, 2, \dots, M$. Then we denote 
\begin{align*} 
\SD_{\BP}^{\bm} = \SD_{\SP_1}^{m_1} \SD_{\SP_2}^{m_2}\cdots\SD_{\SP_M}^{m_M},
\end{align*} 
where each individual cluster derivative is defined as in \eqref{eq:clusterder}.  
It is easy to see that the cluster derivatives satisfy the Leibniz rule. 
We use this fact without further comments throughout the paper.  

\textit{Supports.} 
For any smooth function $f = f(\bx)$,  
we define $\supp_0 f = \{\bx: f(\bx)\not = 0\}$. It is clear that the closure $\overline{\supp_0 f}$ 
coincides with the support $\supp f$ defined in the standard way. 
With this definition we immediately get the useful 
property that
\begin{align*}
%\label{eq:supp}
\supp_0 (fg) = \supp_0 f\cap \supp_0 g.
\end{align*}

\textit{Bounds.} 
As explained earlier, for two non-negative numbers (or functions) 
$X$ and $Y$ depending on some parameters, 
we write $X\lesssim Y$ (or $Y\gtrsim X$) if $X\le C Y$ with 
some positive constant $C$ independent of those parameters. 
If $X\lesssim Y$ and $Y\lesssim X$, then $X\asymp Y$. 
To avoid confusion we often make explicit comments on the nature of 
(implicit) constants in the bounds. In particular, 
all constants (implicit or explicit) may depend on the eigenvalue $E$, 
the number of particles $N$ and the charge $Z$.

\section{Cut-off functions and clusters}\label{sect:clusters}

\subsection{Admissible cut-off functions}  
Let 
\begin{align}\label{eq:xi}
\xi\in \plainC\infty_0(\R):\ 0\le \xi(t)\le 1,\ 
\xi(t) = 
\begin{cases}
1,\ {\rm if}\ |t|\le 1,\\[0.2cm]
0,\ {\rm if}\ |t|\ge 2.
\end{cases}
\end{align}
Now for $\varepsilon>0$ we define two  radially-symmetric functions $\z\in \plainC\infty_0(\R^3)$, 
$\t\in\plainC\infty(\R^3)$ as follows:
\begin{align}\label{eq:pu}
\z(x) = \z_\varepsilon(x) 
= \xi\bigg(\frac{4N}{\varepsilon}|x|\bigg),\quad 
\t(x) = \t_\varepsilon(x) = 1-\z_\varepsilon(x), \quad x\in\R^3,
\end{align}
so that 
\begin{align*}
\z(x) = 0 \quad \textup{for}\quad x\notin B\big(0, \varepsilon(2N)^{-1}\big),\qquad\qquad
\t(x) = 0 \quad \textup{for}\quad x\in B\big(0, \varepsilon(4N)^{-1}\big). 
\end{align*}
The dependence of the cut-offs on the parameter $\varepsilon$ is important, 
but it is not always reflected in the notation.   

Our next step is to build out of the functions $\z_\varepsilon$ and $\t_\varepsilon$ 
cut-off functions of $3N$ variables. 
Let $\{f_{jk}\}, 1\le j, k\le N$, be a set of functions such that 
each of them is one of the functions $\z_\varepsilon$ 
or $\t_\varepsilon$, and 
$f_{jk} = f_{kj}$.
We call functions of the form 
\begin{align}\label{eq:canon}
\phi(\bx) = \prod\limits_{1\le j < k\le N} f_{j k}(x_j-x_k).
\end{align} 
\textit{admissible cut-off functions}
or simply \textit{admissible cut-offs}. 
Such cut-offs (or, more precisely, a slightly more general version thereof) were used 
in \cite{FHOS2002}, \cite{FHOS2004} and also in 
\cite{HearnSob2022}. 
We need only a subset of their properties established 
in \cite{FHOS2004} and \cite{HearnSob2022}. 
  
We associate with the function $\phi$ a cluster $\SQ(\phi)$ defined next. 

\begin{defn} 
For an admissible cut-off $\phi$, let $I(\phi)\subset \{(j, k)\in\SR\times\SR: j\not= k\}$
be the index set such that 
$(j, k)\in I(\phi)$, iff $f_{jk} = \z$.    
We say that two indices $j, k\in\SR$, 
are $\phi$-\textit{linked} to each other 
if either $j=k$, or $(j, k)\in I(\phi)$, or 
there exists a sequence of pairwise 
distinct indices $j_1, j_2, \dots, j_s$, $1\le s\le N-2$, 
all distinct from $j$ and $k$, 
such that $(j, j_1), (j_s, k)\in I(\phi)$ and $(j_p, j_{p+1})\in I(\phi)$ for all $p = 1, 2, \dots, s-1$.

The cluster $\SQ(\phi)$ associated with the cut-off $\phi$ 
is defined as the set of all indices that are $\phi$-linked to index $1$.  
\end{defn}

It follows from the above definition that $\SQ(\phi)$ always contains index $1$. Note also that 
the notion of being linked defines an equivalence relation on $\SR$, and the cluster $\SQ(\phi)$ is 
nothing but the equivalence class of index $1$. 
On the support of the admissible cut-off $\phi$ the variables 
$x_j$, indexed by $j\in\SQ(\phi)$, are ``close" to each other and are  
``far" from the remaining variables. In order to quantify these facts below 
we define a number of subsets in $\R^{3N}$ and $\R^{3N-3}$. 

For any cluster $\SP$ we 
introduce the following sets depending on the parameter $\varepsilon>0$:
\begin{align*}
%\label{eq:xp}
X_{\SP}(\varepsilon) = 
\begin{cases}
\R^{3N}\quad {\rm for}\ |\SP| = 0 \ \textup{or} \ N,\\[0.2cm]
\{\bx\in \R^{3N}: \ |x_j-x_k| > \varepsilon, 
\forall j\in \SP, k\in \SP^{\rm c} \},\quad  {\rm for}\ 0< |\SP| < N,
\end{cases}
\end{align*} 
The set $X_\SP(\varepsilon)$ separates the points $x_k$ 
and $x_j$ labeled by the clusters $\SP$ and $\SP^{\rm c}$ respectively. 
Note that $X_{\SP}(\varepsilon) = X_{\SP^{\rm c}}(\varepsilon)$. 
Define also the sets separating $x_k$'s from the origin:
\begin{align*}
%\label{eq:tp}
T_\SP(\varepsilon) = 
\begin{cases}
\R^{3N}, \quad {\rm for}\ |\SP|=0,\\[0.2cm]
\{\bx\in\R^{3N}: |x_j|>\varepsilon,\ \forall j\in \SP\},\ \quad {\rm for}\ |\SP|>0.
\end{cases}
\end{align*}
It is also convenient to introduce 
corresponding sets in the space $\R^{3N-3}$:
% 
%a similar notation involving only the variable $\hat\bx$:
%
\begin{align}\label{eq:hatxp}
\widehat X_{\SP}(x, \varepsilon) = 
\{\hat\bx\in \R^{3N-3}: (x, \hat\bx)\in X_{\SP}(\varepsilon)\},\quad  \textup{for all}\quad 
x\in\R^3, 
\end{align} 
and 
\begin{align}\label{eq:hattp}
\widehat T_{\SP}(\varepsilon) = 
\begin{cases}
\R^{3N-3}, \quad {\rm for}\ |\SP^*|=0,\\[0.2cm]
\{\hat\bx\in\R^{3N-3}: |x_j|>\varepsilon,\ \forall j\in \SP^*\},\ \quad {\rm for}\ |\SP^*|>0.
\end{cases}
\end{align}
Observe that $\widehat T_{\SP}(\varepsilon) = \widehat T_{\SP^*}(\varepsilon)$.

The support of the admissible cut-off $\phi$ is easily described with the help of the sets introduced above.  
The next proposition is adapted from 
\cite[Lemma 4.3(i)]{FHOS2004} and 
\cite[Lemmata 4.2, 4.3]{HearnSob2022}. 

\begin{prop}\label{prop:supp} 
For $\SP = \SQ(\phi)$ the inclusion 
\begin{align*}
%\label{eq:suppphi}
\supp_0\phi
\subset X_{\SP}\big(\varepsilon(4N)^{-1}\big)
\end{align*}
holds. 
 
Moreover, if $j\in \SQ(\phi)$, then $|x_1-x_j|<\varepsilon/2$ for all $\bx\in\supp_0\phi$. 
If $|x_1|> \varepsilon$, then 
\begin{align*}
%\label{eq:suppphi1}
\supp_0\phi(x_1, \ \cdot\ ) 
\subset \widehat T_{\SP}(\varepsilon/2).
\end{align*}
\end{prop}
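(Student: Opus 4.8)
The plan is to unwind the definitions and track how the factors $f_{jk}$ constrain the support. Recall that a point $\bx$ lies in $\supp_0\phi$ precisely when every factor $f_{jk}(x_j-x_k)$ is non-zero. For the $\z$-factors this forces $|x_j-x_k| < \varepsilon(2N)^{-1}$, and for the $\t$-factors it forces $|x_j-x_k| > \varepsilon(4N)^{-1}$. Write $\SP = \SQ(\phi)$. For the first inclusion I need to show that if $j\in\SP$ and $k\in\SP^{\rm c}$ then $|x_j-x_k| > \varepsilon(4N)^{-1}$. If $(j,k)\notin I(\phi)$, i.e. $f_{jk}=\t$, this is immediate from the paragraph above. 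The point is that $(j,k)$ cannot lie in $I(\phi)$, nor can $j$ and $k$ be $\phi$-linked through a chain inside $I(\phi)$, because that would place $k$ in the equivalence class of $1$, contradicting $k\in\SP^{\rm c}$; hence $f_{jk}=\t$ and the bound holds. This gives $\supp_0\phi\subset X_\SP(\varepsilon(4N)^{-1})$.

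For the second claim, fix $j\in\SP$ and a point $\bx\in\supp_0\phi$. Since $j$ is $\phi$-linked to $1$, there is a chain $1 = j_0, j_1, \dots, j_s, j_{s+1} = j$ of pairwise distinct indices with each consecutive pair in $I(\phi)$ (the case $j=1$ being trivial). Along each step $f_{j_p j_{p+1}} = \z$, so $|x_{j_p} - x_{j_{p+1}}| < \varepsilon(2N)^{-1}$. The chain has at most $N-1$ steps, so by the triangle inequality $|x_1 - x_j| \le (N-1)\,\varepsilon(2N)^{-1} < \varepsilon/2$. This is the required estimate; note the choice of the factor $4N$ in \eqref{eq:pu} is exactly what makes the constant come out below $\varepsilon/2$.

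For the last assertion, suppose $|x_1| > \varepsilon$ and take $\hat\bx$ with $(x_1,\hat\bx)\in\supp_0\phi$. For any $j\in\SP^* = \SP\setminus\{1\}$ the previous step gives $|x_1 - x_j| < \varepsilon/2$, hence $|x_j| \ge |x_1| - |x_1-x_j| > \varepsilon - \varepsilon/2 = \varepsilon/2$. By the definition \eqref{eq:hattp} of $\widehat T_\SP(\varepsilon/2)$ this means $\hat\bx\in\widehat T_\SP(\varepsilon/2)$, so $\supp_0\phi(x_1,\,\cdot\,)\subset \widehat T_\SP(\varepsilon/2)$.

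The only genuinely delicate point is the first one: arguing that no index of $\SP^{\rm c}$ can be reached from an index of $\SP$ by an $I(\phi)$-chain, so that every ``crossing'' factor $f_{jk}$ with $j\in\SP$, $k\in\SP^{\rm c}$ must be a $\t$. This is really just the statement that $\SP$ is a full equivalence class of the $\phi$-linking relation, which is recorded right after the definition of $\SQ(\phi)$; everything else is triangle-inequality bookkeeping with the constants arranged so that chains of length up to $N-1$ still fit inside $\varepsilon/2$. Since the statement is quoted as adapted from \cite[Lemma 4.3(i)]{FHOS2004} and \cite[Lemmata 4.2, 4.3]{HearnSob2022}, I would also simply cite those for the parts that are verbatim and only spell out the constant-chasing.
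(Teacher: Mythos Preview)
Your proof is correct and follows exactly the approach one finds in the cited references \cite{FHOS2004} and \cite{HearnSob2022}: the first inclusion uses that $\SP$ is an equivalence class of the $\phi$-linking relation so every crossing factor must be a $\t$; the second claim is the chain-plus-triangle-inequality count giving $(N-1)\varepsilon(2N)^{-1}<\varepsilon/2$; and the third is an immediate reverse triangle inequality. The paper itself does not spell out a proof but simply cites those lemmata, so your argument is precisely what is being invoked.
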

 We do not use Proposition \ref{prop:supp} directly in this paper, but 
present it in order to demonstrate the relevance of the associated cluster $\SQ(\phi)$.

\subsection{Extended cut-offs} \label{subsect:extended} 
In our analysis the central role 
is played by another class of cut-off functions.   
These cut-offs are  
functions of $3N+3$ variables and they are defined as follows: for each  
$x, y\in\R^3, \hat\bx\in\R^{3N-3}$ let 
\begin{align}\label{eq:Phi}
\Phi(x, y, \hat\bx) = \prod_{2\le j\le N} g_{j}(x-x_j)
\prod_{2\le j\le N} h_{j}(y-x_j)
\prod_{2 \le k < l\le N} f_{kl}(x_k-x_l),
\end{align} 
 where each of the functions $g_j, h_j$ and $f_{jk} = f_{kj}$  is one of the cut-offs 
 $\t$ or $\z$ defined in \eqref{eq:pu}. We call such functions 
 \textit{extended} cut-offs. 
Each extended cut-off uniquely defines two admissible cut-offs:
\begin{align*}
\phi(x, \hat\bx) = &\ \prod_{2\le j\le N} g_{j}(x-x_j)
\prod_{2 \le k < l\le N} f_{kl}(x_k-x_l),\\
\mu(y, \hat\bx) = &\ 
\prod_{2\le j\le N} h_{j}(y-x_j)
\prod_{2 \le k < l\le N} f_{kl}(x_k-x_l),
\end{align*}
see definition \eqref{eq:canon}. 
We say that the pair $\phi$, $\mu$ and the extended 
cut-off $\Phi$ are associated to each other. 
We denote by $\SP = \SQ(\phi)$ and $\SfS = \SQ(\mu)$ 
the clusters associated with $\phi$ and $\mu$ respectively. 

We remind that the functions $\phi, \mu$ and $\Phi$ all
depend on the parameter $\varepsilon$. Thus whenever necessary 
we include $\varepsilon$ in the notation and write, for example, $\Phi(x, y, \hat\bx; \varepsilon)$. 
Note however that the clusters $\SP$ and $\SfS$ do not depend on $\varepsilon$.

 Below we list some useful 
properties 
of the extended cut-offs $\Phi$ and associated admissible $\phi$, $\mu$ adapted from 
\cite[Lemmata 4.6, 4.8]{HearnSob2022}.  

\begin{prop}\label{prop:empty}
If $\SP^*\cap \SfS$ is non-empty 
and $|x-y| > \varepsilon$, 
then $\Phi(x, y, \hat\bx; \varepsilon) = 0$ 
for all $\hat\bx\in\R^{3N-3}$. 
\end{prop}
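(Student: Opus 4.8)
The plan is to unwind the combinatorial definitions and show that a nonempty intersection $\SP^*\cap\SfS$ forces the support conditions on $\Phi$ to contain two contradictory chains of ``closeness'' constraints, which together bound $|x-y|$ by $\varepsilon$. First I would fix an index $j\in\SP^*\cap\SfS$; by definition $\SP^* = \SP\setminus\{1\}$, so $j\ge 2$, $j\in\SQ(\phi)$, and $j\in\SQ(\mu)$. Recall that $\SQ(\phi)$ is the equivalence class of index $1$ under $\phi$-linking, where linking is built from the pairs $(p,q)$ with $f_{pq} = \z$. For the extended cut-off $\Phi$ in \eqref{eq:Phi}, the admissible cut-off $\phi(x,\hat\bx)$ treats $x$ as ``particle $1$'': the factors $g_j(x-x_j)$ play the role of $f_{1j}$. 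So $j\in\SQ(\phi)$ means there is a $\z$-chain from the $x$-slot to $x_j$ through some intermediate particles $x_{j_1},\dots,x_{j_s}$, all of whose connecting factors (either a $g$ or an internal $f_{kl}$) equal $\z$. Similarly $j\in\SQ(\mu)$ gives a $\z$-chain from the $y$-slot to $x_j$ via the $h$-factors and internal $f_{kl}$'s.

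Next I would invoke the support property of the $\z$-cut-off from \eqref{eq:pu}: $\z_\varepsilon(z) = 0$ unless $|z| < \varepsilon(2N)^{-1}$, hence on $\supp_0\Phi$ every $\z$-factor appearing in $\Phi$ forces its argument to have modulus below $\varepsilon(2N)^{-1}$. Along the first chain this yields $|x - x_{j_1}|<\varepsilon(2N)^{-1}$, $|x_{j_p}-x_{j_{p+1}}|<\varepsilon(2N)^{-1}$, and $|x_{j_s}-x_j|<\varepsilon(2N)^{-1}$; summing at most $N$ such terms via the triangle inequality gives $|x - x_j| < N\cdot\varepsilon(2N)^{-1} = \varepsilon/2$. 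The analogous estimate along the $\mu$-chain gives $|y - x_j| < \varepsilon/2$. Adding these, $|x - y|\le |x-x_j| + |x_j - y| < \varepsilon$, contradicting the hypothesis $|x-y|>\varepsilon$. Therefore no $\hat\bx$ can lie in $\supp_0\Phi(x,y,\cdot\,;\varepsilon)$, i.e. $\Phi(x,y,\hat\bx;\varepsilon)=0$ for all $\hat\bx$, as claimed. (Cleaner still: one can quote the part of Proposition \ref{prop:supp} already stating that $j\in\SQ(\phi)$ implies $|x_1-x_j|<\varepsilon/2$ on $\supp_0\phi$, applied once with the role of particle $1$ played by $x$ and the admissible cut-off $\phi$, and once with $y$ and $\mu$; then the chain bookkeeping is not even needed.)

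The main obstacle is essentially bookkeeping rather than mathematics: one must be careful that the ``particle $1$'' slot in the abstract definition of $\SQ$ corresponds, for $\phi$, to the external variable $x$ (and for $\mu$ to $y$), so that $\SP^*$ — which excludes index $1$ — is exactly the set of genuine integration-variable labels $j\ge 2$ that are $\z$-linked to the $x$-slot, and likewise for $\SfS$. Once that identification is in place and one has the bound $|x-x_j|<\varepsilon/2$ for $j\in\SP^*$ on $\supp_0\phi$ (and its mirror for $\mu$), the conclusion is the one-line triangle-inequality argument above. Since the statement is quoted as adapted from \cite[Lemmata 4.6, 4.8]{HearnSob2022}, I expect the actual proof to simply cite those lemmata, with the chain argument suppressed.
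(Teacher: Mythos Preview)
Your argument is correct and is precisely the natural one: pick $j\in\SP^*\cap\SfS$, use the $\z$-chain from the $x$-slot to $x_j$ (coming from $j\in\SQ(\phi)$) to get $|x-x_j|<\varepsilon/2$ on $\supp_0\Phi$, the analogous $\mu$-chain to get $|y-x_j|<\varepsilon/2$, and conclude $|x-y|<\varepsilon$ by the triangle inequality. Your parenthetical shortcut via Proposition~\ref{prop:supp} is exactly the clean way to package the chain bookkeeping, and your identification of the ``particle $1$'' slot with $x$ (for $\phi$) and with $y$ (for $\mu$) is the only place where care is needed; you handle it correctly.

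As you anticipated, the paper does not supply a proof of this proposition at all: it is simply quoted as adapted from \cite[Lemmata~4.6,~4.8]{HearnSob2022}. Your write-up is therefore more detailed than what appears here, but it reproduces what the cited argument amounts to.
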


In the estimates further on we assume, as a rule, that $|x-y|>\varepsilon$, so due 
to Proposition \ref{prop:empty} from now on we may suppose that 
$\SP^*\subset\SfS^{\rm c}$ (which is equivalent to $\SfS^*\subset\SP^{\rm c}$). 
Under this condition we obtain 
the following information about the support of the extended cut-off $\Phi$:
%
%(see \cite[Lemma 4.8]{HearnSob2022}).
%

\begin{prop} 
If $\SP^*\subset \SfS^{\rm c}$, then for all $x, y\in \R^3$ we have 
\begin{align}\label{eq:phimux}
\supp_0 \Phi(x, y, \ \cdot\ ; \varepsilon)
\subset \widehat X_{\SP}\big(x, \varepsilon(4N)^{-1}\big)
\cap & \widehat X_{\SfS^*}\big(x, \varepsilon(4N)^{-1}\big)\notag\\
& \cap \widehat X_{\SfS}\big(y, \varepsilon(4N)^{-1}\big)\cap \widehat X_{\SP^*}\big(y, \varepsilon(4N)^{-1}\big).
\end{align}
If, in addition $|x|>\varepsilon, |y|>\varepsilon$, then 
\begin{align}\label{eq:phihat}
\supp_0\Phi(x, y, \ \cdot\ ; \varepsilon) \subset \widehat 
T_{\SP^*}(\varepsilon/2)\cap \widehat T_{\SfS^*}(\varepsilon/2).
\end{align}
\end{prop}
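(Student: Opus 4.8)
The plan is to deduce the support inclusions from the defining product formula \eqref{eq:Phi} and the elementary support property $\supp_0(fg) = \supp_0 f\cap\supp_0 g$, reducing everything to Proposition \ref{prop:supp} applied separately to the two associated admissible cut-offs $\phi(x,\hat\bx)$ and $\mu(y,\hat\bx)$. The first observation is that on $\supp_0\Phi$ both $\phi(x,\cdot)\not=0$ and $\mu(y,\cdot)\not=0$, since $\Phi(x,y,\hat\bx) = \phi(x,\hat\bx)\,\mu(y,\hat\bx)\cdot(\textup{extra }f_{kl}\textup{ factors, already present in both})$; more precisely, multiplying the two admissible cut-offs double-counts the $f_{kl}$ factors, but that is harmless for support purposes because $\supp_0(f_{kl}^2)=\supp_0 f_{kl}$. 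Hence $\supp_0\Phi(x,y,\cdot) \subset \supp_0\phi(x,\cdot)\cap\supp_0\mu(y,\cdot)$, and each factor will be controlled by Proposition \ref{prop:supp}.

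First I would treat $\supp_0\phi(x,\cdot)$. By Proposition \ref{prop:supp} with the cluster $\SP=\SQ(\phi)$ we get $\supp_0\phi\subset X_{\SP}(\varepsilon(4N)^{-1})$, which in the $\hat\bx$-variables reads $\supp_0\phi(x,\cdot)\subset\widehat X_{\SP}(x,\varepsilon(4N)^{-1})$ by the definition \eqref{eq:hatxp}. This gives the first of the four sets in \eqref{eq:phimux}. Symmetrically, from $\mu$ and its cluster $\SfS=\SQ(\mu)$ we obtain $\supp_0\mu(y,\cdot)\subset\widehat X_{\SfS}(y,\varepsilon(4N)^{-1})$, the third set. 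To get the remaining two sets $\widehat X_{\SfS^*}(x,\cdot)$ and $\widehat X_{\SP^*}(y,\cdot)$ I would use the hypothesis $\SP^*\subset\SfS^{\rm c}$ (equivalently $\SfS^*\subset\SP^{\rm c}$): since the indices in $\SfS^*$ lie in $\SP^{\rm c}$, the separation already guaranteed by $\widehat X_{\SP}(x,\cdot)$ — which separates $\SP$-particles from $\SP^{\rm c}$-particles near $x$ — in particular separates each $x_j$ with $j\in\SfS^*$ from $x$; using $X_{\SP}(\varepsilon)=X_{\SP^{\rm c}}(\varepsilon)$ and monotonicity in the cluster, one reads off $\widehat X_{\SP}(x,\cdot)\subset\widehat X_{\SfS^*}(x,\cdot)$. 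The analogous reasoning with the roles of $x,y$ and $\SP,\SfS$ swapped gives $\widehat X_{\SfS}(y,\cdot)\subset\widehat X_{\SP^*}(y,\cdot)$. Intersecting the four inclusions yields \eqref{eq:phimux}. Strictly, one should check the radius bookkeeping: all four sets appear with the same radius $\varepsilon(4N)^{-1}$, which matches the radius delivered by Proposition \ref{prop:supp}, so no loss occurs.

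For the second inclusion \eqref{eq:phihat}, I would again split $\supp_0\Phi(x,y,\cdot)\subset\supp_0\phi(x,\cdot)\cap\supp_0\mu(y,\cdot)$. Under the extra hypothesis $|x|>\varepsilon$, the last part of Proposition \ref{prop:supp} gives $\supp_0\phi(x,\cdot)\subset\widehat T_{\SP}(\varepsilon/2)$, and since $\widehat T_{\SP}(\varepsilon/2)=\widehat T_{\SP^*}(\varepsilon/2)$ (noted just after \eqref{eq:hattp}) this is $\widehat T_{\SP^*}(\varepsilon/2)$. Symmetrically $|y|>\varepsilon$ yields $\supp_0\mu(y,\cdot)\subset\widehat T_{\SfS^*}(\varepsilon/2)$. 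Intersecting gives \eqref{eq:phihat}.

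The only genuinely delicate point — the ``main obstacle'' — is the derivation of the two ``starred'' inclusions $\widehat X_{\SP}(x,\cdot)\subset\widehat X_{\SfS^*}(x,\cdot)$ and its mirror: one must be careful that $\SfS^*$ need not be contained in a single $\phi$-cluster, so the inclusion of separation sets has to be argued index-by-index from $\SfS^*\subset\SP^{\rm c}$ together with the symmetry $X_{\SP}=X_{\SP^{\rm c}}$, rather than from any monotonicity of $\SQ$ in the cut-off. Once that combinatorial bookkeeping over index pairs $(j,k)$ with $j\in\SfS^*$, $k\in\SP$ (or the mirror) is done cleanly, the rest is a routine assembly of Proposition \ref{prop:supp} and the elementary support identity. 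I expect this to mirror closely the argument of \cite[Lemmata 4.6, 4.8]{HearnSob2022}, from which the proposition is adapted.
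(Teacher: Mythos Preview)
Your overall strategy --- reduce to Proposition~\ref{prop:supp} applied to the two associated admissible cut-offs $\phi$ and $\mu$ via $\supp_0\Phi(x,y,\cdot)\subset\supp_0\phi(x,\cdot)\cap\supp_0\mu(y,\cdot)$ --- is exactly right, and your proof of \eqref{eq:phihat} is correct. The paper itself gives no proof here, deferring to \cite{HearnSob2022}, so there is nothing to compare against beyond the referenced argument, which proceeds along the same lines.

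However, there is a genuine error in your derivation of the ``starred'' inclusions in \eqref{eq:phimux}. You claim the set inclusion $\widehat X_\SP(x,\cdot)\subset\widehat X_{\SfS^*}(x,\cdot)$, justified by ``monotonicity in the cluster'', but no such monotonicity holds and the inclusion is false in general. Take $N=3$, $\SP=\{1\}$, $\SfS=\{1,2\}$ (so $\SP^*=\varnothing\subset\SfS^{\rm c}$). Then $\widehat X_\SP(x,\varepsilon)$ requires $|x-x_2|>\varepsilon$ and $|x-x_3|>\varepsilon$, while $\widehat X_{\SfS^*}(x,\varepsilon)$ also requires $|x_2-x_3|>\varepsilon$; the latter is not implied by the former. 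The point you miss is that $(\SfS^*)^{\rm c}=\{1\}\cup\SfS^{\rm c}$, so to place $(x,\hat\bx)$ in $X_{\SfS^*}$ you must separate each $x_j$, $j\in\SfS^*$, both from $x$ \emph{and} from every $x_k$ with $k\in\SfS^{\rm c}$, $k\ge 2$. The first separation does come from $\widehat X_\SP(x,\cdot)$ via $1\in\SP$, $\SfS^*\subset\SP^{\rm c}$; but the second comes from $\widehat X_\SfS(y,\cdot)$ (since $j,k\ge 2$ the first coordinate is irrelevant). In other words the correct statement is
\[
\widehat X_\SP\big(x,\varepsilon(4N)^{-1}\big)\cap\widehat X_\SfS\big(y,\varepsilon(4N)^{-1}\big)\subset\widehat X_{\SfS^*}\big(x,\varepsilon(4N)^{-1}\big),
\]
and symmetrically for $\widehat X_{\SP^*}(y,\cdot)$. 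With this correction your argument goes through; the index-by-index bookkeeping you flag at the end is precisely this, but it must draw on \emph{both} admissible cut-offs, not on $\phi$ alone.
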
  
 
To complete this section we need to make a remark on the cluster derivatives of the 
extended cut-offs. For any cluster $\SQ\subset\SR$ 
we denote by $\SD_{x, \SQ}^m\Phi(x, y, \hat\bx)$ 
the cluster derivative of $\Phi$ as a function of the variables $(x, \hat\bx)$, i.e. 
\begin{align}\label{eq:clustermul}
\SD_{x,\SQ}^m
%
%\SD_{\SQ^{(x)}}^m
%
\Phi(x, y, \hat\bx) = \SD_{\SQ}^m\tilde \Phi_y(x, \hat\bx),\quad \textup{where}\quad 
\tilde \Phi_y(x, \hat\bx) := \Phi(x, y, \hat\bx).    
 \end{align}
Similarly we define the derivative $\SD_{y,\SQ}^m\Phi(x, y, \hat\bx)$. 
It immediately follows from the definition of $\Phi(x, y, \hat\bx)$ that 
for any clusters $\SQ_1, \SQ_2$ and all $\bm = (m_1, m_2)\in\mathbb N_0^6$,
the bound holds:
\begin{align}\label{eq:phib}
|\SD_{x,\SQ_1}^{m_1} \SD_{y,\SQ_2}^{m_2}\Phi(x, y, \hat\bx; \varepsilon)| 
\lesssim 
\begin{cases}
1, \quad \textup{if}\quad |\bm|=0,\\[0.2cm]
\varepsilon^{-|\bm|} \CM_\varepsilon(x, y, \hat\bx),\quad \textup{if}\quad |\bm|\ge 1,
\end{cases}
\end{align}
where  
\begin{align}\label{eq:phibm}
\CM_\varepsilon(x, y, \hat\bx) = &\ \sum_{2\le j\le N} 
\xi\big(N\varepsilon^{-1}|x-x_j|\big)\notag \\
&\ + \sum_{2\le j\le N} \xi\big(N\varepsilon^{-1}|y-x_j|\big)
+ \sum_{2\le j < k\le N} \xi\big(N\varepsilon^{-1}|x_j-x_k|\big),
\end{align}
see \eqref{eq:xi} for the definition of the function $\xi$.

\section{Regularity estimates}\label{sect:reg}

\subsection{$\plainC1$-regularity for elliptic equations}
In what follows we rely on the well-known $\plainC1$-regularity bounds 
for solutions of second order elliptic equations on bounded domains.  
This type of regularity is discussed e.g. in \cite[Ch. 3]{LadUra1968} and  
\cite[Ch. 8]{GilTru2001}. To be precise, in \cite[Ch. 3]{LadUra1968} and  
\cite[Ch. 8]{GilTru2001} one can find bounds even 
in the space  $\plainC{1,\t}$ with a suitable $\t\in (0, 1)$, but  
we are not concerned with the H\"older properties.  
Moreover, in this paper we do not need the most general form of the equation.  
For our purposes it suffices to consider the equation 
\begin{align}\label{eq:mod}
\big(-\Delta + \ba(x)\cdot\nabla + b(x)\big) u = g, 
\end{align}
on an open ball $B\subset \R^d$, where 
all the coefficients are $\plainL\infty(B)$-functions. 
The proposition below provides some convenient $\max$-bounds for the 
weak solution and its first derivatives. Since the proof is quite short, we 
provide it for the sake of completeness. 
Note that the proof of \cite[Proposition A.2]{FS2021} 
uses similar ideas.  

\begin{prop}\label{prop:reg} 
Let $B_R = B(x_0, R)\subset \R^d$ for some $x_0\in\R^d$ and $R>0$. 
Suppose that $u\in \plainW{1, 2}(B_R)$ is a 
weak solution of the equation \eqref{eq:mod}, 
where $\ba, b, g\in\plainL\infty(B_R)$, and 
\begin{align*}
\|\ba\|_{\plainL\infty(B_R)} + \|b\|_{\plainL\infty(B_R)}
\le M, 
\end{align*}
with some constant $M>0$. 
Then for any $r \in (0, R)$ the function $u$ belongs 
to $\plainW{2, 2}(B_r)\cap\plainC1(\overline{B_r})$ and 
\begin{align}\label{eq:regc1}
\|u\|_{\plainC1(\overline{B_r})}\lesssim \|u\|_{\plainL2(B_R)}
+ \|g\|_{\plainL\infty(B_R)},
\end{align}
with an implicit constant that depends only on the constant 
$M$, dimension $d$ and the radii $r$ and $R$. 
\end{prop}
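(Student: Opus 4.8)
The plan is to bootstrap from the a priori $\plainW{1,2}$-membership to $\plainW{2,2}$ on slightly smaller balls, and then to apply the Sobolev embedding theorem to pass from $\plainW{2,2}$ to $\plainC1$ in the relevant range of dimensions. First I would fix a chain of intermediate radii $r < r_1 < r_2 < R$ and choose a cut-off $\chi\in\plainC\infty_0(B_{r_2})$ with $\chi\equiv 1$ on $B_{r_1}$ and with $\|\nabla\chi\|_\infty$, $\|\D\chi\|_\infty$ controlled by the gap $r_2-r_1$. Multiplying the equation by $\chi$, the function $v=\chi u$ is a weak solution of $-\D v = \chi g - \chi\,\ba\cdot\nabla u - \chi b u - 2\nabla\chi\cdot\nabla u - (\D\chi) u =: \tilde g$ in $B_{r_2}$, extended by zero outside. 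Since $u\in\plainW{1,2}(B_R)$ and $\ba,b,g\in\plainL\infty$, the right-hand side $\tilde g$ lies in $\plainL2(\R^d)$ with $\|\tilde g\|_{\plainL2}\lesssim \|u\|_{\plainW{1,2}(B_{r_2})} + \|g\|_{\plainL\infty(B_R)}$, the implicit constant depending on $M$, $d$ and the radii. Interior elliptic regularity for the Laplacian (e.g. \cite[Ch.~8]{GilTru2001}) then gives $v\in\plainW{2,2}(\R^d)$ with $\|v\|_{\plainW{2,2}}\lesssim \|\tilde g\|_{\plainL2}$, hence $u\in\plainW{2,2}(B_{r_1})$ with the corresponding bound.

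The second step is to remove the $\plainW{1,2}$-norm on the right in favour of the $\plainL2$-norm, i.e. to absorb the gradient term. For this I would use the Caccioppoli-type inequality: testing the equation against $\chi^2 u$ over $B_{r_2}$ and using the Cauchy–Schwarz and Young inequalities with a small parameter to absorb $\int \chi^2|\nabla u|^2$ into the left-hand side, one obtains $\|\nabla u\|_{\plainL2(B_{r_1})}\lesssim \|u\|_{\plainL2(B_{r_2})} + \|g\|_{\plainL\infty(B_R)}$, with the constant depending only on $M$, $d$ and the radii. Feeding this back into the $\plainW{2,2}$-estimate of the first step yields $\|u\|_{\plainW{2,2}(B_{r_1})}\lesssim \|u\|_{\plainL2(B_R)} + \|g\|_{\plainL\infty(B_R)}$.

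Finally, I would iterate once more if the dimension forces it. For $d\le 3$ the Sobolev embedding $\plainW{2,2}\hookrightarrow\plainC{0,\t}$ and $\plainW{2,2}\hookrightarrow\plainW{1,q}$ for large $q$ is already enough to control $\|u\|_{\plainC1}$ in terms of $\|u\|_{\plainW{3,2}}$ via $\plainW{3,2}\hookrightarrow\plainC1$; since in the application $d=3N$ can be large, I would instead bootstrap: having $u\in\plainW{2,2}(B_{r_1})$, the right-hand side $\tilde g$ (with a cut-off supported in $B_{r_1}$) now lies in $\plainW{1,2}$, so $u\in\plainW{3,2}(B_r)$, and one repeats $k$ times to reach $u\in\plainW{k+1,2}(B_r)$ for any $k$, with each step costing only a shrinking of the ball and a constant depending on $M,d$ and the radii. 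Choosing $k$ with $k+1 > d/2 + 1$ gives $\plainW{k+1,2}(B_r)\hookrightarrow\plainC1(\overline{B_r})$, and chaining the estimates produces \eqref{eq:regc1}. The main obstacle is purely bookkeeping: one must be careful that every intermediate constant depends \emph{only} on $M$, $d$ and the radii $r,R$ (and not, say, on higher norms of the coefficients, which are merely $\plainL\infty$), which is why the bootstrap is organised so that at each stage only $\plainL\infty$-bounds on $\ba,b$ and the already-established lower-order $\plainL2$-bounds on $u$ enter the new right-hand side; the number of iterations $k$ depends on $d$ but that is permitted.
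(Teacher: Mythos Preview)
Your first two steps are fine: the cut-off argument produces $u\in\plainW{2,2}(B_{r_1})$ with the right bound, and the Caccioppoli inequality absorbs the gradient term. The trouble is the bootstrap in the third step. You claim that once $u\in\plainW{2,2}(B_{r_1})$, the new right-hand side $\tilde g$ lies in $\plainW{1,2}$. But $\tilde g$ contains the terms $\chi g$, $\chi\,\ba\cdot\nabla u$ and $\chi b u$, and the hypotheses give only $g,\ba,b\in\plainL\infty$ with no derivatives whatsoever. So $\chi g$ need not belong to $\plainW{1,2}$, and even with $u\in\plainW{2,2}$ the product $\ba\cdot\nabla u$ is only $\plainL2$: to differentiate it you would need $\nabla\ba$, which does not exist. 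Thus the iteration $\plainW{k,2}\to\plainW{k+1,2}$ cannot get off the ground; this is not a bookkeeping issue but an actual obstruction, and your caveat about ``higher norms of the coefficients'' is precisely the point where the argument breaks.

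The paper avoids this by bootstrapping in the integrability exponent rather than in the smoothness index. With $\ba,b\in\plainL\infty$ one has, for every $p\in(1,\infty)$, the interior $\plainL{p}$-estimate
\[
\|u\|_{\plainW{2,p}(B_\nu)}\lesssim \|u\|_{\plainL{p}(B_\rho)} + \|g\|_{\plainL{p}(B_\rho)}
\]
(\cite[Theorem 9.11]{GilTru2001}), which needs no differentiability of the coefficients. Then Sobolev embedding gives $\plainW{2,p}\hookrightarrow\plainW{1,q}$ with $q=p(1+d^{-1})>p$ when $p\le d$, so starting from $p=2$ one climbs through $q_n=2(1+d^{-1})^n$ until $q_k>d$, at which point $\plainW{2,q_k}\hookrightarrow\plainC1$ finishes the job. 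The moral is: with merely bounded coefficients you can raise integrability for free, but you cannot raise differentiability.
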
 

\begin{proof} 
The inclusion $u\in \plainW{2, 2}(B_r)$ is a direct consequence of the interior regularity. 

In order to prove that the weak solution $u\in \plainW{1, 2}(B_R)$ has 
the $\plainC1$-regularity in $B_r$ we repeatedly apply the following 
elementary fact.

Assume that $u\in \plainW{1, p}(B_\rho)$ with some $p \in( 1, \infty)$ and $\rho \le R$. Then for any 
$\nu < \rho$  the following is true:
\begin{enumerate}
\item 
If $p\le d$, then $u\in \plainW{1, q}(B_\nu)$ with $q = p(1+d^{-1})$ and 
\begin{align}\label{eq:step}
\|u\|_{\plainW{1,q}(B_\nu)}\lesssim &\ \|u\|_{\plainL{p}(B_\rho)} + \|g\|_{\plainL{p}(B_\rho)}\notag\\
\lesssim &\ \|u\|_{\plainW{1,p}(B_\rho)} + \|g\|_{\plainL{\infty}(B_\rho)}. 
\end{align}
\item 
If $p > d$, then $u\in \plainC1(\overline{B_\nu})$ and 
\begin{align}\label{eq:regc11}
\|u\|_{\plainC1(\overline{B_\nu})}\lesssim &\ \|u\|_{\plainL{p}(B_\rho)} 
+ \|g\|_{\plainL{p}(B_\rho)}\notag\\
\lesssim &\ \|u\|_{\plainW{1,p}(B_\rho)} + \|g\|_{\plainL{\infty}(B_\rho)}.
\end{align}
\end{enumerate}

Indeed, under the assumption $u\in \plainW{1, p}(B_\rho)$, 
by the interior regularity we have $u\in\plainW{2, p}(B_\nu)$ and the standard bound 
holds (see, e.g. \cite[Theorem 9.11]{GilTru2001}):
\begin{align}\label{eq:gt}
\|u\|_{\plainW{2,p}(B_\nu)}\lesssim \|u\|_{\plainL{p}(B_\rho)} + \|g\|_{\plainL{p}(B_\rho)}.
\end{align}
If $p < d$, then we use the bounded embedding $\plainW{2,p}\subset \plainW{1, q}$,\ for all 
$q\in[p, p^*]$, $p^* = dp(d-p)^{-1}$. In particular, the value 
$q = p(1+d^{-1})$ belongs to the interval $[p, p^*]$, which proves  
\eqref{eq:step}. 
If $p = d$, then $\plainW{2,p}\subset \plainW{1, q}$ for all $q\in [p, \infty)$, and hence 
for $q = p(1+d^{-1})$ in particular. Hence \eqref{eq:step} holds again.

If $p >d$, then we use the embedding $\plainW{2,p}\subset \plainC{1}$, so that 
\eqref{eq:gt} leads to \eqref{eq:regc11}.

Let us proceed with the proof of \eqref{eq:regc1}. 
If $d = 1$, then the solution $u\in\plainW{1, 2}(R)$ satisfies \eqref{eq:regc11} with 
$p=2$, which immediately implies \eqref{eq:regc1}.
Suppose that $d\ge 2$, and define the sequence 
\begin{align*}
q_n = 2(1+d^{-1})^n, \ n = 0, 1, \dots. 
\end{align*}
Let $k\ge 1$ be the index such that $q_{k-1}\le d$ and $q_{k}>d$.  
Pick finitely many numbers $r_n>0$ such that 
\begin{align*}
r < r_k<r_{k-1}<\dots < r_1 < r_0  = R.
\end{align*} 
Since $u\in\plainW{1,2}(B_R)$ we can apply the bound \eqref{eq:step} with $\nu = r_1, \rho = R$ 
and $p = q_0 = 2$, $q = q_1$. Repeating this step successively for $\rho =r_n, \nu =r_{n+1}$ and 
$p = q_{n}$, $q = q_{n+1}$  
for all $n = 1, \dots, k$, we arrive at the bound 
\begin{align*}
\|u\|_{\plainW{1,q_{k}}(B_{r_k})} 
\lesssim &\ 
\|u\|_{\plainW{1, q_{k-1}}(B_{r_{k-1}})} + \|g\|_{\plainL{\infty}(B_{r_{k-1}})}
\lesssim \dots\\
\lesssim  &\ \|u\|_{\plainW{1, q_{\scalel{1}}}(B_{r_{\scalel{1}}})} 
+ \|g\|_{\plainL{\infty}(B_{r_{\scalel{1}}})}
\lesssim 
\|u\|_{\plainL{2}(B_R)} + \|g\|_{\plainL{\infty}(B_R)}.
\end{align*}
As $q_k > d$, we can now use \eqref{eq:regc11} which gives
%
%conclude that $u\in \plainC1(\overline{B_r})$ and, by \eqref{eq:regc11},  
\begin{align*}
\|u\|_{\plainC1(\overline{B_r})} 
\lesssim &\ 
\|u\|_{\plainW{1,q_{k}}(B_{r_k})} +  \|g\|_{\plainL{\infty}(B_{r_k})}\\
\lesssim &\ 
\|u\|_{\plainL{2}(B_R)} + \|g\|_{\plainL{\infty}(B_R)}.
 \end{align*}
This completes the proof of \eqref{eq:regc1}. 
\end{proof}

\subsection{Cluster derivatives}
The next result is tailored for later use with the multi-particle Schr\"odinger equation. 
We assume that in the equation \eqref{eq:mod} $d = 3N$ and 
the variable $\bx$ is given by $\bx = (x_1, x_2, \dots, x_N)$.  

Now we obtain better regularity properties of the weak solution of \eqref{eq:mod} 
assuming some additional smoothness of the coefficients with respect to cluster 
derivatives. 
Precisely, consider a weak solution $u$ of the equation \eqref{eq:mod} in the ball 
$B(\bx_0, R\ell)$ with some $\bx_0\in\R^{3N}$, $R >0$, $\ell\in (0, 1]$. 
Suppose that for a cluster set $\BSP = \{\SP_1, \SP_2, \dots, \SP_M\}$ 
%
%and some fixed $\bx_0\in\R^{3N}$,$R>0$, $\ell\in (0, 1]$,  
%
the coefficients $\ba$ and $b$ 
%
%for all $\ell\in (0, 1]$ 
%
satisfy the bounds 
\begin{align}\label{eq:cd}
|\SD_{\BSP}^{\bm}\ba(\bx)| + 
|\SD_{\BSP}^{\bm}b(\bx)|\lesssim \ell^{-|\bm|}, \quad \bx\in B(\bx_0, R\ell),
\end{align}
for all $\bm\in\mathbb N_0^{3M}$, with 
constants potentially depending on $\bm$, $R$ and $\bx_0$, but not on $\ell$. 
In the next theorem we obtain bounds for the cluster derivatives $\SD_{\BSP}^{\bm} u$ 
with explicit dependence on the parameter $\ell\in (0, 1]$.

\begin{thm}\label{thm:reg2} 
%Let $\ell\le 1$.
%
Assume the conditions \eqref{eq:cd}, and let $u$ be a weak solution of the equation 
\eqref{eq:mod} in $B(\bx_0, R\ell)$ 
 with the right-hand side $g = 0$.  
Then for all $\bm\in\mathbb N_0^{3M}$ and all $r<R$ 
the cluster derivatives $\SD_{\BSP}^{\bm} u$ belong to 
$\plainC1\big(\overline{B(\bx_0, r\ell)}\big)$. 
Furthermore, if $|\bm|+k\ge 1$, where $k = 0, 1$, then for all $\nu\in (r, R)$ we have 
\begin{align}\label{eq:reg2}
\|\nabla^k \SD_{\BSP}^{\bm} u\|_{\plainL\infty(B(\bx_0, r\ell))}
\lesssim \ell^{1-|\bm|-k}\, \big(\ell\|u\|_{\plainL\infty(B(\bx_0, \nu\ell))}
+ \|\nabla u\|_{\plainL\infty(B(\bx_0, \nu\ell))}\big).
\end{align}
If $|\bm|\ge 2$, then also  
\begin{align}\label{eq:d2ell}
\|\SD_{\BSP}^{\bm} & u\|_{\plainL\infty(B(\bx_0, r\ell))}\notag\\
\lesssim &\ \ell^{2-|\bm|}\,\big(
\max_{\bl: |\bl|=2} \|\SD_{\BSP}^{\bl} u\|_{\plainL\infty(B(\bx_0, \nu \ell))}
+ \|\nabla u\|_{\plainL\infty(B(\bx_0, \nu\ell))}
+  \|u\|_{\plainL\infty(B(\bx_0, \nu\ell))}\big).
\end{align}
The implicit constants in \eqref{eq:reg2} and \eqref{eq:d2ell} may depend 
%
%only 
%
on the constants $r, \nu, R$, order $\bm$ of the derivative, cluster set $\BSP$, and the 
constants in \eqref{eq:cd}. In particular, 
if the constants in \eqref{eq:cd} are independent of $\bx_0$, then 
so are the constants in \eqref{eq:reg2} and \eqref{eq:d2ell}.
\end{thm}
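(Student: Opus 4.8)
The plan is to prove Theorem~\ref{thm:reg2} by induction on $|\bm|$, bootstrapping from the $\plainC1$-regularity already established in Proposition~\ref{prop:reg}. The key observation is that cluster derivatives commute with $-\Delta$: since $\SD_{\SP}^m$ is a constant-coefficient differential operator (a monomial in the vector fields $\sum_{k\in\SP}\partial_{x_k'}$ etc.), we have $[\SD_{\BSP}^{\bm}, -\Delta] = 0$. Consequently, if $u$ solves $(-\Delta + \ba\cdot\nabla + b)u = 0$, then formally $v := \SD_{\BSP}^{\bm} u$ satisfies an equation of the same type, namely
\begin{align*}
(-\Delta + \ba\cdot\nabla + b) v = -\sum_{\bzero \neq \bl \le \bm} \binom{\bm}{\bl}\big((\SD_{\BSP}^{\bl}\ba)\cdot\nabla + \SD_{\BSP}^{\bl}b\big)\SD_{\BSP}^{\bm-\bl} u =: \tilde g,
\end{align*}
by the Leibniz rule for cluster derivatives (noted available in the excerpt). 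The right-hand side $\tilde g$ involves only lower-order cluster derivatives of $u$ (together with at most one ordinary gradient applied to such a lower-order term, since $\SD_{\BSP}^{\bm-\bl}$ and $\nabla$ are different operators — more precisely $\nabla \SD_{\BSP}^{\bm-\bl} u$), so the scheme is genuinely inductive in $|\bm|$.

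First I would make the induction rigorous at the level of difference quotients rather than derivatives, to justify that $v \in \plainW{1,2}$ locally and solves the stated equation weakly — this is the standard Nirenberg translation-method argument, but here one translates only in the ``diagonal'' cluster directions, which is exactly what $\SD_{\BSP}$ differentiates along. Once $v$ is known to be a weak $\plainW{1,2}$ solution of $(-\Delta + \ba\cdot\nabla + b)v = \tilde g$ on a slightly smaller ball, I apply Proposition~\ref{prop:reg} on the chain of shrinking balls $B(\bx_0, \nu'\ell) \supset \dots \supset B(\bx_0, r\ell)$ to get $v \in \plainC1$ together with the bound $\|v\|_{\plainC1(\overline{B(\bx_0,r\ell)})} \lesssim \|v\|_{\plainL2(B(\bx_0,\nu'\ell))} + \|\tilde g\|_{\plainL\infty(B(\bx_0,\nu'\ell))}$. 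The coefficient bounds \eqref{eq:cd} give $\|\tilde g\|_{\plainL\infty} \lesssim \ell^{-1}\sum_{|\bl|\ge 1}\ell^{|\bl|-|\bm|}\big(\|\SD_{\BSP}^{\bm-\bl}u\|_{\plainL\infty} + \|\nabla\SD_{\BSP}^{\bm-\bl}u\|_{\plainL\infty}\big)$ on the relevant ball, and by the inductive hypothesis each term on the right is controlled by $\ell^{1-|\bm-\bl|-k}(\ell\|u\|_\infty + \|\nabla u\|_\infty)$ with $k\in\{0,1\}$; tracking powers of $\ell$ shows every contribution carries exactly $\ell^{-|\bm|}(\ell\|u\|_\infty + \|\nabla u\|_\infty)$, which after multiplying by the overall $\ell$ from Proposition~\ref{prop:reg}'s structure yields \eqref{eq:reg2}. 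The only care needed is that $\plainL2$-norms on $B(\bx_0,\rho\ell)$ are bounded by $\plainL\infty$-norms times $(\rho\ell)^{3N/2}$, a harmless volume factor absorbed into the implicit constant since $\ell \le 1$; one must keep the ball radii all proportional to $\ell$ so the scaling is clean.

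For the refined bound \eqref{eq:d2ell} with $|\bm|\ge 2$, I would single out from $\tilde g$ the ``top'' terms where $\bl$ is the full multi-index minus a length-$2$ piece, writing $\tilde g$ so that it is controlled by $\max_{|\bl|=2}\|\SD_{\BSP}^{\bl}u\|_\infty$, $\|\nabla u\|_\infty$ and $\|u\|_\infty$ directly — this is really a matter of reorganizing which intermediate bound one feeds into which slot of the induction, using \eqref{eq:reg2} for the genuinely lower-order factors and stopping the recursion at order~$2$. The main obstacle I anticipate is purely bookkeeping: ensuring that the radii $r < r_1 < \dots < \nu < R$ (all scaled by $\ell$) can be chosen once and for all so that each application of Proposition~\ref{prop:reg} and each step of the $|\bm|$-induction loses only a controlled amount of domain, and that the $\ell$-powers match exactly rather than merely up to $\ell^{\pm 1}$ — the latter would be fatal since $\ell$ can be small. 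A secondary technical point is justifying the difference-quotient argument in the presence of only $\plainL\infty$ (not continuous) coefficients, but this is exactly the situation handled by the Nirenberg method and the interior-regularity bound \eqref{eq:gt} cited from \cite{GilTru2001}, so it goes through without essential change.
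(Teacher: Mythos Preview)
Your plan is the same as the paper's: induct on $|\bm|$, use the Leibniz rule to derive the equation $(-\Delta + \ba\cdot\nabla + b)\,\SD_{\BSP}^{\bm}u = \tilde g$ with $\tilde g$ built from lower-order cluster derivatives, and feed this back into Proposition~\ref{prop:reg}. The one point where your sketch goes wrong is the handling of the $\ell$-scaling. You apply Proposition~\ref{prop:reg} directly on balls of radius $\sim\ell$ and then invoke ``the overall $\ell$ from Proposition~\ref{prop:reg}'s structure''; but Proposition~\ref{prop:reg} as stated carries no explicit $\ell$, and its implicit constant \emph{does} depend on the radii $r,R$ --- so on $\ell$-balls that constant is $\ell$-dependent in an uncontrolled way, and your power-counting collapses. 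You flag this correctly as the main obstacle, but it is not ``purely bookkeeping''. The paper's fix is to rescale once at the outset: set $w(\by)=u(\bx_0+\ell\by)$, so that the equation becomes $-\Delta w + 2\ell\,\bal\cdot\nabla w + \ell^2\beta\, w = 0$ on the \emph{fixed} ball $B(\bzero,R)$, with uniformly bounded $\SD_{\BSP}^{\bm}\bal$, $\SD_{\BSP}^{\bm}\beta$. All applications of Proposition~\ref{prop:reg} then occur on unit-scale balls with $\ell$-independent constants, the inductive lemma reads $\|w_\bm\|_{\plainC1}\lesssim \ell^2\|w\|_{\plainL\infty}+\|\nabla w\|_{\plainL\infty}$, and the powers $\ell^{1-|\bm|-k}$ in \eqref{eq:reg2} fall out automatically when you scale back via $\nabla^k w_\bm(\by)=\ell^{|\bm|+k}(\nabla^k\SD_{\BSP}^{\bm}u)(\bx_0+\ell\by)$. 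A minor further simplification: the paper avoids the Nirenberg difference-quotient argument altogether, since Proposition~\ref{prop:reg} already gives $w_\bm\in\plainW{2,2}$ locally, whence $w_\bn\in\plainW{1,2}$ for $|\bn|=|\bm|+1$ and the weak-solution justification is a one-line integration by parts.
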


In the present paper we need the bound \eqref{eq:reg2} only.  
Nevertheless, the bound \eqref{eq:d2ell} may be of independent interest.
 
The idea of the proof follows a similar argument in \cite{FS2021}. Namely, we rescale the 
problem by introducing the function 
\begin{align*}
w(\by) = u(\bx_0 + \ell \by),\quad \by\in B(\bold0, R),
\end{align*}
and the coefficients 
\begin{align*}
\bal(\by) =  \ba(\bx_0+\ell \by),\quad 
\b(\by) = b(\bx_0+\ell \by).
\end{align*}
After the scaling the equation \eqref{eq:mod} takes the form
\begin{align}\label{eq:mod_scale}
-\Delta w + 2\ell\ \bal\cdot\nabla w
+  \ell^2 \beta w = 0.
\end{align}
Note that 
\begin{align}\label{eq:cd_scale}
|\SD_{\BSP}^{\bm}\bal(\by)| + 
|\SD_{\BSP}^{\bm}\b(\by)|\lesssim 1, \quad \by\in B(\bold0, R),
\end{align}
for all $\bm\in\mathbb N_0^{3M}$.

We are interested in the bounds for the function 
$w_\bm = \SD_{\BSP}^{\bm} w$ for $\by\in B(\bold0, R)$. 
For the sake of brevity throughout the proof we use the notation 
$B_r = B(\bold0, r)$ for $r >0$.

\begin{lem}
Let $\ell\le 1$.  Suppose that $w$ is a weak solution of the equation \eqref{eq:mod_scale}. 
Then for all $\bm\in\mathbb N_0^{3M}$ and all $r<R$ 
the cluster derivatives $w_\bm$ belong to 
$\plainC1\big(\overline{B_r})\big)$. 
Furthermore, if $|\bm|\ge 1$, then for all $\nu\in (r, R)$ we have  
\begin{align}\label{eq:reg_scale}
\|w_\bm\|_{\plainC1(\overline{B_{r}})}
\lesssim \ell^2\, \|w\|_{\plainL\infty(B_{\nu})} + \|\nabla w\|_{\plainL\infty(B_{\nu})}.
\end{align}
If $|\bm|\ge 2$, then also 
\begin{align}\label{eq:d2}
\|w_\bm\|_{\plainC1(\overline{B_r})}
\lesssim 
\max_{\bl: |\bl|=2} \|\SD_{\BSP}^{\bl} w\|_{\plainL\infty(B_\nu)}
+ \ell \|\nabla w\|_{\plainL\infty(B_\nu)}
+ \ell^2 \|w\|_{\plainL\infty(B_\nu)}.
\end{align}
\end{lem}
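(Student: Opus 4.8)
The plan is to prove the Lemma by induction on $|\bm|$, applying the $\plainC1$-regularity Proposition \ref{prop:reg} to equations satisfied by the cluster derivatives $w_\bm$. First I would observe that since the cluster derivative $\SD_\SP$ is a constant-coefficient first-order differential operator, it commutes with $\Delta$, and it satisfies the Leibniz rule with respect to products. Hence, differentiating the scaled equation \eqref{eq:mod_scale} by $\SD_{\BSP}^{\bm}$ and using \eqref{eq:cd_scale}, one sees that $w_\bm$ is a weak solution of an equation of the form
\begin{align*}
-\Delta w_\bm + 2\ell\,\bal\cdot\nabla w_\bm + \ell^2\b w_\bm = g_\bm,
\end{align*}
where the right-hand side $g_\bm$ is a finite linear combination (with coefficients bounded by \eqref{eq:cd_scale}) of terms of the form $\ell(\SD_{\BSP}^{\bp}\bal)\cdot\nabla w_{\bq}$ and $\ell^2(\SD_{\BSP}^{\bp}\b)\,w_{\bq}$ with $|\bp|+|\bq| = |\bm|$, $|\bq| < |\bm|$ (the terms with $|\bq| = |\bm|$ are kept on the left). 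Thus $\|g_\bm\|_{\plainL\infty(B_\rho)} \lesssim \ell\max_{|\bq|<|\bm|}\|\nabla w_\bq\|_{\plainL\infty(B_\rho)} + \ell^2\max_{|\bq|<|\bm|}\|w_\bq\|_{\plainL\infty(B_\rho)}$.

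The base case $|\bm|=0$ (or the first induction step) is handled by applying Proposition \ref{prop:reg} directly to $w$ itself, which is a weak solution of \eqref{eq:mod_scale} with zero right-hand side, giving $\|w\|_{\plainC1(\overline{B_{r'}})}\lesssim \|w\|_{\plainL2(B_\nu)}\lesssim \|w\|_{\plainL\infty(B_\nu)}$; I would also need the analogous bound with one extra derivative, which follows by the same argument since first partial derivatives of $w$ solve equations of the same type. For the inductive step, given the bound for all derivatives of order $<|\bm|$, I pick a chain of intermediate radii $r < r' < \nu$ and apply Proposition \ref{prop:reg} to $w_\bm$ on the ball $B_\nu$ with inner radius $r'$: this controls $\|w_\bm\|_{\plainC1(\overline{B_{r'}})}$ by $\|w_\bm\|_{\plainL2(B_\nu)} + \|g_\bm\|_{\plainL\infty(B_\nu)}$, and $\|w_\bm\|_{\plainL2(B_\nu)}$ is in turn estimated — again using the equation and interior regularity, or a direct integration-by-parts argument — by lower-order quantities, ultimately $\lesssim \ell^2\|w\|_{\plainL\infty(B_{\nu'})} + \|\nabla w\|_{\plainL\infty(B_{\nu'})}$ by the induction hypothesis. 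Tracking the powers of $\ell$ carefully: each term in $g_\bm$ carries at least one factor of $\ell$, and the induction hypothesis \eqref{eq:reg_scale} already has no negative powers of $\ell$, so the bound $\|w_\bm\|_{\plainC1}\lesssim \ell^2\|w\|_{\plainL\infty} + \|\nabla w\|_{\plainL\infty}$ propagates. The refined bound \eqref{eq:d2} for $|\bm|\ge 2$ is obtained by the same scheme but, in the very last step of the induction (passing from order $1$ to order $2$ and beyond), one does not discard the $\|\nabla w\|$ and $\|w\|$ terms but instead re-expresses $\nabla\SD_{\BSP}^{\bl}w$ for $|\bl|=1$ or $\Delta$-type second derivatives appearing in $g_\bm$ in terms of the $\SD_{\BSP}^{\bl}w$ with $|\bl|=2$; this is where shrinking radii and the Leibniz rule must be combined most delicately.

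The main obstacle I anticipate is the bookkeeping of the radii together with the powers of $\ell$: each application of Proposition \ref{prop:reg} requires strictly shrinking the ball, so one must fix at the outset a finite decreasing sequence of radii $r < \cdots < \nu$ adapted to the (fixed, finite) order $|\bm|$, and then verify that the constants — which depend on this sequence, on $\BSP$, and on the bounds in \eqref{eq:cd_scale} — remain uniform in $\ell \in (0,1]$ and, crucially, in $\bx_0$ when \eqref{eq:cd} is uniform in $\bx_0$. The uniformity in $\ell$ is the whole point of the rescaling: on $B(\bold0,R)$ all the rescaled coefficient bounds \eqref{eq:cd_scale} are $\ell$-free, so Proposition \ref{prop:reg}'s constants are $\ell$-free, and the only $\ell$-dependence left is the explicit factors multiplying $\bal$ and $\b$ in \eqref{eq:mod_scale}. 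Once the Lemma is established in this scaled form, Theorem \ref{thm:reg2} follows by undoing the substitution $w(\by) = u(\bx_0+\ell\by)$, which converts $\SD_{\BSP}^{\bm}$ and $\nabla$ on $w$ into $\ell^{|\bm|}\SD_{\BSP}^{\bm}$ and $\ell\nabla$ on $u$, producing the claimed powers $\ell^{1-|\bm|-k}$ in \eqref{eq:reg2} and $\ell^{2-|\bm|}$ in \eqref{eq:d2ell}.
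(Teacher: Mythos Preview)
Your approach is essentially the paper's: induction on $|\bm|$, applying Proposition \ref{prop:reg} to the equation for $w_\bm$ with right-hand side $g_\bm$ controlled by the induction hypothesis. The only place you are vaguer than necessary is the bound on $\|w_\bm\|_{\plainL2(B_\nu)}$ in the induction step --- rather than ``the equation and interior regularity, or a direct integration-by-parts argument'', the paper simply uses the pointwise inequality $|w_\bm| \lesssim |\nabla w_{\bm'}|$ (since $w_\bm$ is a directional derivative of some $w_{\bm'}$ with $|\bm'|=|\bm|-1$) together with the $\plainC1$ bound on $w_{\bm'}$ from the induction hypothesis; the same device, with the induction base placed at $|\bm|=2$, yields \eqref{eq:d2} without any need to ``re-express $\Delta$-type second derivatives''.
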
 

\begin{proof} 
We begin the proof with a formal manipulation assuming 
that all the cluster derivatives $w_\bm$ exist and are as smooth as necessary. 
Applying the operator $\SD_{\BSP}^{\bm}$ to the equation \eqref{eq:mod_scale} 
and using \eqref{eq:cd_scale} we obtain the following equation for the function 
$w_\bm$:
\begin{align}\label{eq:cluster}
-\Delta w_\bm + 2\ell\ \bal\cdot\nabla w_\bm  + \ell^2 \b\, w_\bm = g_\bm,
\end{align}
with 
\begin{align*}
g_\bm = -2 \ell \sum\limits_
{\substack{\bold0\le \bq\le \bm\\
|\bq|\le |\bm|-1}} {\bm\choose \bq}\big(\SD_{\BSP}^{\bm-\bq} \bal\big)\cdot \nabla w_{\bq}
-\ell^2 
\sum\limits_
{\substack{\bold0\le \bq\le \bm\\
|\bq|\le |\bm|-1}} 
{\bm\choose \bq}\big(\SD_{\BSP}^{\bm-\bq} \b\big)\, w_{\bq}.
\end{align*}
%
%
%
%\begin{align*}
%g_\bm = -2 \ell \sum\limits_
%{\substack{\bold0\le \bq\le \bm\\
%|\bq|\ge 1}} {\bm\choose \bq}\SD_{\BSP}^{\bq} \bal\cdot \nabla w_{\bm-\bq}
%-\ell^2 
%%\sum\limits_
%{\substack{\bold0\le \bq\le \bm\\
%|\bq|\ge 1}} 
%{\bm\choose \bq}\SD_{\BSP}^{\bq} \b\, w_{\bm-\bq}.
%\end{align*}
%
It follows from \eqref{eq:cd_scale} that 
\begin{align}\label{eq:prq1}
\|g_\bm\|_{\plainL\infty(B_\rho)}
\lesssim  \sum_{\bq:0\le |\bq|\le |\bm|-1} T_{\bq}(\ell, \rho), \quad 
T_{\bq}(\ell, \rho) = \ell\|\nabla w_\bq\|_{\plainL\infty(B_\rho)} 
+ \ell^2 \|w_\bq\|_{\plainL\infty(B_\rho)},
\end{align}
for all $\rho < R$. 

Now we need to justify the above formal calculations. First note that 
since $w$ is a weak solution of \eqref{eq:mod_scale}, by Proposition \ref{prop:reg} it belongs to 
$\plainW{2, 2}(B_\rho)\cap\plainC1(\overline{B_\rho})$ for all $\rho <R$, so that 
$T_\bold0(\ell, \rho) = 
\ell\|\nabla w\|_{\plainL\infty(B_\rho)} + \ell^2 \|w\|_{\plainL\infty(B_\rho)}$ is finite, 
and as a result, for $|\bm| = 1$ we have 
\begin{align}\label{eq:prq}
\|g_{\bm}\|_{\plainL\infty(B_\rho)}\lesssim 
T_\bold0(\ell, \rho) = 
\ell\|\nabla w\|_{\plainL\infty(B_{\rho})} + \ell^2\, \|w\|_{\plainL\infty(B_{\rho})},\quad 
0<\rho <R.
\end{align} 
By Proposition \ref{prop:reg}, for all $\bm$ such that $|\bm|=1$ and all $r < \rho$, we conclude that 
$w_{\bm} \in \plainW{2, 2}(B_r)\cap\plainC1(\overline{B_r})$ and 
\begin{align*}
\|w_\bm\|_{\plainC1(\overline{B_{r}})}
\lesssim &\ \|w_\bm\|_{\plainL\infty(B_\rho)} + \|g_\bm\|_{\plainL\infty(B_\rho)}\\
\lesssim &\ \|\nabla w\|_{\plainL\infty(B_\rho)} + \|g_\bm\|_{\plainL\infty(B_\rho)}
\lesssim  \|\nabla w\|_{\plainL\infty(B_{\rho})} + \ell^2\, \|w\|_{\plainL\infty(B_{\rho})}.
\end{align*}
This proves \eqref{eq:reg_scale} for $|\bm|=1$.  

Using the above observation as the induction base, 
%Starting from this observation 
we now prove 
%
%by induction 
%
that $w_\bm$ is indeed well-defined for all $\bm\in\mathbb N_0^{3M}$
 and that it 
is a weak solution of 
the equation \eqref{eq:cluster}. 
%
%This is trivially true for 
%$w = w_{\bold0}$ and $g_{\bold0} = 0$. 
%
To provide the induction step 
assume that for some $k = 1, \dots,$ and all $\bm, 1\le |\bm| \le k$, 
the function $w_\bm$ is a weak solution 
of \eqref{eq:cluster} in $B_\rho$ 
for all $\rho < R$ (and hence belongs to 
$\plainW{2, 2}(B_r)\cap\plainC1(\overline{B_r}), r < \rho,$ by Proposition \ref{prop:reg}), 
and that it satisfies 
\eqref{eq:reg_scale}. 
Let us prove that the same is true for
the function $w_\bn$ for all $\bn$ such that $|\bn|=k+1$.

First note that for all $\rho <\nu < R$ 
the function $g_{\bn}$ satisfies the bound 
\begin{align}\label{eq:prq2}
\|g_{\bn}\|_{\plainL\infty(B_\rho)}\lesssim 
\ell\|\nabla w\|_{\plainL\infty(B_{\nu})} + \ell^2\, \|w\|_{\plainL\infty(B_{\nu})}.
\end{align}  
Indeed, in the bound 
\eqref{eq:prq1} 
for $T_{\bold0}(\ell, \rho)$ we use the estimate \eqref{eq:prq}.  
For $T_\bq(\ell, \rho)$ with $1\le |\bq|\le k$ we use 
\eqref{eq:reg_scale} to obtain
the estimate
%\eqref{eq:reg_scale} implies that 
\begin{align*}
T_\bq(\ell, \rho) 
\lesssim  \ell \|w_\bq\|_{\plainC1(\overline{B_{\rho}})} 
%
%\|\nabla w_\bq\|_{\plainC1(\overline{B_{\rho}})} 
%
\lesssim \ell\,\|\nabla w\|_{\plainL\infty(B_{\nu})} 
+ \ell^3\, \|w\|_{\plainL\infty(B_{\nu})}. 
\end{align*}
In view of \eqref{eq:prq1}, this  
gives \eqref{eq:prq2} for $g_\bn$, as required. 
Furthermore, as $w_\bm\in \plainW{2, 2}(B_\rho)$ for all 
$|\bm|\le k$, we have  
$w_{\bn}\in \plainW{1, 2}(B_\rho)$. Now,  
 integrating the equation \eqref{eq:mod_scale} 
 against the function $(-1)^{k+1}\SD_{\BSP}^{\bn} \eta$, 
 with an arbitrary $\eta\in\plainC\infty_0(B_\rho)$ we obtain 
 \begin{align*}
0 = (-1)^{k+1}\int \nabla w\cdot \nabla\ \SD_{\BSP}^{\bn} \eta\, d\by
 + (-1)^{k+1} 2\ell &\ \int \bal \cdot \nabla w \,\SD_{\BSP}^{\bn} \eta\, d\by\\ 
 + &\ (-1)^{k+1} \ell^2\int \b w\, \SD_{\BSP}^{\bn} \eta\, d\by
  \end{align*}
 \begin{align*}
= \int \nabla w_\bn\cdot \nabla\ \eta\, d\by
 +   2\ell &\ \int \bal\cdot \nabla w_\bn \, \eta\, d\by\\ 
 + &\ \ell^2\int \b w_\bn  \eta\, d\by 
 -  \int g_{\bn}\, \eta\, d\by, 
 \end{align*}
and hence $w_{\bn}$ is a weak solution of \eqref{eq:cluster} in $B_\rho$. Since the coefficients 
and the right-hand side 
of \eqref{eq:cluster} 
are bounded uniformly in $\ell$, by  Proposition \ref{prop:reg}, 
$w_\bn\in\plainW{2, 2}(B_r)\cap\plainC1(\overline{B_r})$ for all $r<\rho$ and 
\begin{align}\label{eq:n2}
\|w_{\bn}\|_{\plainC{1}(\overline{B_r})}
\lesssim &\ \|w_{\bn}\|_{\plainL\infty(B_\rho)}
+ \|g_{\bn}\|_{\plainL\infty(B_\rho)}\notag\\
\lesssim &\ \|w_{\bn}\|_{\plainL\infty(B_\rho)}
+ \ell\|\nabla w\|_{\plainL\infty(B_\nu)}
+ \ell^2\|w\|_{\plainL\infty(B_\nu)}, 
\end{align}
where we have also used \eqref{eq:prq2}. 
Let $\bm, |\bm|=k$, be such that $|\bn-\bm|=1$. Therefore, by \eqref{eq:reg_scale},  
\begin{align*}
\|w_\bn\|_{\plainL\infty(B_\rho)}\lesssim \|\nabla w_\bm\|_{\plainL\infty(B_\rho)}
\lesssim \|\nabla w\|_{\plainL\infty(B_\nu)} + \ell^2\, \|w\|_{\plainL\infty(B_\nu)}.
\end{align*}
Together with \eqref{eq:n2} this leads to 
\begin{align*}
\|w_{\bn}\|_{\plainC{1}(\overline{B_r})}
\lesssim \|\nabla w\|_{\plainL\infty(B_\nu)}
+ \ell^2\|w\|_{\plainL\infty(B_\nu)},
\end{align*}
as required. 
It remains to conclude that by induction, the bound \eqref{eq:reg_scale} 
holds for all $\bm\in\mathbb N_0^{3M}$.  
  
Proof of \eqref{eq:d2} is also conducted by induction.  
 Assume that $|\bm|=2$. From \eqref{eq:n2} we get that 
\begin{align}\label{eq:m2}
\|w_{\bm}\|_{\plainC{1}(\overline{B_r})}
\lesssim &\ 
\|w_{\bm}\|_{\plainL\infty(B_\rho)} + \ell\|\nabla w\|_{\plainL\infty(B_\nu)}
+ \ell^2\|w\|_{\plainL\infty(B_\nu)} \notag\\
\lesssim &\ 
\max_{\bl: |\bl|=2} \|\SD_{\BSP}^{\bl} w\|_{\plainL\infty(B_\nu)} + \ell \|\nabla w\|_{\plainL\infty(B_\nu)}
+ \ell^2 \|w\|_{\plainL2(B_\nu)}, 
 \end{align}
for all $r < \rho < \nu<R$, which gives \eqref{eq:d2}. 
The bound \eqref{eq:m2} serves as the induction base. Let us now provide the induction step. 
Suppose that \eqref{eq:d2} holds for all $\bm$ such that $2\le |\bm|\le k$ 
with some $k = 2, 3, \dots$. Let us prove that it holds for all $w_{\bn}$ where $|\bn|=k+1$. 
Let $\bm, |\bm|=k$, be such that $|\bn-\bm|=1$. 
Thus \eqref{eq:d2} for $w_\bm$ implies that  
\begin{align*}
\|w_\bn\|_{\plainL\infty(B_\rho)}\lesssim \|\nabla w_\bm\|_{\plainL\infty(B_\rho)} 
\lesssim 
\max_{\bl: |\bl|=2} \|\SD_{\BSP}^{\bl} w\|_{\plainL\infty(B_\nu)} + \ell \|\nabla w\|_{\plainL\infty(B_\nu)}
+ \ell^2 \|w\|_{\plainL2(B_\nu)}.
\end{align*}
Substituting this inequality in \eqref{eq:n2}, we obtain \eqref{eq:d2} for $w_\bn$. 

Consequently, \eqref{eq:d2} holds for all $\bm\in \mathbb N_0^{3M}$, $|\bm|\ge 2$. 
\end{proof}

We would like to point out one fact which was not needed in the above proof but deserves mentioning. 
The bound \eqref{eq:reg_scale} for $|\bm|=1$ and bound \eqref{eq:d2} for $|\bm|\ge 2$ 
imply \eqref{eq:reg_scale} for all $|\bm|\ge 2$. Indeed, 
by \eqref{eq:reg_scale} with $|\bq|=1$, we have  
\begin{align*}
 \max_{\bl: |\bl|=2} \|\SD_{\BSP}^{\bl} w\|_{\plainL\infty(B_\rho)}\lesssim \sum_{|\bq|=1}\|\nabla w_\bq\|_{\plainL\infty(B_\rho)}
 \lesssim \|\nabla w\|_{\plainL\infty(B_\nu)}
+ \ell^2 \|w\|_{\plainL\infty(B_\nu)},
\end{align*} 
for all $\rho < \nu<  R$. 
After substitution in \eqref{eq:d2} this gives \eqref{eq:reg_scale} for all $|\bm|\ge 2$, 
as claimed. 

\begin{proof}[Proof of Theorem \ref{thm:reg2}]
Since
\begin{align*}
 \nabla^k w_{\bm}(\by) = \ell^{|\bm|+k}\big(\nabla^k 
 \SD_{\BSP}^{\bm} u\big)(\bx_0+\ell \by), \ k = 0, 1,\quad \bm\in\mathbb N_0^{3M},
\end{align*}
the bound \eqref{eq:reg_scale} for $|\bm|\ge 1$ rewrites as \eqref{eq:reg2}. If $\bm = \bold0$ 
and $|k| = 1$, then \eqref{eq:reg2} is trivial. 

The bound \eqref{eq:d2ell} is obtained from \eqref{eq:d2} in the same way. 
\end{proof}

\section{The Schr\"odinger equation}\label{sect:schr}

\subsection{Reduction of the Schr\"odinger equation}
Here we apply the bounds obtained in the previous section to the Schr\"odinger equation
\begin{align}\label{eq:H}
%
%\CH_E\psi = 
%
-\Delta\psi + (V-E)\psi = 0,
\end{align}
where the potential $V$ is defined in \eqref{eq:potential}. 
Throughout this section we do not impose the condition $\psi\in\plainL2(\R^{3N})$, 
but consider local solutions of the equation \eqref{eq:H}. 
In order to reduce 
\eqref{eq:H} to the equation of the form \eqref{eq:mod} we use the  
representation $\psi = e^F \phi$ with a 
function $F$ such that 
\begin{align}\label{eq:Finf}
F, \nabla F\in \plainL\infty(\R^{3N}).
\end{align} 
In the mathematical physics literature the function $e^F$ is often called 
the \textit{Jastrow factor}, see e.g. \cite{FS2021}, \cite{HaKlKoTe2012}. 
After the substitution the equation \eqref{eq:H} rewrites as 
\begin{align}\label{eq:jastrow}
-\Delta \phi - 2\nabla F\cdot \nabla\phi + (V- \Delta F - |\nabla F|^2-E)\phi = 0. 
\end{align}
More precisely, if $\psi\in \plainW{1, 2}_{\tiny {\rm loc}}(\R^{3N})$ is a weak 
solution of \eqref{eq:H}, then we also have $\phi\in \plainW{1, 2}_{\tiny {\rm loc}}(\R^{3N})$ 
and $\phi$ is a weak solution of \eqref{eq:jastrow}. 
We make a standard choice of  the function $F$ which ensures that 
%
%in such a way that 
%
the factor 
$\phi$ has better smoothness properties than $\psi$: 
%We define the function $F$ as follows:  
\begin{align}\label{eq:F}
F(\bx) = \sum_{j=1}^N \bigg(-\frac{Z}{2}|x_j| 
+ &\ \frac{1}{4}\sum_{j<k\le N}|x_j-x_k|\bigg)  \notag\\
+ &\ \sum_{k=1}^N\bigg(
\frac{Z}{2}\sqrt{|x_j|^2+1}
- \frac{1}{4}\sum_{j < k\le N}
\sqrt{|x_j-x_k|^2 + 1}
\bigg). 
\end{align} 
In fact, the conventional choice of $F$ is given by the first sum 
on the right-hand side, which has a bounded gradient, 
whereas the second term is added to ensure that $F$ itself is bounded on $\R^{3N}$.  
Thus both conditions \eqref{eq:Finf} are satisfied. 
Due to the straightforward identities 
\begin{align*}
%\label{eq:lap}
\Delta |x_j-x_k| = \frac{4}{|x_j-x_k|},\quad \Delta |x_j| = \frac{2}{|x_j|},
\end{align*} 
we have 
\begin{align*}
\Delta\bigg[\sum_{j=1}^N \bigg(-\frac{Z}{2}|x_j| 
+ \frac{1}{4}\sum_{j<k\le N}|x_j-x_k|\bigg)\bigg] = V(\bx).
\end{align*}
Therefore 
\begin{align}\label{eq:z}
\nabla^k(-\Delta F +  V)\in \plainL\infty(\R^{3N}),\ \quad \textup{for all}
\quad k = 0, 1, \dots.
\end{align}
The following fact was proved in 
\cite[Proposition A.2]{FS2021}. We provide a somewhat different (shorter) 
proof that uses only Proposition 
\ref{prop:reg}. 

\begin{lem} \label{lem:grpsi}
Let $\psi\in \plainW{1, 2}_{\tiny {\rm loc}}(\R^{3N})$ be a weak solution of \eqref{eq:H}. 
Then $\psi \in \plainW{1,\infty}_{\tiny {\rm loc}}(\R^{3N})$ and for all numbers 
$r>0$ and $R>0$ such that $r < R$ we have  
\begin{align}\label{eq:grpsi}
\|\psi\|_{\plainL\infty(B(\bx_0, r))} + \|\nabla\psi\|_{\plainL\infty(B(\bx_0, r))} 
\lesssim \|\psi\|_{\plainL2(B(\bx_0, R))},
\end{align}
where the implicit constant does not depend on $\psi$ and $\bx_0\in \R^{3N}$, 
but depends on $r$ and $R$. 
\end{lem}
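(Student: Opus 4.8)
The plan is to reduce \eqref{eq:H} to the Jastrow-factorised equation \eqref{eq:jastrow} and then apply Proposition \ref{prop:reg} with $d = 3N$. First I would write $\psi = e^F\phi$ with $F$ as in \eqref{eq:F}. Since $F$ and $\nabla F$ are bounded on all of $\R^{3N}$ by \eqref{eq:Finf}, multiplication by $e^{\pm F}$ is a bounded, boundedly invertible operation on $\plainL2$ and on $\plainW{1,2}$ locally, and it preserves the property of being a weak solution; thus $\phi \in \plainW{1,2}_{\loc}(\R^{3N})$ is a weak solution of \eqref{eq:jastrow}. The point of this substitution is that, although the potential $V$ is singular, the combination $-\Delta F + V$ is bounded (indeed $\plainC\infty$) by the identities preceding \eqref{eq:z}, so all the coefficients of \eqref{eq:jastrow} — namely the first-order coefficient $-2\nabla F$ and the zeroth-order coefficient $V - \Delta F - |\nabla F|^2 - E$ — lie in $\plainL\infty(\R^{3N})$, with norms bounded by a constant depending only on $Z$, $N$, $E$.

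Next, fix $\bx_0 \in \R^{3N}$ and $r < R$, and choose an intermediate radius $r < R' < R$. Apply Proposition \ref{prop:reg} to the equation \eqref{eq:jastrow} on the ball $B(\bx_0, R')$ (here $g = 0$, so $\|g\|_{\plainL\infty} = 0$), with the uniform bound $M$ on the coefficients just described. This gives $\phi \in \plainW{2,2}(B(\bx_0, r)) \cap \plainC1(\overline{B(\bx_0,r)})$ together with
\begin{align*}
\|\phi\|_{\plainC1(\overline{B(\bx_0, r)})} \lesssim \|\phi\|_{\plainL2(B(\bx_0, R'))},
\end{align*}
with implicit constant depending only on $r$, $R'$, $d = 3N$ and $M$, hence independent of $\bx_0$ and of $\psi$.

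Finally I would translate this bound back to $\psi = e^F\phi$. On the left, $\psi = e^F\phi$ and $\nabla\psi = e^F(\nabla\phi + \phi\,\nabla F)$, so using $|F|, |\nabla F| \le C$ we get $\|\psi\|_{\plainL\infty(B(\bx_0,r))} + \|\nabla\psi\|_{\plainL\infty(B(\bx_0,r))} \lesssim \|\phi\|_{\plainC1(\overline{B(\bx_0,r)})}$. On the right, $\|\phi\|_{\plainL2(B(\bx_0,R'))} = \|e^{-F}\psi\|_{\plainL2(B(\bx_0,R'))} \le e^{C}\|\psi\|_{\plainL2(B(\bx_0,R'))} \le e^C\|\psi\|_{\plainL2(B(\bx_0,R))}$. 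Chaining these gives \eqref{eq:grpsi}, and in particular $\psi \in \plainW{1,\infty}_{\loc}(\R^{3N})$.

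There is really no serious obstacle here: the substance of the argument is packaged in Proposition \ref{prop:reg} and in the algebraic identity that makes $-\Delta F + V$ bounded. The only points requiring a little care are the routine verification that the weak-solution property is preserved under multiplication by the smooth, bounded, nonvanishing factor $e^{F}$ (test against $e^{F}\eta$ and integrate by parts, noting $\eta \in \plainC\infty_0$ implies $e^F\eta \in \plainW{1,2}$ with compact support), and keeping track that all implicit constants depend only on $r$, $R$, $N$, $Z$, $E$ and not on $\bx_0$ or $\psi$ — which is automatic since the coefficient bound $M$ is translation-independent.
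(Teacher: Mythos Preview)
Your argument is correct and follows essentially the same route as the paper: pass to $\phi=e^{-F}\psi$, observe that the coefficients of \eqref{eq:jastrow} are uniformly bounded on $\R^{3N}$ by \eqref{eq:Finf} and \eqref{eq:z}, apply Proposition~\ref{prop:reg} with $g=0$, and convert the resulting $\plainC1$-bound for $\phi$ back to $\psi$ using the boundedness of $F$ and $\nabla F$. The introduction of the intermediate radius $R'$ is harmless but unnecessary---Proposition~\ref{prop:reg} already allows any pair $r<R$.
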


\begin{proof} 
The function $\phi = e^{-F}\psi$ is a weak solution of the equation \eqref{eq:jastrow}  
in $B(\bx_0, R)$ with arbitrary $\bx_0\in\R^{3N}$ and $R>0$. As established above, the coefficients 
in this equation are uniformly bounded on $\R^{3N}$. According to 
Proposition \ref{prop:reg}, $\phi\in \plainC1(\R^{3N})$ and for any pair of radii 
$r, R$, $r <R$, we have the bound 
\begin{align}\label{eq:grphi}
\|\phi\|_{\plainL\infty(B_r)} 
+ \|\nabla\phi\|_{\plainL\infty(B_r)}\lesssim \|\phi\|_{\plainL2(B_R)},
\end{align}
where $B_R = B(\bx_0, R)$, and the implicit constant is independent of $\bx_0$. 
Since $F$ and $\nabla F$ satisfy \eqref{eq:Finf}, we can write for $\psi$ that 
\begin{align*}
\|\psi\|_{\plainL\infty(B_r)}\lesssim\|\phi\|_{\plainL\infty(B_r)}, 
\quad 
\|\phi&\|_{\plainL2(B_R)}\lesssim\|\psi\|_{\plainL2(B_R)} \quad \textup{and}\\[0.2cm]
\|\nabla\psi\|_{\plainL\infty(B_r)}
\lesssim \|\nabla\phi\|_{\plainL\infty(B_r)} & + \|\phi\|_{\plainL\infty(B_r)}. 
\end{align*}
Therefore \eqref{eq:grphi} implies \eqref{eq:grpsi}, as required. 
\end{proof}

\subsection{Cluster derivatives of $\psi$: application of Theorem \ref{thm:reg2}}
For a cluster $\SP\subset \{1, 2, \dots, N\}$ introduce the set
\begin{align*}
\Sigma_{\SP} = \bigg\{\bx\in\R^{3N}: \prod_{j\in\SP} |x_j| 
\prod_{k\in\SP, l\in \SP^{\rm c}}|x_k-x_l| = 0\bigg\},
\end{align*}
and the distance
\begin{align}\label{eq:dq}
\dc_{\SP}(\bx) = \min\{|x_j|, \frac{1}{\sqrt 2}
|x_j-x_k|: j\in \SP, \ k\in\SP^{\rm c}\},
\quad \l_{\SP}(\bx) = \min\{1, \dc_{\SP}(\bx)\}.
\end{align}
The paper \cite{FS2021} contains bounds for cluster derivatives $\SD_\SP^m\psi$
(see e.g. \cite[Proposition 1.10]{FS2021}) outside the set $\Sigma_\SP$, depending explicitly 
on the distance $\dc_\SP$. 
We need a generalization of this result to cluster sets 
$\BSP = \{\SP_1, \SP_2, \dots , \SP_M\}$. Define
\begin{align*}
\Sigma_{\BSP} = \cup_{j=1}^M \Sigma_{\SP_j}
\end{align*}
and 
\begin{align}\label{eq:dclusterset}
\dc_{\BSP}(\bx) = \min_j \dc_{\SP_j}(\bx), \quad \l_{\BSP}(\bx) = \min\{1, \dc_{\BSP}(\bx)\}.
\end{align}
The presence of the factor $1/\sqrt2$ in the definition \eqref{eq:dq} is convenient since 
it ensures that the function $\dc_{\BSP}$ is Lipschitz with Lipschitz constant $=1$: 
 
\begin{lem} \label{lem:lip}
For all $\bx, \by\in\R^{3N}$ we have 
\begin{align}\label{eq:lip}
|\dc_{\BSP}(\bx) - \dc_{\BSP}(\by)|\le |\bx-\by|. 
\end{align}
The same inequality holds for the function $\l_{\BSP}(\bx)$.
\end{lem}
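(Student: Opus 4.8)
The plan is to reduce the statement to a pointwise Lipschitz estimate for each of the elementary building blocks of $\dc_{\BSP}$ and then invoke the standard fact that a pointwise infimum of $1$-Lipschitz functions is $1$-Lipschitz. First I would observe that $\dc_{\BSP}(\bx) = \min_j \dc_{\SP_j}(\bx)$, and each $\dc_{\SP_j}(\bx)$ is itself a minimum over $j\in\SP_j$ of the functions $\bx\mapsto |x_j|$ and over pairs $j\in\SP_j$, $k\in\SP_j^{\rm c}$ of the functions $\bx\mapsto \frac{1}{\sqrt2}|x_j-x_k|$. Hence $\dc_{\BSP}$ is the minimum of finitely many functions, each of one of these two forms, so it suffices to check that each such function is $1$-Lipschitz on $\R^{3N}$ with respect to the Euclidean norm $|\cdot|$ on $\R^{3N}$.

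The two verifications are elementary. For $f(\bx) = |x_j|$ we have $|f(\bx)-f(\by)| = \big||x_j|-|y_j|\big|\le |x_j-y_j|\le |\bx-\by|$, since $|x_j-y_j|^2$ is one of the summands in $|\bx-\by|^2 = \sum_{k=1}^N |x_k-y_k|^2$. For $g(\bx) = \frac{1}{\sqrt2}|x_j-x_k|$ with $j\ne k$ we have, by the triangle inequality, $|g(\bx)-g(\by)| \le \frac{1}{\sqrt2}\big|(x_j-x_k)-(y_j-y_k)\big| \le \frac{1}{\sqrt2}\big(|x_j-y_j| + |x_k-y_k|\big) \le \frac{1}{\sqrt2}\cdot\sqrt2\,\big(|x_j-y_j|^2 + |x_k-y_k|^2\big)^{1/2}\le |\bx-\by|$, where the middle step is Cauchy--Schwarz applied to the two-term sum; this is precisely where the normalising factor $1/\sqrt2$ is used. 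Then, since for any finite family of $1$-Lipschitz functions $\{g_\alpha\}$ one has $|\min_\alpha g_\alpha(\bx) - \min_\alpha g_\alpha(\by)|\le \max_\alpha |g_\alpha(\bx)-g_\alpha(\by)|\le |\bx-\by|$, the bound \eqref{eq:lip} follows.

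For the last sentence, $\l_{\BSP}(\bx) = \min\{1,\dc_{\BSP}(\bx)\}$ is the minimum of the constant function $1$ (which is trivially $1$-Lipschitz, indeed $0$-Lipschitz) and the $1$-Lipschitz function $\dc_{\BSP}$, so the same minimum-of-Lipschitz-functions argument gives $|\l_{\BSP}(\bx)-\l_{\BSP}(\by)|\le|\bx-\by|$. There is essentially no obstacle here: the only point requiring the slightest care is getting the constant exactly equal to $1$ rather than something like $\sqrt2$, which is exactly what the $1/\sqrt2$ in \eqref{eq:dq} is designed to handle, as the computation above shows. I would write the proof in two or three lines, stating the reduction to the two elementary Lipschitz bounds, performing those two bounds, and citing the stability of the Lipschitz constant under finite minima.
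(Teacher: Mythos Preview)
Your proof is correct and follows essentially the same approach as the paper: both reduce to the elementary Lipschitz bounds for $\bx\mapsto|x_j|$ and $\bx\mapsto\frac{1}{\sqrt2}|x_j-x_k|$, using the Cauchy--Schwarz (equivalently, the $a+b\le\sqrt2\sqrt{a^2+b^2}$) inequality for the latter, and then take the minimum. The paper phrases the final step as the one-sided inequality $\dc_\SP(\bx)\le\dc_\SP(\by)+|\bx-\by|$ obtained directly from the triangle inequalities, while you invoke the stability of the Lipschitz constant under finite minima, but these are the same argument.
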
 

\begin{proof} It suffices to prove the inequality just for one cluster $\SP$. 
Let $j\in \SP$, $k\in \SP^{\rm c}$, and estimate:
\begin{align*}
|x_j|\le &\ |y_j| + |x_j-y_j|\le |y_j|+ |\bx-\by|,\\
|x_j-x_k|\le &\ |y_j-y_k| + |y_j-x_j|+|y_k-x_k|\le |y_j-y_k|+ \sqrt2\, |\bx-\by|.
\end{align*} 
For the last inequality we used the elementary fact that 
if $a^2+b^2\le c^2$ for some positive $a, b$ and $c$, then $a+b\le \sqrt2\,c$. 
Taking the minimum \eqref{eq:dq}, we get 
$\dc_\SP(\bx)\le \dc_{\SP}(\by) + |\bx-\by|$, which  gives \eqref{eq:lip}. 
The function $\l_{\BSP}$ trivially satisfies the same inequality. 
\end{proof}
  
%Let $\BSP = \{\SP_1, \SP_2, \dots, \SP_M\}$ be a cluster set. 
%
Our objective is to find $\plainL\infty$-bounds for the 
cluster derivatives $\SD_{\BSP}^{\bm}\psi$ 
outside the set $\Sigma_{\BSP}$. Let us fix 
$\bx_0\in \R^{3N}\setminus\Sigma_{\BSP}$, and denote 
\begin{align*}
\l := \l_{\BSP}(\bx_0). 
\end{align*}  
Observe that for any $R\in (0, 1)$ the inclusion holds:  
$B(\bx_0, R\l)\subset \R^{3N}\setminus \Sigma_{\BSP}$.  
Indeed, by Lemma \ref{lem:lip},  
\begin{align*}
|\l_{\BSP}(\bx) - \l|\le |\bx-\bx_0|<R\l,\quad \bx\in B(\bx_0, R\l),
\end{align*}
which implies that 
\begin{align}\label{eq:lip2}
0 <(1-R)\l \le \l_{\BSP}(\bx)\le (1+R)\l,\quad \textup{for all}\quad \bx\in B(\bx_0, R\l),
\end{align} 
and hence proves the claim. 
In the next theorem we estimate the derivatives $\SD_{\BSP}^\bm\psi$ in the ball $B(\bx_0, R\l)$.

\begin{thm}\label{thm:clusterpsi}
Let $\bx_0\in \R^{3N}\setminus\Sigma_{\BSP}$
and let $\l = \l_{\BSP}(\bx_0)$. 
Then for any $r, R$ such that $0 < r < R< 1$, and for all 
$\bm\in\mathbb N_0^{3M}$, 
the cluster derivatives $\SD_{\BSP}^\bm \psi$ belong to $\plainC1\big(\overline{B(\bx_0, R\l)}\big)$. 
Moreover, if $|\bm|+k\ge 1$ with $ k = 0, 1$, then  
\begin{align}\label{eq:clusterpsi}
\|\SD_{\BSP}^{\bm}\nabla^k\psi\|_{\plainL{\infty}(B_{}(\bx_0, r\l))}
\lesssim \l^{1-|\bm|- k} 
\big(\|\psi\|_{\plainL{\infty}(B_{}(\bx_0, R\l))} 
+ \|\nabla\psi\|_{\plainL{\infty}(B_{}(\bx_0, R\l))}\big),
\end{align}
with an implicit constant depending on $r, R$, but independent of 
$\bx_0\in\R^{3N}\setminus\Sigma_{\BSP}$. 
\end{thm}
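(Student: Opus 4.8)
The plan is to deduce Theorem~\ref{thm:clusterpsi} from Theorem~\ref{thm:reg2} by passing to the Jastrow factor, and then to carry the estimates back to $\psi$. First I would write $\psi=e^F\phi$ with $F$ as in \eqref{eq:F}; then $\phi\in\plainW{1,2}_{\loc}(\R^{3N})$ is a weak solution of the Jastrow equation \eqref{eq:jastrow}, which is of the form \eqref{eq:mod} with $\ba=-2\nabla F$, $b=V-\Delta F-|\nabla F|^2-E$ and $g=0$. Fixing an auxiliary radius $R'\in(R,1)$, the ball $B(\bx_0,R'\l)$ is contained in $\R^{3N}\setminus\Sigma_{\BSP}$ by \eqref{eq:lip2}, so $\phi$ is in particular a weak solution of \eqref{eq:jastrow} there.

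The core of the argument is the verification of the hypothesis \eqref{eq:cd} for $\ba$ and $b$ on $B(\bx_0,R'\l)$ with $\ell=\l$. Here I would use that, for $\bm\neq\bzero$, the operator $\SD_{\BSP}^{\bm}$ kills any function of $x_j$ alone when $j$ belongs to none of the clusters of $\BSP$, and any function of a difference $x_j-x_k$ when every cluster of $\BSP$ contains both or neither of $j$ and $k$. Consequently, applying $\SD_{\BSP}^{\bm}$ to $\nabla F$ can only generate negative powers of $|x_j|$ with $j$ in some cluster, or of $|x_j-x_k|$ with $j,k$ separated by some cluster (the $\sqrt{\,\cdot\,}$-regularized part of $F$ contributes only uniformly bounded terms). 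By \eqref{eq:dq}--\eqref{eq:dclusterset} these denominators are $\ge\sqrt2\,\dc_{\BSP}(\bx)\ge\sqrt2\,\l_{\BSP}(\bx)$, which by \eqref{eq:lip2} is bounded below by a positive multiple of $\l$ on $B(\bx_0,R'\l)$; and each application of $\SD_{\BSP}$ lowers the degree of homogeneity of these factors by at most one. This gives $|\SD_{\BSP}^{\bm}\nabla F|\lesssim\l^{-|\bm|}$, hence $|\SD_{\BSP}^{\bm}\ba|\lesssim\l^{-|\bm|}$ and, by the Leibniz rule, $|\SD_{\BSP}^{\bm}(|\nabla F|^2)|\lesssim\l^{-|\bm|}$. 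For the rest of $b$ I would invoke \eqref{eq:z}, namely $\nabla^k(V-\Delta F)\in\plainL\infty(\R^{3N})$ for all $k$, which yields $|\SD_{\BSP}^{\bm}(V-\Delta F)|\lesssim1\le\l^{-|\bm|}$ since $\l\le1$. Thus \eqref{eq:cd} holds.

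With this in place I would apply Theorem~\ref{thm:reg2} to $\phi$, with $\ell=\l$ and with the theorem's $R$ taken to be $R'$. It provides $\SD_{\BSP}^{\bm}\phi\in\plainC1\big(\overline{B(\bx_0,R\l)}\big)$, and for $|\bm|+k\ge1$ and any $r<\nu<R$ the bound
\begin{align*}
\|\nabla^k\SD_{\BSP}^{\bm}\phi\|_{\plainL\infty(B(\bx_0,r\l))}
\lesssim\l^{1-|\bm|-k}\big(\l\,\|\phi\|_{\plainL\infty(B(\bx_0,\nu\l))}+\|\nabla\phi\|_{\plainL\infty(B(\bx_0,\nu\l))}\big).
\end{align*}

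Finally I would undo the substitution $\psi=e^F\phi$. Since $\SD_{\BSP}$ and $\nabla$ commute, the Leibniz rule expands $\SD_{\BSP}^{\bm}\nabla^k\psi$ into a finite sum of terms $\big(\SD_{\BSP}^{\bm-\bq}\nabla^{k-j}e^F\big)\big(\SD_{\BSP}^{\bq}\nabla^{j}\phi\big)$ with $\bq\le\bm$, $j\le k$. Using $|e^F|\lesssim1$ and $|\nabla F|\lesssim1$ from \eqref{eq:Finf}, the cluster-derivative bounds for $F$ obtained above, and $\l\le1$, one checks that $|\SD_{\BSP}^{\bm-\bq}\nabla^{k-j}e^F|\lesssim\l^{1-|\bm-\bq|-(k-j)}$ whenever $|\bm-\bq|+k-j\ge1$ (the term with $\bm-\bq=\bzero$, $k-j=0$ being just $e^F$). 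Combining these with the displayed bound for $\SD_{\BSP}^{\bq}\nabla^{j}\phi$ — and, for the single term with $(\bq,j)=(\bzero,0)$, with $\|\phi\|_{\plainL\infty}$ — makes every term $\lesssim\l^{1-|\bm|-k}\big(\|\phi\|_{\plainL\infty(B(\bx_0,\nu\l))}+\|\nabla\phi\|_{\plainL\infty(B(\bx_0,\nu\l))}\big)$ on $B(\bx_0,r\l)$. Since $|\phi|\lesssim|\psi|$ and $|\nabla\phi|\lesssim|\psi|+|\nabla\psi|$ by \eqref{eq:Finf}, and $B(\bx_0,\nu\l)\subset B(\bx_0,R\l)$, this is \eqref{eq:clusterpsi}; the $\plainC1$-membership of $\SD_{\BSP}^{\bm}\psi$ follows from that of $\SD_{\BSP}^{\bm}\phi$ via the same expansion. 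I expect the main obstacle to be the second paragraph: identifying exactly which singular denominators survive $\SD_{\BSP}$ and confirming their uniform lower bound $\gtrsim\l$ on $B(\bx_0,R'\l)$; once that is settled, the remaining steps are routine bookkeeping with the Leibniz rule.
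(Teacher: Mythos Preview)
Your proposal is correct and follows essentially the same route as the paper: reduce to $\phi=e^{-F}\psi$, verify the coefficient condition \eqref{eq:cd} on a ball $B(\bx_0,R'\l)$ via the cluster-derivative bounds for $\nabla F$ (this is the content of Lemma~\ref{lem:diff} and the bounds \eqref{eq:nabla}, \eqref{eq:ef} in the paper), apply Theorem~\ref{thm:reg2} to $\phi$, and pass back to $\psi$ by Leibniz. The only cosmetic difference is that you handle $k=0,1$ simultaneously through a double Leibniz expansion in both $\SD_{\BSP}$ and $\nabla$, whereas the paper treats $k=0$ explicitly and says the case $k=1$ is analogous; also, for the $|x_j|$-type denominators the lower bound is $\ge\dc_{\BSP}(\bx)$ rather than $\ge\sqrt2\,\dc_{\BSP}(\bx)$, but this is immaterial.
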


For a single cluster $\SP$  
the bound \eqref{eq:clusterpsi}, even for arbitrary 
$\plainL{p}$-norms with $p\in (1, \infty]$ 
was found in \cite[Proposition 1.10]{FS2021}.
The proof in \cite{FS2021} does not immediately generalize 
to arbitrary cluster sets. 
As in \cite{FS2021} we use the reduction to the equation \eqref{eq:jastrow}. 
In \cite{FS2021} the choice of the Jastrow factor depended on the cluster $\SP$, 
whereas we use the standard function \eqref{eq:F} independent of the cluster set $\BSP$.  
To check that the coefficients in the equation \eqref{eq:jastrow} satisfy 
the conditions of Theorem \ref{thm:reg2}, we start with the following elementary observation. 

\begin{lem}\label{lem:diff}
Let $g \in\plainC\infty(\R^3\setminus \{0\})$ be such that 
\begin{align*}
|\p^m g(x)|\lesssim |x|^{-|m|}, \quad \textup{for all}\quad m\in \mathbb N_0^3. 
\end{align*}  
Then for any cluster set $\BSP$ and any $j, k = 1, 2, \dots, N$, we have 
\begin{align*}
|\SD_{\BSP}^\bm g(x_j)| + |\SD_{\BSP}^\bm g(x_j-x_k)|\lesssim \dc_{\BSP}(\bx)^{-|\bm|},
\end{align*}
for all $\bx\in \R^{3N}\setminus\Sigma_{\BSP}$.
\end{lem}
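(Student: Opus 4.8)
The plan is to reduce the cluster derivative $\SD_{\BSP}^{\bm}$, acting on a function of a single particle coordinate $x_j$ or of a single difference $x_j-x_k$, to an ordinary partial derivative of the same function (or to zero); once this reduction is in place, the estimate is immediate from the hypothesis on $g$ together with the definition \eqref{eq:dq} of $\dc_{\BSP}$.

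First I would record the behaviour of a single cluster derivative $\SD_\SP^m$, $m\in\mathbb N_0^3$, on a composite $h(x_j)$ or $h(x_j-x_k)$, where $h\in\plainC\infty(\R^3\setminus\{0\})$. By \eqref{eq:clusterder} and the chain rule, the only terms of $\sum_{l\in\SP}\p_{x_l'}$ that can act nontrivially on such a function are those with $l=j$ or $l=k$; for $h(x_j-x_k)$ these two terms carry opposite signs and therefore cancel whenever $j$ and $k$ both lie in $\SP$ (or both in $\SP^{\rm c}$). Iterating over the three coordinate directions gives, for $m\ne0$,
\begin{align*}
\SD_\SP^m\big(h(x_j)\big)=
\begin{cases}(\p^m h)(x_j),& j\in\SP,\\ 0,& j\notin\SP,\end{cases}
\qquad
\SD_\SP^m\big(h(x_j-x_k)\big)=
\begin{cases}\pm(\p^m h)(x_j-x_k),& \text{exactly one of }j,k\text{ is in }\SP,\\ 0,& \text{otherwise}.\end{cases}
\end{align*}
The key point is that $(\p^m h)(x_j)$ is again a function of $x_j$ alone and $(\p^m h)(x_j-x_k)$ a function of $x_j-x_k$ alone, so these formulas may be applied repeatedly.

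Next I would compose the $M$ factors of $\SD_{\BSP}^{\bm}=\SD_{\SP_1}^{m_1}\cdots\SD_{\SP_M}^{m_M}$, applying the previous step once for each factor (the factors with $m_i=0$ act trivially). This shows that $\SD_{\BSP}^{\bm}\big(g(x_j)\big)$ vanishes unless $j\in\SP_i$ for every $i$ with $m_i\ne0$, in which case it equals $(\p^{m_1+\cdots+m_M}g)(x_j)$; and $\SD_{\BSP}^{\bm}\big(g(x_j-x_k)\big)$ vanishes unless, for every $i$ with $m_i\ne0$, exactly one of $j,k$ lies in $\SP_i$, in which case it equals $\pm(\p^{m_1+\cdots+m_M}g)(x_j-x_k)$. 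Since $|m_1+\cdots+m_M|=|\bm|$ and $g$ obeys $|\p^m g(z)|\lesssim|z|^{-|m|}$, the non-vanishing cases yield $|\SD_{\BSP}^{\bm}g(x_j)|\lesssim|x_j|^{-|\bm|}$ and $|\SD_{\BSP}^{\bm}g(x_j-x_k)|\lesssim|x_j-x_k|^{-|\bm|}$, while for $\bm=0$ the claim is just the hypothesis. It then remains to compare with $\dc_{\BSP}(\bx)$: when $\bm\ne0$ and $\SD_{\BSP}^{\bm}g(x_j)\ne0$, choosing $i$ with $m_i\ne0$ we have $j\in\SP_i$, hence $\dc_{\BSP}(\bx)\le\dc_{\SP_i}(\bx)\le|x_j|$ by \eqref{eq:dq} (in particular $x_j\ne0$, since $\bx\notin\Sigma_{\BSP}$ forces $\bx\notin\Sigma_{\SP_i}$); similarly, when $\SD_{\BSP}^{\bm}g(x_j-x_k)\ne0$, for such an $i$ exactly one of $j,k$ lies in $\SP_i$, so $\dc_{\BSP}(\bx)\le\dc_{\SP_i}(\bx)\le\frac{1}{\sqrt 2}|x_j-x_k|$. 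Raising these inequalities to the power $-|\bm|$ and combining with the estimates above gives the assertion. I do not expect a substantial obstacle here; the one step that must be carried out with care is the iteration in this last paragraph, where one has to keep track of the fact that each partially differentiated function still depends on $x_j$ alone (resp.\ on $x_j-x_k$ alone), which is precisely what allows the single-derivative formulas to be reused.
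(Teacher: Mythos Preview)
Your proof is correct and follows the same approach as the paper: reduce the cluster derivative acting on $g(x_j)$ or $g(x_j-x_k)$ to an ordinary partial derivative of $g$ (or to zero), then invoke the hypothesis on $g$ together with the definition of $\dc_{\BSP}$. The paper is terser---it treats a single cluster explicitly and leaves the multi-cluster case implicit---whereas you spell out the iteration over the $M$ factors of $\SD_{\BSP}^{\bm}$ and keep careful track of when the result vanishes, which is exactly the bookkeeping the paper's ``for simplicity'' omission calls for.
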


\begin{proof} 
For simplicity we prove the lemma for one cluster, which is denoted $\SP$. 
Assume that $m\in\mathbb N_0^3$, $|m|\ge 1$. 
It is clear that $D_\SP^m g(x_j-x_k) = 0$ if $j, k\in\SP$ or $j, k\in \SP^{\rm c}$. 
If $j\in \SP, k\in \SP^{\rm c}$ or  $k\in \SP, j\in \SP^{\rm c}$, then 
\begin{align*}
|\SD_\SP^m g(x_j-x_k)|
= |\p_x^m g(x)|_{x = x_j-x_k}\lesssim  |x_j-x_k|^{-|m|}\lesssim \dc_\SP(\bx)^{-|m|},  
\end{align*} 
as claimed. In the same way we get the required estimate for $\SD_{\SP}^m g(x_j)$.
\end{proof}

\begin{cor}
Let $F$ be the function \eqref{eq:F}. Then for any cluster set $\BSP$ we have 
for all $\bx\notin \Sigma_{\BSP}$ and all $\bm$ that 
\begin{align}\label{eq:nabla}
\big|\SD_{\BSP}^\bm \nabla F(\bx)\big| 
+ \big|\SD_{\BSP}^\bm |\nabla F(\bx)|^2\big|\lesssim \dc_{\BSP}(\bx)^{-|\bm|}.
\end{align}
If $|\bm|\ge 1$, then  
\begin{align}\label{eq:ef}
\big|\SD_{\BSP}^\bm e^{F(\bx)}\big| \lesssim  \l_{\BSP}(\bx)^{1-|\bm|}.
\end{align} 
\end{cor}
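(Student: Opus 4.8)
The plan is to establish \eqref{eq:nabla} first, then derive \eqref{eq:ef} from it by induction on $|\bm|$. For \eqref{eq:nabla}, observe that $\nabla F$ is a sum of terms of the form $x_j/|x_j|$, $\sqrt{|x_j|^2+1}$-type gradients, $(x_j-x_k)/|x_j-x_k|$, and the bounded smooth pieces coming from the $\sqrt{\,\cdot\,+1}$ corrections. The functions $g(x) = x/|x|$ and $g(x) = x/\sqrt{|x|^2+1}$ (componentwise) satisfy the hypothesis $|\p^m g(x)|\lesssim |x|^{-|m|}$ of Lemma \ref{lem:diff} — indeed the second is even smooth and bounded together with all derivatives, so it trivially satisfies the bound. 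Applying Lemma \ref{lem:diff} termwise, and using the Leibniz rule together with $\dc_{\BSP}(\bx)\le \dc_{\BSP}(\bx)$ being at most any $|x_j|$ or $\tfrac1{\sqrt2}|x_j-x_k|$ relevant to a nonzero cluster derivative, gives $|\SD_{\BSP}^\bm\nabla F(\bx)|\lesssim \dc_{\BSP}(\bx)^{-|\bm|}$. For $|\nabla F|^2$, which is a finite sum of products of two components of $\nabla F$, the Leibniz rule yields $|\SD_{\BSP}^\bm |\nabla F|^2|\lesssim \sum_{\bq\le\bm}\binom{\bm}{\bq}\dc_{\BSP}^{-|\bq|}\dc_{\BSP}^{-|\bm-\bq|} = C\,\dc_{\BSP}^{-|\bm|}$, using $\dc_{\BSP}\le 1$ is \emph{not} needed here since the exponents already add up correctly.

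For \eqref{eq:ef}, I would argue by induction on $|\bm|\ge 1$. For $|\bm|=1$, $\SD_{\BSP}^\bm e^F = (\SD_{\BSP}^\bm F)\,e^F$, and $\SD_{\BSP}^\bm F = \SD_{\BSP}^\bm$ applied to the sum defining $F$; the contributions from the $|x_j|$ and $|x_j-x_k|$ terms satisfy $|\SD_{\BSP}^\bm |x_j|| = |(\nabla|x|)(x_j)\cdot(\text{unit})|\le 1$ (bounded, since the first derivative of $|x|$ is bounded), and similarly for $|x_j-x_k|$; the $\sqrt{\,\cdot\,+1}$ terms have all derivatives bounded. Hence $|\SD_{\BSP}^\bm F|\lesssim 1$, and since $e^F$ is bounded by \eqref{eq:Finf}, we get $|\SD_{\BSP}^\bm e^F|\lesssim 1 = \l_{\BSP}(\bx)^{1-1}$. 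For the inductive step with $|\bm|=n\ge 2$, write $\bm = \bn + \be_i$ with $|\bn| = n-1$ and apply $\SD_{\BSP}$ in one coordinate to $\SD_{\BSP}^\bn e^F = (\text{stuff})\,e^F$; more systematically, a Faà di Bruno / Leibniz expansion gives $\SD_{\BSP}^\bm e^F$ as a sum over partitions of products $\prod_s (\SD_{\BSP}^{\bm_s} F)$ times $e^F$, where $\sum_s \bm_s = \bm$ and each $|\bm_s|\ge 1$. Now $|\SD_{\BSP}^{\bm_s} F|\lesssim 1$ if $|\bm_s|=1$ (as above) and $|\SD_{\BSP}^{\bm_s} F|\lesssim \dc_{\BSP}^{1-|\bm_s|}\le \l_{\BSP}^{1-|\bm_s|}$ if $|\bm_s|\ge 2$ — the latter because $\SD_{\BSP}^{\bm_s}$ of $|x_j|$ equals $\p_x^{\bm_s}|x|$ evaluated at $x_j$, which is $\lesssim |x_j|^{1-|\bm_s|}$, and likewise for $|x_j-x_k|$, with the $\tfrac1{\sqrt2}$ in \eqref{eq:dq} absorbing the difference. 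Using $\l_{\BSP}\le 1$, each factor is $\lesssim \l_{\BSP}^{1-|\bm_s|}$, so the product over a partition into $k$ blocks is $\lesssim \l_{\BSP}^{\,k-|\bm|}\le \l_{\BSP}^{\,1-|\bm|}$ since $k\ge 1$ and $\l_{\BSP}\le 1$. Summing over the finitely many partitions gives \eqref{eq:ef}.

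The main obstacle — really the only subtlety — is the bookkeeping in the step $|\SD_{\BSP}^{\bm_s}|x_j-x_k|| \lesssim |x_j-x_k|^{1-|\bm_s|} \lesssim \dc_{\BSP}(\bx)^{1-|\bm_s|}$: one must check that a single cluster derivative of order $\bm_s$ in the $(x,x,x)$-block structure of \eqref{eq:clusterder} acting on $|x_j-x_k|$ is genuinely controlled by $\p^{\bm_s}$ of the one-variable function $z\mapsto|z|$ evaluated at $z = x_j-x_k$, which requires $j,k$ to lie on opposite sides of \emph{every} cluster $\SP_i$ whose derivative actually acts nontrivially; if some cluster has $j,k$ on the same side that factor of the cluster-set derivative kills the term. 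Keeping track of which blocks $\bm_s$ contribute and confirming the uniform $\dc_{\BSP}$-bound holds in all surviving cases is the one place care is needed, but it is exactly parallel to Lemma \ref{lem:diff} and presents no real difficulty. Everything else is the Leibniz rule plus the boundedness of $F$ and $\nabla F$ from \eqref{eq:Finf}.
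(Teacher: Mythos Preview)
Your proof is correct and follows essentially the same approach as the paper: Lemma \ref{lem:diff} applied termwise for \eqref{eq:nabla}, and a Fa\`a di Bruno expansion for \eqref{eq:ef}. The paper writes the expansion as a sum of terms $(\SD_{\BSP}^{\bk_1}F)^{n_1}\cdots(\SD_{\BSP}^{\bk_s}F)^{n_s}e^F$ and bounds each factor by $\dc_{\BSP}^{1-|\bk_i|}$, arriving at $\dc_{\BSP}^{\sum n_i - |\bm|}\le \l_{\BSP}^{1-|\bm|}$; your version is the same computation phrased in terms of partitions into blocks $\bm_s$, with the minor refinement of treating $|\bm_s|=1$ separately (your framing as ``induction'' is incidental --- you do not actually use an inductive hypothesis).
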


\begin{proof} Since $F$ 
is a sum of terms of the type $|x_j-x_k|$ and $|x_j|$, the required bound for $\nabla F$ 
immediately follows from the definition \eqref{eq:F} and Lemma 
\ref{lem:diff}. For $|\nabla F|^2$ we use the Leibniz rule, which leads again to \eqref{eq:nabla}.

To prove the bound \eqref{eq:ef} observe that the derivative $\SD_{\BSP}^{\bm}e^F$ represents as a sum of finitely many terms of the form 
\begin{align*}
(\SD_{\BSP}^{\bk_1}F\big)^{n_1}\, (\SD_{\BSP}^{\bk_2}F\big)^{n_2}\cdots 
(\SD_{\BSP}^{\bk_s}F\big)^{n_s} \, e^{F},
\end{align*}
where $1 \le  |\bk_1| < |\bk_2| < \cdots < |\bk_s|\le |\bm|$, 
$n_j\ge 1$, and $|\bk_1|\,n_1 + |\bk_2|\,n_2 + \cdots +|\bk_s|\,n_s = |\bm|$. 
Each such term estimates by 
\begin{align*}
\dc_{\BSP}(\bx)^{n_1(1-|\bk_1|) + n_2(1-|\bk_2|) + \cdots n_s(1-|\bk_1|) } 
= \dc_{\BSP}(\bx)^{n_1+n_2+\dots + n_s - |\bm|}
\le \l_{\BSP}(\bx)^{1-|\bm|},
\end{align*}
as required.
\end{proof}

\begin{proof}[Proof of Theorem \ref{thm:clusterpsi}] 
In view of \eqref{eq:jastrow}, the function $\phi = e^{-F}\psi$ 
satisfies the equation \eqref{eq:mod} where
\begin{align*}
\ba = -2\nabla F,\quad b = V- \Delta F - |\nabla F|^2 - E.
\end{align*}
Let us fix a number $R_1\in (R, 1)$. 
By virtue of \eqref{eq:z} and \eqref{eq:nabla}, thus defined coefficients $\ba$ and $b$ 
satisfy the bound
\begin{align*}
|\SD_{\BSP}^\bm\ba(\bx)| + |\SD_{\BSP}^\bm b(\bx)|\lesssim 1+ \dc_{\BSP}(\bx)^{-|\bm|}
\lesssim \l_{\BSP}(\bx)^{-|\bm|}\lesssim \l^{-|\bm|},\quad \bx\in B(\bx_0, R_1\l),
\end{align*}
where $\l = \l_{\BSP}(\bx_0)$. 
For the last inequality we have used \eqref{eq:lip2}.  
Thus the condition \eqref{eq:cd} is fulfilled 
 with $\ell = \l\le 1$. Consequently, by Theorem \ref{thm:reg2}, 
$\SD_{\BSP}^\bm \phi\in \plainC1(\overline{B(\bx_0, r\l)})$ for all $r <R_1$. 
Moreover, if $|\bm|+k\ge 1$, where $k = 0, 1$, then for  $r < R$ we have 
\begin{align}\label{eq:regphi}
\|\nabla^k \SD_{\BSP}^{\bm} \phi\|_{\plainL\infty(B(\bx_0, r\l))}
\lesssim \l^{1-|\bm|-k}\, \big(\l\,\|\phi\|_{\plainL\infty(B(\bx_0, R\l))}
+ \|\nabla \phi\|_{\plainL\infty(B(\bx_0, R\l))}\big).
\end{align}
Now we need to replace $\phi$ with the function $\psi = e^{F}\phi$. Let us prove \eqref{eq:clusterpsi} 
with $k=0$, $|\bm|\ge 1$. By the Leibniz rule,
\begin{align*}
\SD_{\BSP}^\bm \big(e^{F}\phi\big)
= &\ \sum_{0\le\bq\le \bm} {\bm\choose\bq}\SD_{\BSP}^{\bm-\bq} \big(e^{F}\big)  \SD_{\BSP}^{\bq}\phi\\
= &\ \sum_{\substack{0\le\bq\le \bm \\ 0<|\bq| < |\bm|}}
{\bm\choose\bq}\SD_{\BSP}^{\bm-\bq} \big(e^{F}\big)  \SD_{\BSP}^{\bq}\phi
+ \SD_{\BSP}^{\bm} \big(e^{F}\big) \phi
+ e^{F} \SD_{\BSP}^{\bm}\phi.
\end{align*}
By \eqref{eq:ef} and \eqref{eq:regphi}, for $\bx\in B(\bx_0, r\l)$ the first sum is bounded by 
\begin{align*}
\l^{2-|\bm|}\, \big(\l\|\phi\|_{\plainL\infty(B(\bx_0, R\l))}
+ \|\nabla \phi\|_{\plainL\infty(B(\bx_0, R\l))}\big).
\end{align*}
The second term is estimated with the help of \eqref{eq:ef} by
\begin{align*}
\l^{1-|\bm|}\, \|\phi\|_{\plainL\infty(B(\bx_0, R\l))}.
\end{align*}
Using \eqref{eq:regphi} the third term is estimated by 
\begin{align*}
\l^{1-|\bm|}\, \big(\l\,\|\phi\|_{\plainL\infty(B(\bx_0, R\l))}
+ \|\nabla \phi\|_{\plainL\infty(B(\bx_0, R\l))}\big).
\end{align*}
Consequently, 
\begin{align}\label{eq:regpsi}
\|\SD_{\BSP}^{\bm} \psi\|_{\plainL\infty(B(\bx_0, r\l))}
\lesssim \l^{1-|\bm|}\, \big(\|\phi\|_{\plainL\infty(B(\bx_0, R\l))}
+ \|\nabla \phi\|_{\plainL\infty(B(\bx_0, R\l))}\big).
\end{align}
It remains to note that in view of \eqref{eq:Finf}, 
\begin{align*}
|\phi|\lesssim |\psi|,\quad |\nabla\phi|\lesssim |\psi| + |\nabla\psi|, 
\end{align*}
so that the right-hand side of \eqref{eq:regpsi} is estimated by the right-hand side 
of \eqref{eq:clusterpsi} with $k = 0$.

The case $k = 1$ is done in the same way.
\end{proof}

Later on we use Theorem \ref{thm:clusterpsi} in a slightly different form:

\begin{cor}\label{cor:clusterpsi}
For all $\bm\in \mathbb N_0^{3M}$ the cluster derivatives 
$\SD_{\BSP}^\bm \psi$ belong to $\plainC1\big(\R^{3N}\setminus\Sigma_{\BSP}\big)$, 
and under the condition $|\bm|+k\ge 1$ with $k= 0, 1$, 
for all $\bx\in \R^{3N}\setminus\Sigma_{\BSP}$ and all $R>0$, we have 
\begin{align}
|\SD_{\BSP}^{\bm}\nabla^k\psi(\bx)|
\lesssim &\ \l_{\BSP}(\bx)^{1-|\bm|- k} f_{\infty}(\bx; R),\label{eq:clustmod}\\
f_\infty(\bx; R) = &\ 
\|\psi\|_{\plainL{\infty}(B(\bx, R))} 
+ \|\nabla\psi\|_{\plainL{\infty}(B(\bx, R))}.\label{eq:finfinity}
\end{align} 
The implicit constant in \eqref{eq:clustmod} is independent of $\psi$ and $\bx$, but 
may depend on $R$. 
\end{cor}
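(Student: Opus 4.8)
The plan is to deduce Corollary \ref{cor:clusterpsi} from Theorem \ref{thm:clusterpsi} by a covering argument, removing the requirement that $R < 1$ and converting the $\l$-scaled balls $B(\bx_0, r\l)$, $B(\bx_0, R\l)$ appearing there into fixed-radius balls $B(\bx, R)$. The starting observation is that the statement is local: it suffices to establish the pointwise bound \eqref{eq:clustmod} at an arbitrary fixed point $\bx\in\R^{3N}\setminus\Sigma_{\BSP}$, and the $\plainC1$-membership follows once we know $\SD_{\BSP}^\bm\psi$ and its gradient are continuous near each such $\bx$, which Theorem \ref{thm:clusterpsi} already gives on $B(\bx_0, R\l)$.

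First I would fix $\bx\in\R^{3N}\setminus\Sigma_{\BSP}$ and set $\l = \l_{\BSP}(\bx)\in(0,1]$. Applying Theorem \ref{thm:clusterpsi} with $\bx_0 = \bx$ and, say, $r = 1/4$, $R = 1/2$, evaluating \eqref{eq:clusterpsi} at the centre $\bx$, gives
\begin{align*}
|\SD_{\BSP}^{\bm}\nabla^k\psi(\bx)|
\lesssim \l^{1-|\bm|-k}\big(\|\psi\|_{\plainL\infty(B(\bx, \l/2))}
+ \|\nabla\psi\|_{\plainL\infty(B(\bx, \l/2))}\big).
\end{align*}
Since $\l\le 1$, we have $B(\bx,\l/2)\subset B(\bx, R)$ for any $R\ge 1/2$, so the $\plainL\infty$-norms on the right are bounded by $f_\infty(\bx; R)$ as defined in \eqref{eq:finfinity}; this already yields \eqref{eq:clustmod} for all such $R$. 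For a general $R>0$ one notes that $f_\infty(\bx; R)$ is monotone increasing in $R$, so the bound for $R\ge 1/2$ automatically implies it for $R < 1/2$ as well (with a constant depending on $R$), and conversely enlarging $R$ only weakens the right-hand side. Hence \eqref{eq:clustmod} holds for every $R>0$, with the implicit constant depending only on $R$ and the fixed choices $r=1/4$, $R=1/2$ internal to the application of Theorem \ref{thm:clusterpsi}.

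For the $\plainC1$-regularity claim, I would again argue locally: given $\bx\in\R^{3N}\setminus\Sigma_{\BSP}$, Theorem \ref{thm:clusterpsi} asserts $\SD_{\BSP}^\bm\psi\in\plainC1(\overline{B(\bx, R\l)})$ with $\l = \l_{\BSP}(\bx)$, so $\SD_{\BSP}^\bm\psi$ is $\plainC1$ in a neighbourhood of every point of the open set $\R^{3N}\setminus\Sigma_{\BSP}$, hence $\SD_{\BSP}^\bm\psi\in\plainC1(\R^{3N}\setminus\Sigma_{\BSP})$. The one genuinely non-routine point — though still minor — is bookkeeping the uniformity of constants: Theorem \ref{thm:clusterpsi} states its implicit constant is independent of $\bx_0$, which is exactly what is needed so that the constant in \eqref{eq:clustmod} depends only on $R$ (and $\bm$, $\BSP$) and not on the base point $\bx$. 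I expect no real obstacle here; the corollary is essentially a restatement of Theorem \ref{thm:clusterpsi} evaluated at ball centres, combined with the elementary inclusion $B(\bx,\l/2)\subset B(\bx,R)$ coming from $\l\le 1$.
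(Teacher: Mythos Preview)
Your approach is the same as the paper's --- evaluate Theorem~\ref{thm:clusterpsi} at the centre $\bx_0=\bx$ and use $\l_{\BSP}(\bx)\le 1$ to absorb the $\l$-scaled ball into a fixed-radius one --- but there is a genuine slip in the step covering $R<1/2$. You write that monotonicity of $f_\infty(\bx;R)$ in $R$ lets the bound for $R\ge 1/2$ ``automatically imply it for $R<1/2$ as well''. This is backwards: since $f_\infty(\bx;R)$ is \emph{increasing} in $R$, the inequality $|\SD_{\BSP}^{\bm}\nabla^k\psi(\bx)|\lesssim \l^{1-|\bm|-k}f_\infty(\bx;1/2)$ is \emph{weaker} than the desired inequality with $f_\infty(\bx;R)$ for $R<1/2$, and there is in general no constant $C(R)$ with $f_\infty(\bx;1/2)\le C(R)\,f_\infty(\bx;R)$ (take a $\psi$ vanishing near $\bx$ but not on $B(\bx,1/2)$).

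The fix is exactly what the paper does: set $R_1=\min\{1/2,R\}$ and apply Theorem~\ref{thm:clusterpsi} with the theorem's outer radius equal to $R_1$ (and any $r<R_1$). This yields
\[
|\SD_{\BSP}^{\bm}\nabla^k\psi(\bx)|
\lesssim \l^{1-|\bm|-k}\,f_\infty(\bx;R_1\l)
\le \l^{1-|\bm|-k}\,f_\infty(\bx;R),
\]
since $R_1\l\le R_1\le R$. The implicit constant then depends on $R$ only through the choice of $R_1$ in Theorem~\ref{thm:clusterpsi}, as allowed. With this correction your argument matches the paper's.
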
 
 
\begin{proof}
Let $R_1 = \min\{1/2, R\}$. Then it follows from \eqref{eq:clusterpsi} that 
\begin{align*}
|\SD_{\BSP}^{\bm}\nabla^k\psi(\bx)|
\lesssim &\ \l(\bx)^{1-|\bm|- k} f_{\infty}(\bx; R_1\l(\bx))
\le \l(\bx)^{1-|\bm|- k} f_{\infty}(\bx; R).   
\end{align*}
Here we have used the fact that $\l(\bx)\le 1$. This completes the proof.
\end{proof} 

\section{Auxiliary integral bounds} \label{sect:aux}

Here we derive several integral bounds that are instrumental in the proof of the main result in Section 
\ref{sect:estim}. 

Let $\psi\in\plainL2(\R^{3N})$ be an eigenfunction.  Along with 
the notation \eqref{eq:finfinity} it is 
convenient to introduce for arbitrary $R>0$ also 
\begin{align*}
f_2(\bx)= f_2(\bx; R) 
= \|\psi\|_{\plainL{2}(B(\bx, R))}. 
\end{align*} 
According to \eqref{eq:grpsi}, 
for any $R>0$ we have
\begin{align}\label{eq:inf_2}
f_\infty(\bx; R) \lesssim f_2(\bx; 2R), 
\end{align}
with an implicit constant depending on $R$.
 
The next lemma provides bounds for integrals involving the functions $f_\infty(\bx)$ and $f_2(\bx)$.  
Recall that the function $\CM_\varepsilon(x, y, \hat\bx)$  is defined in \eqref{eq:phibm}, 
and the density $\rho(x)$ -- in \eqref{eq:dens}
 
\begin{lem}
Let $x, y \in\R^3$. Then 
\begin{align}\label{eq:infinf} 
\int f_\infty(x, \hat\bx; R) f_\infty(y, \hat\bx; R) d\hat\bx
\lesssim 
\big(\|\rho\|_{\plainL1(B(x, 2R))}\big)^{\frac{1}{2}}
\big(\|\rho\|_{\plainL1(B(y, 2R))}\big)^{\frac{1}{2}}.
%
%\bigg(\int\limits_{|z-x|<2R} \rho(z) dz\bigg)^{\frac{1}{2}}\, 
%\bigg(\int\limits_{|z-y|<2R} \rho(z) dz\bigg)^{\frac{1}{2}}.
%
%\|\psi\|_{\plainL2(\R^{3N})}^2.
%
\end{align}
For any $G\in \plainL1(\R^3)$ we have  
\begin{align}\label{eq:couinf1}
\int \big[|G(x_j-x_k)| +|G(t-x_k)| + &\ |G(x_k)|\big]
 f_\infty(x, \hat\bx; R)  f_\infty(y, \hat\bx; R)\, d\hat\bx
\notag\\
&\ \lesssim  
\|G\|_{\plainL1(\R^3)}
\big(\|\rho\|_{\plainL1(B(x, 2R))}\big)^{\frac{1}{2}}
\big(\|\rho\|_{\plainL1(B(y, 2R))}\big)^{\frac{1}{2}},
%
%\|\psi\|_{\plainL2(\R^{3N})}^2,
%
\end{align} 
for all $j, k = 2, 3, \dots, N, j\not = k$, and $t\in\R^3$. 
In particular,
\begin{align}\label{eq:couinf2}
\int \CM_\varepsilon(x, y, \hat\bx) 
 f_\infty(x, \hat\bx; R) f_\infty(y, \hat\bx; R)\, d\hat\bx
 \lesssim  \varepsilon^3 
\big(\|\rho\|_{\plainL1(B(x, 2R))}\big)^{\frac{1}{2}}
\big(\|\rho\|_{\plainL1(B(y, 2R))}\big)^{\frac{1}{2}}. 
\end{align} 
The implicit constants in the 
bounds \eqref{eq:infinf}, \eqref{eq:couinf1} and  \eqref{eq:couinf2} 
depend on $R$, but are independent of $x, y, t\in\R^3$. 
\end{lem}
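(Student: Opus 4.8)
The plan is to derive all three bounds from a single application of the Cauchy--Schwarz inequality in the variable $\hat\bx$, combined with Lemma \ref{lem:grpsi} (via \eqref{eq:inf_2}) to pass from $\plainL\infty$-quantities to $\plainL2$-quantities that can be recognized as integrals of the density $\rho$. First I would prove \eqref{eq:infinf}. Writing $\hat\bx = (x_2,\dots,x_N)$ and applying Cauchy--Schwarz in $\hat\bx$ gives
\begin{align*}
\int f_\infty(x,\hat\bx;R)\, f_\infty(y,\hat\bx;R)\,d\hat\bx
\le \Big(\int f_\infty(x,\hat\bx;R)^2\,d\hat\bx\Big)^{\frac12}
\Big(\int f_\infty(y,\hat\bx;R)^2\,d\hat\bx\Big)^{\frac12}.
\end{align*}
By \eqref{eq:inf_2}, $f_\infty(x,\hat\bx;R)^2 \lesssim f_2(x,\hat\bx;2R)^2 = \int_{B((x,\hat\bx),2R)}|\psi|^2$, and the ball $B((x,\hat\bx),2R)\subset B(x,2R)\times B(\hat\bx,2R)$ in the split coordinates. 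Hence $\int f_2(x,\hat\bx;2R)^2\,d\hat\bx \lesssim \int_{B(x,2R)}\big(\int_{\R^{3N-3}}|\psi(x_1,\hat\bx)|^2\,d\hat\bx\big)\,dx_1 = \|\rho\|_{\plainL1(B(x,2R))}$, using Fubini and the definition \eqref{eq:dens} of $\rho$; one absorbs the finitely-many-overlap factor from covering $B(\hat\bx,2R)$-translates into the implicit constant. The same estimate with $y$ in place of $x$ finishes \eqref{eq:infinf}.

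Next I would prove \eqref{eq:couinf1}. The point is that each of the three functions $G(x_j-x_k)$, $G(t-x_k)$, $G(x_k)$ depends on $\hat\bx$ only through the single variable $x_k$ (and, in the first case, $x_j$, but symmetrically one of them). So one integrates first over $x_k$ (or over $x_j$, whichever makes the argument of $G$ range over all of $\R^3$ by translation), treating the remaining variables as fixed parameters. Concretely, after Cauchy--Schwarz one is led to bound, say,
\begin{align*}
\int |G(x_k-t)|\, f_2(x,\hat\bx;2R)^2\,d\hat\bx
= \int_{\R^{3N-6}} \Big(\int_{\R^3} |G(x_k-t)|\, f_2(x,\hat\bx;2R)^2\,dx_k\Big)\,d\tilde\bx_k.
\end{align*}
Because $f_2(x,\hat\bx;2R)^2 = \int_{B((x,\hat\bx),2R)}|\psi|^2$ and integrating this over $x_k\in\R^3$ just integrates $|\psi|^2$ over a slab of width $4R$ in the $x_k$-coordinate against the weight $|G(x_k-t)|$, a further application of Fubini and the bound $\int_{\R^3}|G(x_k-t)|\,h(x_k)\,dx_k \le \|G\|_{\plainL1}\esssup h$ is not quite what is wanted; instead I would convolve: $\int |G(x_k-t)|\big(\int_{|x_k'-x_k|<2R}\cdots\,dx_k'\big)dx_k \le \|G\|_{\plainL1}\cdot(\text{local }\plainL1\text{ mass of }|\psi|^2)$ after swapping the order of the $x_k$ and $x_k'$ integrations. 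Carrying this through and then redoing the Cauchy--Schwarz/Fubini computation exactly as in \eqref{eq:infinf} yields the extra factor $\|G\|_{\plainL1(\R^3)}$ and reproduces $\big(\|\rho\|_{\plainL1(B(x,2R))}\big)^{1/2}\big(\|\rho\|_{\plainL1(B(y,2R))}\big)^{1/2}$.

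Finally, \eqref{eq:couinf2} follows from \eqref{eq:couinf1} by summing. Recalling \eqref{eq:phibm}, $\CM_\varepsilon$ is a sum of $O(N^2)$ terms of the form $\xi(N\varepsilon^{-1}|x-x_j|)$, $\xi(N\varepsilon^{-1}|y-x_j|)$, $\xi(N\varepsilon^{-1}|x_j-x_k|)$, each of which is $G(\,\cdot\,)$ with $G(z) = \xi(N\varepsilon^{-1}|z|)$ and argument of the form $x_k - t$ or $x_j - x_k$ or $x_k$; here $\|G\|_{\plainL1(\R^3)} = \int\xi(N\varepsilon^{-1}|z|)\,dz = (\varepsilon/N)^3\int\xi(|w|)\,dw \lesssim \varepsilon^3$. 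Applying \eqref{eq:couinf1} to each term and summing the finitely many contributions gives \eqref{eq:couinf2}. The main obstacle is the bookkeeping in \eqref{eq:couinf1}: one must be careful that the "extra" variable $x_k$ over which $G$ is translation-covariant is integrated out \emph{before} the Cauchy--Schwarz step is turned into a density integral, so that the $\|G\|_{\plainL1}$ factor genuinely decouples from the $\rho$-factors rather than getting entangled with the $\plainL2$-to-$\rho$ conversion; handling the case where $G$ depends on two of the variables $x_j,x_k$ (both in $\hat\bx$) rather than one requires fixing one of them and translating in the other, which is where the restriction $j,k\ge 2$, $j\neq k$ is used.
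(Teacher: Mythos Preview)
Your proposal is correct and follows essentially the same route as the paper: Cauchy--Schwarz in $\hat\bx$ to reduce to the diagonal, \eqref{eq:inf_2} to pass from $f_\infty$ to $f_2$, Fubini with the indicator of $B((x,\hat\bx),2R)$, and recognition of the result as $\|\rho\|_{\plainL1(B(x,2R))}$; for \eqref{eq:couinf2} the paper likewise applies \eqref{eq:couinf1} with $G(s)=\xi(N\varepsilon^{-1}|s|)$.

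The only place where your write-up is more laboured than necessary is \eqref{eq:couinf1}. You worry that the bound $\int |G(x_k-t)|\,h(x_k)\,dx_k\le\|G\|_{\plainL1}\esssup h$ is inadequate and propose a convolution/swap argument instead. The paper avoids this entirely: after writing $f_2(x,\hat\bx;2R)^2=\int|\psi(\bz)|^2\,\1^{(3N)}_{\bz,2R}(x,\hat\bx)\,d\bz$ and swapping $\bz$ and $\hat\bx$, it simply \emph{drops} the constraint on the single variable $x_j$ from the indicator, bounding $\1^{(3N)}_{\bz,2R}(x,\hat\bx)\le \1^{(3)}_{x,2R}(z_1)\,\1^{(3N-6)}_{\tilde\bz_j,2R}(\tilde\bx_j)$. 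Then the $x_j$-integral of $|G(x_j-x_k)|$ gives exactly $\|G\|_{\plainL1}$, the $\tilde\bx_j$-integral gives $R^{3N-6}$, and the remaining $\bz$-integral with constraint $|z_1-x|<2R$ is $\|\rho\|_{\plainL1(B(x,2R))}$. Your convolution argument reaches the same endpoint (since $\int_{|x_k-z_k|<2R}|G(x_k-t)|\,dx_k\le\|G\|_{\plainL1}$ anyway), but the paper's indicator-dropping trick is tidier and sidesteps the bookkeeping you flag as the main obstacle.
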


\begin{proof} 
Due to the Schwarz inequality, it suffices to estimate the integrals for $x = y$. 
For $u\in \R^d$ and $R>0$ denote by $\1^{(d)}_{u, R}(x), x\in\R^d,$ the indicator function 
of the ball $B(u, R)\subset \R^d$.  

Using \eqref{eq:inf_2} we get
\begin{align*}
\int \big(f_\infty(x, \hat\bx; R)\big)^2 d\hat\bx
\lesssim &\ \int \big(f_2(x, \hat\bx; 2R)\big)^2 d\hat\bx\\[0.2cm]
= &\ \int \int |\psi(\bz)|^2 \1^{(3N)}_{(x, \hat\bx), 2R}(\bz) \,d\bz \,d\hat\bx\\
= &\ \int  |\psi(\bz)|^2 \int\1^{(3N)}_{\bz, 2R}(x, \hat\bx) \,d\hat\bx \,d\bz. 
\end{align*}
Observe that  
\begin{align*}
\1^{(3N)}_{\bz, 2R}(x, \hat\bx) 
\le  \1^{(3)}_{z_1, 2R}(x)
\1^{(3N-3)}_{\hat\bz, 2R}(\hat\bx) 
= \1^{(3)}_{x, 2R}(z_1)
\1^{(3N-3)}_{\hat\bz, 2R}(\hat\bx),
\end{align*}
so the integral does not exceed 
\begin{align*}
\int  |\psi(\bz)|^2 \1^{(3)}_{x, 2R}(z_1) 
\bigg(\int
\1^{(3N-3)}_{\hat\bz, 2R}(\hat\bx)\,d\hat\bx\bigg) \,d\bz
  \lesssim R^{3N-3} 
\int\limits_{|z-x|<2R} \rho(z) dz.  
%
%  \|\psi\|_{\plainL2(\R^{3N})}^2.
%
\end{align*}
This proves \eqref{eq:infinf}. 

Proof of \eqref{eq:couinf1}.  
Again it suffices to estimate the integral for $x = y$:
\begin{align*}
\int 
|G(x_j-x_k)|&\ 
\big(f_\infty(x, \hat\bx; R)\big)^2\, d\hat\bx
\lesssim \int   
|G(x_j-x_k)|  \big(f_2(x, \hat\bx; 2R)\big)^2\, d\hat\bx\\[0.2cm]
= &\ \int  |\psi(\bz)|^2 \int   
|G(x_j-x_k)| \1^{(3N)}_{\bz, 2R}(x, \hat\bx) \,d\hat\bx \,d\bz. 
\end{align*}
Represent $\hat\bx = (x_j, \tilde\bx_{j})$ 
with $\tilde\bx_j\in \R^{3N-6}$, as defined in \eqref{eq:xtilde},   
and estimate:
\begin{align*}
\1^{(3N)}_{\bz, 2R}(x, \hat\bx)
\le  \1^{(3)}_{z_1, 2R}(x)\, \1^{(3N-6)}_{\tilde\bz_j, 2R}(\tilde\bx_j)  
= \1^{(3)}_{x, 2R}(z_1)\, \1^{(3N-6)}_{\tilde\bz_j, 2R}(\tilde\bx_j).
\end{align*}
Consequently, the integral estimates by 
\begin{align*}
\int  |\psi(\bz)|^2 \1^{(3)}_{x, 2R}(z_1)&\ \int \1_{B(\tilde\bz_j, 2R )}(\tilde\bx_j)
\bigg[\int |G(x_j-x_k)|\, dx_j  \bigg]\,d\tilde\bx_j \,d\bz\\
 = &\ 
\|G\|_{\plainL1(\R^3)}
\int\limits_{|z_1-x|<2R}  |\psi(\bz)|^2 \int
\1^{(3N-6)}_{\tilde\bz_j, 2R}(\tilde\bx_j)\,d\tilde\bx_j \,d\bz\\
\lesssim &\ 
\|G\|_{\plainL1(\R^3)} R^{3N-6}
\int\limits_{|z-x|<2R}\rho(z) dz, 
%
%\|\psi\|_{\plainL2(\R^{3N})}^2,
%
\end{align*}
as claimed. The bounds with $G(t-x_k)$ and $G(x_k)$ are proved in the same way.

The bound \eqref{eq:couinf2} follows from \eqref{eq:couinf1} 
with $G(s) = \xi(N\varepsilon^{-1}|s|)$, see \eqref{eq:xi} for the definition of the function $\xi$.
\end{proof}

Let us now apply the obtained bounds to integrals involving the distance function 
$\l_{\SQ}(\bx)$ with an arbitrary cluster $\SQ$. 
Recall that $\l_{\SQ}(\bx)$, $\widehat X_\SQ(t;\varepsilon)$ and $\widehat T_{\SQ}(\d)$ are 
defined in \eqref{eq:dq}, \eqref{eq:hatxp} and \eqref{eq:hattp} respectively, and the function 
$h_a$ is defined in \eqref{eq:h}.  

\begin{lem} \label{lem:Xint}
Let $a\ge 0$ and $\varepsilon >0$. 
Let $\SQ$ be an arbitrary cluster.  
Then for any $R>0$, 
\begin{align}\label{eq:Xint}
\int\limits_{\widehat X_\SQ(t; \varepsilon)\cap \widehat T_{\SQ}(\varepsilon)}  
\l_{\SQ}(t, \hat\bx)^{-a} 
f_\infty(x, \hat\bx; R) &\ f_\infty(y, \hat\bx; R) d\hat\bx\notag\\
\lesssim &\ \big(1+|t|^{-a} + h_{a+2}(\varepsilon)\big)\,
\big(\|\rho\|_{\plainL1(B(x, 2R))}\big)^{\frac{1}{2}}
\big(\|\rho\|_{\plainL1(B(y, 2R))}\big)^{\frac{1}{2}},
%
%\|\psi\|_{\plainL2(\R^{3N})}^2,
%
\end{align}
uniformly in $x, y, t\in\R^3$. If $1\notin \SQ$, then the term $|t|^{-a}$ is absent. 
The implicit constant in \eqref{eq:Xint} may depend on $R$, but is independent of $\varepsilon$.
\end{lem}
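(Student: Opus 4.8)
The plan is to prove Lemma~\ref{lem:Xint} by bounding $\l_\SQ(t,\hat\bx)^{-a}$ pointwise by a constant plus a sum of \emph{truncated} Coulomb-type singularities, and then integrating each piece against $f_\infty(x,\hat\bx;R)f_\infty(y,\hat\bx;R)$ using the estimates \eqref{eq:infinf} and \eqref{eq:couinf1} already established. Since $\l_\SQ=\min\{1,\dc_\SQ\}$ we have $\l_\SQ^{-a}=\max\{1,\dc_\SQ^{-a}\}\le 1+\dc_\SQ^{-a}\,\1_{\{\dc_\SQ<1\}}$, so only the region $\{\dc_\SQ<1\}$ carries any singularity. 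Recalling \eqref{eq:dq}, the quantity $\dc_\SQ(t,\hat\bx)$, with $x_1=t$, is the minimum of the finitely many non-negative numbers $|t|$ (present iff $1\in\SQ$), $|x_j|$ for $j\in\SQ^*$, and $\frac1{\sqrt2}|x_j-x_k|$ for $j\in\SQ$, $k\in\SQ^{\rm c}$. Using the elementary inequality $(\min_i b_i)^{-a}\1_{\{\min_i b_i<1\}}\le\sum_i b_i^{-a}\1_{\{b_i<1\}}$ for $b_i\ge0$ (if the minimiser $b_{i_0}$ is $<1$, the right side already contains the term $b_{i_0}^{-a}$), I obtain the pointwise majorant
\begin{align*}
\l_\SQ(t,\hat\bx)^{-a} &\le 1 + |t|^{-a}\,\1_{\{|t|<1\}}\1_{\{1\in\SQ\}}
+ \sum_{j\in\SQ^*}|x_j|^{-a}\,\1_{\{|x_j|<1\}} \\
&\qquad + 2^{a/2}\sum_{j\in\SQ,\,k\in\SQ^{\rm c}} |x_j-x_k|^{-a}\,\1_{\{|x_j-x_k|<\sqrt2\}}.
\end{align*}

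The next step is to invoke the description of the integration domain. On $\widehat X_\SQ(t;\varepsilon)\cap\widehat T_{\SQ}(\varepsilon)$ one has, by \eqref{eq:hatxp} and \eqref{eq:hattp}, that $|x_j|>\varepsilon$ for every $j\in\SQ^*$ and $|x_j-x_k|>\varepsilon$ for every $j\in\SQ$, $k\in\SQ^{\rm c}$ (again with $x_1=t$ throughout). Hence on this domain each of the last two families of terms is dominated by $G(x_j)$, respectively $G(x_j-x_k)$, where $G(s)=|s|^{-a}\,\1_{\{\varepsilon<|s|<\sqrt2\}}$. A one-line radial computation gives, for $\varepsilon<\sqrt2$, that $\|G\|_{\plainL1(\R^3)}=4\pi\int_\varepsilon^{\sqrt2}r^{2-a}\,dr$ (and $G\equiv0$, so $\|G\|_{\plainL1(\R^3)}=0$, when $\varepsilon\ge\sqrt2$), which is $\lesssim1$ for $a<3$, $\lesssim\log(\varepsilon^{-1}+2)$ for $a=3$, and $\lesssim1+\varepsilon^{3-a}$ for $a>3$; that is, $\|G\|_{\plainL1(\R^3)}\lesssim h_{a+2}(\varepsilon)$ in every case, see \eqref{eq:h}, with a constant depending only on $a$.

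Finally, since the pointwise majorant is non-negative, its integral over $\widehat X_\SQ(t;\varepsilon)\cap\widehat T_{\SQ}(\varepsilon)$ is at most its integral over all of $\R^{3N-3}$. The constant term, and the term $|t|^{-a}\,\1_{\{|t|<1\}}\1_{\{1\in\SQ\}}$ after factoring $|t|^{-a}$ out of the integral, are handled by \eqref{eq:infinf}; each of the $O(N^2)$ terms involving $G$ is handled by \eqref{eq:couinf1}, in the form $|G(x_j)|$, $|G(x_j-x_k)|$ or $|G(t-x_k)|$ (with $x_j$ or $x_k$ replaced by $t$ whenever the index $1$ is involved, according to which side of $\SQ$ it lies on). Adding up, every contribution is bounded by $h_{a+2}(\varepsilon)$ (or $|t|^{-a}$, or $1$) times $(\|\rho\|_{\plainL1(B(x,2R))})^{\frac12}(\|\rho\|_{\plainL1(B(y,2R))})^{\frac12}$, which is exactly \eqref{eq:Xint}; the term $|t|^{-a}$ is visibly absent when $1\notin\SQ$, and all implicit constants depend only on $R$, $N$ and $a$. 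The only point needing care — and the only place the structure of the problem genuinely enters — is the upper cut-off $|s|<\sqrt2$ in $G$: without it the naive majorant $|s|^{-a}\1_{\{|s|>\varepsilon\}}$ is not integrable on $\R^3$ for $a\le3$, and it is precisely the bound $\l_\SQ^{-a}\1_{\{\l_\SQ\ge1\}}\le1$ that confines the singular contributions to $\{\dc_\SQ<1\}$ and so supplies this truncation. Everything else is a routine application of the integral estimates already in hand.
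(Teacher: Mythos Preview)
Your proof is correct and follows essentially the same route as the paper: bound $\l_\SQ(t,\hat\bx)^{-a}$ pointwise by $1$ plus a finite sum of individual Coulomb-type terms, use the domain constraints $\widehat X_\SQ(t;\varepsilon)\cap\widehat T_\SQ(\varepsilon)$ to truncate each term from below at $\varepsilon$, and then apply \eqref{eq:infinf} and \eqref{eq:couinf1}. The one cosmetic difference is that you impose the upper truncation at the outset via $\l_\SQ^{-a}\le 1+\dc_\SQ^{-a}\1_{\{\dc_\SQ<1\}}$, which lets you define a single $G(s)=|s|^{-a}\1_{\{\varepsilon<|s|<\sqrt2\}}$ and avoid the paper's explicit split into the cases $\varepsilon\ge1$ and $\varepsilon<1$; the paper instead carries the untruncated terms and separates off the tail $\{|s|>1\}$ by hand before introducing $G$.
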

 
\begin{proof}  
By the definition of $\l_\SQ$ we have 
\begin{align}\label{eq:notin}
\l_\SQ(t, \hat\bx)^{-a}
\le  &\ 1 + \sum_{k\in\SQ}|x_k|^{-a}\notag\\
&\ \sum_{k\in\SQ}\, |t-x_k|^{-a} + \sum_{j\in\SQ, k\in (\SQ^{\rm c})^*} 
|x_j-x_k|^{-a},\quad \textup{if}\quad 1\in\SQ^{\rm c},  
\end{align}
and 
\begin{align}\label{eq:in}
\l_\SQ(t, \hat\bx)^{-a}
\le  1 + &\ |t|^{-a} + \sum_{k\in\SQ^*}|x_k|^{-a} \notag\\ &\ 
+  \sum_{k\in\SQ^{\rm c}} 
|t-x_k|^{-a}
 + \sum_{j\in\SQ^*, k\in \SQ^{\rm c}} |x_j-x_k|^{-a},\quad \textup{if}\quad 1\in\SQ.  
\end{align}
Let us estimate the contributions from each of the summands. 
Assume first that $1\notin\SQ$ (i.e. $\l_\SQ$ satisfies \eqref{eq:notin}) 
and estimate the integral
\begin{align}\label{eq:jk}
F_{jk}(t, x, y) = \int\limits_{\widehat X_\SQ(t; \varepsilon)\cap \widehat T_{\SQ}(\varepsilon)}  
|x_j-x_k|^{-a} 
f_\infty(x, \hat\bx; R) &\ f_\infty(y, \hat\bx; R)\, d\hat\bx,
%
%\quad j\in\SQ, k\in(\SQ^{\rm c})^*.
%
\end{align}
%
%
%
%
%the terms containing 
%$|x_j-x_k|^{-a},\ j\in\SQ, k\in(\SQ^{\rm c})^*$. 
%
for an arbitrary fixed pair $j\in\SQ, k\in(\SQ^{\rm c})^*$. 
Since $j, k\ge 2$, in view of the definition 
\eqref{eq:hatxp} of $\widehat X_\SQ(t, \varepsilon)$, 
we have 
\begin{align*}
\widehat X_\SQ(t; \varepsilon)\subset \{\hat\bx\in\R^{3N-3}: |x_j-x_k| > \varepsilon\}.
\end{align*}
%
%for any such pair of indices $j, k$. 
%
Therefore, 
%
%if $\varepsilon\ge 1$, then by \eqref{eq:infinf} 
%we can estimate 
%
\begin{align*}
F_{jk}(t, x, y)\le \int\limits_{|x_j-x_k|> \varepsilon}  
 |x_j-x_k|^{-a}   \,
f_\infty(x, \hat\bx; R)\, f_\infty(y, \hat\bx; R)\, d\hat\bx.
\end{align*}
If $\varepsilon\ge 1$, then by \eqref{eq:infinf} we have 
\begin{align}\label{eq:ege1}
F_{jk}(t, x, y)\lesssim \big(\|\rho\|_{\plainL1(B(x, 2R))}\big)^{\frac{1}{2}}
\big(\|\rho\|_{\plainL1(B(y, 2R))}\big)^{\frac{1}{2}}.
\end{align}
If $\varepsilon<1$, then  
\begin{align*}
F_{jk}(t, x, y)\le &\ \int\limits_{|x_j-x_k|> 1}   
f_\infty(x, \hat\bx; R)\, f_\infty(y, \hat\bx; R)\, d\hat\bx\\
&\qquad + 
\int G(x_j-x_k) \,
f_\infty(x, \hat\bx; R)\, f_\infty(y, \hat\bx; R)\, d\hat\bx,
\end{align*}
where 
\begin{align}\label{eq:gs}
G(s) = \1_{\{\varepsilon<|s|<1\}}(s)|s|^{-a},\quad s\in\R^3.
\end{align}
For the first integral use \eqref{eq:infinf} again. Since 
$\|G\|_{\plainL1(\R^3)}\lesssim h_{a+2}(\varepsilon)$ (see \eqref{eq:h}) the second integral 
is bounded by 
\begin{align*}
h_{a+2}(\varepsilon)\big(\|\rho\|_{\plainL1(B(x, 2R))}\big)^{\frac{1}{2}}
\big(\|\rho\|_{\plainL1(B(y, 2R))}\big)^{\frac{1}{2}},
\end{align*}
in view of of \eqref{eq:couinf1}. 
Thus the integral \eqref{eq:jk}, and hence the contribution from 
the last term in \eqref{eq:notin} as well,  satisfies the bound \eqref{eq:Xint}.  

%
%
%The second integral is bounded 
%Furthermore, 
%since $\|G\|_{\plainL1(\R^3)}\lesssim h_{a+2}(\varepsilon)$, 
%the required bound follows from %%%\eqref{eq:couinf1}. 
In the same way, using \eqref{eq:couinf1} we derive the bound \eqref{eq:Xint} 
%
%the required bound 
%
for the integrals containing the remaining terms in \eqref{eq:notin}. 
Let us estimate, for example, the integral 
\begin{align*}
F_k(t, x, y) = \int\limits_{\widehat X_\SQ(t; \varepsilon)\cap \widehat T_{\SQ}(\varepsilon)}  
|x_k|^{-a} \,
f_\infty(x, \hat\bx; R)\, f_\infty(y, \hat\bx; R)\, d\hat\bx,
\end{align*}
for an arbitrary fixed $k\in\SQ$. By the definition \eqref{eq:hattp}, 
\begin{align*}
\widehat T_\SQ(\varepsilon)\subset \{\hat\bx\in\R^{3N-3}: |x_k| > \varepsilon\}, 
\end{align*}
so that 
\begin{align*}
F_k(t, x, y) \le \int\limits_{|x_k| > \varepsilon}
|x_k|^{-a} \,
f_\infty(x, \hat\bx; R)\, f_\infty(y, \hat\bx; R)\, d\hat\bx,
\end{align*} 
As in the case of the integral \eqref{eq:jk}, if $\varepsilon \ge 1$, then $F_k$ satisfies the bound \eqref{eq:ege1}. 
If $\varepsilon<1$, then 
\begin{align*}
F_{k}(t, x, y)\le &\ \int\limits_{|x_k|> 1}   
f_\infty(x, \hat\bx; R)\, f_\infty(y, \hat\bx; R)\, d\hat\bx\\
&\qquad + 
\int G(x_k) \,
f_\infty(x, \hat\bx; R)\, f_\infty(y, \hat\bx; R)\, d\hat\bx,
\end{align*}
where $G$ is given by \eqref{eq:gs}. Arguing as for 
$F_{jk}$ above, we conclude that $F_k$ satisfies \eqref{eq:Xint} as well. 
%
%
%
%
%
%
%for the integrals containing $|t - x_k|^{-a}$, $k\in \SQ$,
%using again \eqref{eq:couinf1}. 
%For the terms containing $|x_k|^{-a}, k\in\SQ$,  
%we estimate the integral by 
%
%on the left-hand side of \eqref{eq:Xint} by  
%
%\begin{align*}
%\int_{\widehat T_\SQ(\varepsilon)}  
% |x_k|^{-a}&\  
%f_\infty(x, \hat\bx; R) f_\infty(y, \hat\bx; R) d\hat\bx\\
%\le &\ \int_{|x_k|> 1}   
%f_\infty(x, \hat\bx; R) f_\infty(y, \hat\bx; R) d\hat\bx\\
%&\qquad + 
%\int G(x_k) 
%f_\infty(x, \hat\bx; R) f_\infty(y, \hat\bx; R) d\hat\bx,
%\end{align*}
%where, as before, $G$ is given by \eqref{eq:gs}. Thus 
%we can use again \eqref{eq:infinf} and \eqref{eq:couinf1}. 
%
%the bound 
%by $h_{a+2}(\varepsilon)$ follows again from \eqref{eq:couinf1}. 
%
This proves \eqref{eq:Xint} for the case $1\notin \SQ$.
%
%, i.e. without the term $|t|^{-a}$.
%

Suppose now that $1\in\SQ$ so that $\l_{\SQ}$ satisfies \eqref{eq:in}.    
The sums on the right-hand side of \eqref{eq:in} are similar to those in \eqref{eq:notin} and 
are treated as in the first part of the proof, and hence they lead to the estimate \eqref{eq:Xint}. 
The only term which is new compared to \eqref{eq:notin} is $|t|^{-a}$.
Using \eqref{eq:infinf} we estimate its contribution by 
\begin{align*}
|t|^{-a}\int    f_\infty(x, \hat\bx; R)\, f_\infty(y, \hat\bx; R)\, d\hat\bx
\lesssim \, |t|^{-a}\, 
\big(\|\rho\|_{\plainL1(B(x, 2R))}\big)^{\frac{1}{2}}
\big(\|\rho\|_{\plainL1(B(y, 2R))}\big)^{\frac{1}{2}}.
\end{align*}
This completes the proof of \eqref{eq:Xint}.
\end{proof} 
 
Lemma \ref{lem:Xint} has a useful corollary that 
will be crucial in the proof of Theorem \ref{thm:derbounds}. 
Let $\Phi = \Phi(x, y, \hat\bx; \varepsilon)$ be an extended 
cut-off as defined in \eqref{eq:Phi}, 
and let $\SP = \SP(\varepsilon)$ and $\SfS = \SfS(\varepsilon)$ be the clusters for the 
admissible cut-offs associated with $\Phi$. 
For all $a\ge 0$ define 
\begin{align}\label{eq:ja}
\begin{cases}
{\mathcal J}_a^{(1)}(x, y; \varepsilon, R)
= &\ 
\int\limits_{\supp_0\Phi(x, y, \ \cdot\ ; \varepsilon)} \big(\l_{\{\SfS^*,\SP\}}
(x, \hat\bx)\big)^{-a} f_\infty(x, \hat\bx; R)f_\infty(y, \hat\bx; R)\, d\hat\bx, \\[0.5cm]
{\mathcal J}_a^{(2)}(x, y; \varepsilon, R) 
= &\ \int\limits_{\supp_0\Phi(x, y, \ \cdot\ ; \varepsilon)} \big(\l_{\{\SfS,\SP^*\}}
(y, \hat\bx)\big)^{-a} f_\infty(x, \hat\bx; R)f_\infty(y, \hat\bx; R)\, d\hat\bx, \\[0.5cm]
{\mathcal J}_a^{(3)}(x, y; \varepsilon, R) 
= &\ \varepsilon^{-a}\int\,
\CM_\varepsilon(x, y; \hat\bx) 
f_\infty(x, \hat\bx; R)f_\infty(y, \hat\bx; R)\, d\hat\bx. 
\end{cases}
\end{align}   
 
\begin{lem}\label{lem:ja}
Suppose that for some $\varepsilon>0$ 
we have $|x|>\varepsilon$, $|y|>\varepsilon$ and $|x-y| > \varepsilon$. Then 
for all $R>0$, 
\begin{align}\label{eq:ja3}
{\mathcal J}_a^{(1)}(x, y; \varepsilon, R)
+ &\ {\mathcal J}_a^{(2)}(x, y; \varepsilon, R)
+ {\mathcal J}_a^{(3)}(x, y; \varepsilon, R)\notag\\
\lesssim &\ \big(1+|x|^{-a} + |y|^{-a} + h_{a+2}(\varepsilon)\big)\,
\big(\|\rho\|_{\plainL1(B(x, 2R))}\big)^{\frac{1}{2}}
\big(\|\rho\|_{\plainL1(B(y, 2R))}\big)^{\frac{1}{2}}.
%
%\|\psi\|_{\plainL2(\R^{3N})}^2.
%
\end{align} 
The implicit constant in \eqref{eq:Xint} may depend on $R$, but is independent of $\varepsilon$.
\end{lem}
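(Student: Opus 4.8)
The plan is to estimate ${\mathcal J}_a^{(1)}$, ${\mathcal J}_a^{(2)}$ and ${\mathcal J}_a^{(3)}$ separately and add the three bounds. Because $|x-y|>\varepsilon$, Proposition \ref{prop:empty} guarantees $\SP^*\subset\SfS^{\rm c}$ (otherwise $\supp_0\Phi(x,y,\ \cdot\ ;\varepsilon)=\varnothing$ and ${\mathcal J}_a^{(1)}={\mathcal J}_a^{(2)}=0$), so that \eqref{eq:phimux} is available; and since $|x|>\varepsilon$, $|y|>\varepsilon$, also \eqref{eq:phihat} holds. Using $\widehat T_\SP=\widehat T_{\SP^*}$, $\widehat T_\SfS=\widehat T_{\SfS^*}$ and the monotonicity $\widehat T_\SQ(\varepsilon/2)\subset\widehat T_\SQ(\varepsilon(4N)^{-1})$, these two inclusions combine to give, for each $\SQ\in\{\SP,\SfS^*\}$,
\[
\supp_0\Phi(x,y,\ \cdot\ ;\varepsilon)\subset\widehat X_\SQ\big(x,\varepsilon(4N)^{-1}\big)\cap\widehat T_\SQ\big(\varepsilon(4N)^{-1}\big),
\]
and the analogous inclusion with $x$ replaced by $y$ for each $\SQ\in\{\SfS,\SP^*\}$.

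To handle ${\mathcal J}_a^{(1)}$ I would use \eqref{eq:dclusterset} to write $\l_{\{\SfS^*,\SP\}}=\min\{\l_{\SfS^*},\l_\SP\}$, so that $(\l_{\{\SfS^*,\SP\}})^{-a}\le(\l_{\SfS^*})^{-a}+(\l_\SP)^{-a}$ since $a\ge0$; this splits ${\mathcal J}_a^{(1)}$ into two non-negative integrals over $\supp_0\Phi$. By the inclusions above, each is dominated by an integral of the type estimated in Lemma \ref{lem:Xint} with $t=x$ and cut-off parameter $\varepsilon(4N)^{-1}$, applied with $\SQ=\SfS^*$ and $\SQ=\SP$ respectively. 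As $1\in\SP$ while $1\notin\SfS^*$, only the $\SQ=\SP$ term produces a $|x|^{-a}$ contribution, and Lemma \ref{lem:Xint} gives
\[
{\mathcal J}_a^{(1)}\lesssim\big(1+|x|^{-a}+h_{a+2}\big(\varepsilon(4N)^{-1}\big)\big)\big(\|\rho\|_{\plainL1(B(x,2R))}\big)^{\frac{1}{2}}\big(\|\rho\|_{\plainL1(B(y,2R))}\big)^{\frac{1}{2}}.
\]
A short case check against \eqref{eq:h} shows $h_{a+2}(\varepsilon(4N)^{-1})\lesssim h_{a+2}(\varepsilon)$ with a constant depending only on $N$ and $a$. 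The estimate of ${\mathcal J}_a^{(2)}$ is entirely symmetric: split using $\l_{\{\SfS,\SP^*\}}=\min\{\l_\SfS,\l_{\SP^*}\}$, apply Lemma \ref{lem:Xint} with base point $y$ to the clusters $\SfS$ and $\SP^*$, and note that now $|y|^{-a}$ is produced only by $\SfS$ (which contains $1$), yielding ${\mathcal J}_a^{(2)}\lesssim(1+|y|^{-a}+h_{a+2}(\varepsilon))$ times the same product.

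For ${\mathcal J}_a^{(3)}$ I would estimate $I:=\int\CM_\varepsilon(x,y;\hat\bx)f_\infty(x,\hat\bx;R)f_\infty(y,\hat\bx;R)\,d\hat\bx$ in two ways: by \eqref{eq:couinf2}, $I$ is at most $\varepsilon^3$ times the product of the two $\plainL1$-norms, while since $\CM_\varepsilon$ is a finite sum of functions bounded by $1$, \eqref{eq:infinf} shows $I$ is also at most a constant times that product. Hence ${\mathcal J}_a^{(3)}=\varepsilon^{-a}I$ is bounded by $\min\{\varepsilon^{3-a},\varepsilon^{-a}\}$ times the product; for $\varepsilon\le1$ the minimum is $\varepsilon^{3-a}$ and $\varepsilon^{3-a}\lesssim1+h_{a+2}(\varepsilon)$ (immediate from \eqref{eq:h} on distinguishing the cases $a+2<5$, $a+2=5$, $a+2>5$), whereas for $\varepsilon>1$ the minimum is $\varepsilon^{-a}\le1$. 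Either way ${\mathcal J}_a^{(3)}\lesssim(1+h_{a+2}(\varepsilon))$ times the product. Adding the three estimates and absorbing everything into the single factor $1+|x|^{-a}+|y|^{-a}+h_{a+2}(\varepsilon)$ gives \eqref{eq:ja3}.

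I expect the genuinely delicate point to be the bookkeeping in the first two paragraphs: one must match each half of the split of $(\l_{\{\cdots\}})^{-a}$ with the correct base point — $x$ for the clusters arising in ${\mathcal J}_a^{(1)}$ and $y$ for those in ${\mathcal J}_a^{(2)}$ — and track which cluster contains the index $1$, so that Lemma \ref{lem:Xint} produces exactly the terms $|x|^{-a}$ and $|y|^{-a}$ and no spurious ones. The remaining manipulations (the comparison $h_{a+2}(\varepsilon(4N)^{-1})\lesssim h_{a+2}(\varepsilon)$ and the two-sided bound on $I$) are routine.
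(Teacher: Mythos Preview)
Your proof is correct and follows essentially the same route as the paper's: reduce to $\SP^*\subset\SfS^{\rm c}$ via Proposition~\ref{prop:empty}, split $(\l_{\{\SfS^*,\SP\}})^{-a}$ and $(\l_{\{\SfS,\SP^*\}})^{-a}$ into single-cluster pieces, confine the integration via \eqref{eq:phimux}--\eqref{eq:phihat}, and invoke Lemma~\ref{lem:Xint} with the appropriate base point, then handle $\mathcal J_a^{(3)}$ by \eqref{eq:couinf2}. Your treatment of $\mathcal J_a^{(3)}$ is in fact slightly more careful than the paper's: the paper simply asserts $\varepsilon^{3-a}\lesssim h_{a+2}(\varepsilon)$, which for $a<3$ and large $\varepsilon$ does not hold, whereas your two-sided bound on $I$ (via \eqref{eq:couinf2} and \eqref{eq:infinf}) gives $\min\{\varepsilon^{3-a},\varepsilon^{-a}\}$ and cleanly covers all $\varepsilon>0$.
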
 
 
\begin{proof} 
By Proposition \ref{prop:empty} we may assume that 
$\SP^*\subset \SfS^{\rm c}$.   

First estimate $\mathcal J_a^{(1)}(x, y; \varepsilon, R)$.  
By the definition \eqref{eq:dclusterset}, we have 
\begin{align*}
\l_{\{\SfS^*, \SP\}}(x, \hat\bx)^{-1}\le \l_{\SfS^*}(x, \hat\bx)^{-1} + \l_{\SP}(x, \hat\bx)^{-1}.
\end{align*}
Furthermore, since $\SP^*\subset \SfS^{\rm c}$, we can use 
\eqref{eq:phimux} and \eqref{eq:phihat}: 
\begin{align*}
\supp_0\Phi(x, y; \ \cdot\ ; \varepsilon)\subset \widehat X_\SP(x; \varepsilon(4N)^{-1})
\cap \widehat T_{\SP^*}(\varepsilon/2)\quad \textup{and}\\
\supp_0\Phi(x, y, \ \cdot\ ; \varepsilon)\subset \widehat X_{\SfS^*}(x; \varepsilon(4N)^{-1})
\cap \widehat T_{\SfS^*}(\varepsilon/2).
\end{align*}
These lead to the bound 
\begin{align*}
{\mathcal J}_a^{(1)}(x, y; \varepsilon, R)
\lesssim &\ \int\limits_{\widehat X_\SP(x; \varepsilon(4N)^{-1})
\cap \widehat T_{\SP}(\varepsilon/2)}    \l_{\SP}(x, \hat\bx)^{-a} 
f_\infty(x, \hat\bx; R) f_\infty(y, \hat\bx; R) d\hat\bx\\
 &\ + \int\limits_{\widehat X_{\SfS^*}(x; \varepsilon(4N)^{-1})\cap 
 \widehat T_{\SfS^*}(\varepsilon/2)}  \l_{\SfS^*}(x, \hat\bx)^{-a} 
f_\infty(x, \hat\bx; R)f_\infty(y, \hat\bx; R) d\hat\bx.  
\end{align*}
By \eqref{eq:Xint} with $t = x$, each of these integrals 
is bounded by 
\begin{align*}
\big(1+|x|^{-a} + h_{a+2}
(\varepsilon)\big)\,
\big(\|\rho\|_{\plainL1(B(x, 2R))}\big)^{\frac{1}{2}}
\big(\|\rho\|_{\plainL1(B(y, 2R))}\big)^{\frac{1}{2}},
%
%\|\psi\|_{\plainL2(\R^{3N})}^2,
%
\end{align*}
which implies \eqref{eq:ja3}.  
The integral ${\mathcal J}_a^{(2)}(x, y; \varepsilon, R)$ is estimated in the same way. 

The integral $\mathcal J_a^{(3)}(x, y; \varepsilon, R)$ is estimated with the help of 
\eqref{eq:couinf2}: 
\begin{align*}
\mathcal J_a^{(3)}(x, y; \varepsilon, R)\lesssim &\ \varepsilon^{3-a}
\,
\big(\|\rho\|_{\plainL1(B(x, 2R))}\big)^{\frac{1}{2}}
\big(\|\rho\|_{\plainL1(B(y, 2R))}\big)^{\frac{1}{2}}\\
%
%\|\psi\|_{\plainL2(\R^{3N})}^2
%
\lesssim &\ h_{a+2}(\varepsilon) \,
\big(\|\rho\|_{\plainL1(B(x, 2R))}\big)^{\frac{1}{2}}
\big(\|\rho\|_{\plainL1(B(y, 2R))}\big)^{\frac{1}{2}}.
%
%\|\psi\|_{\plainL2(\R^{3N})}^2.
%
\end{align*}
Combining the obtained bounds we complete the proof of \eqref{eq:ja3}.
\end{proof}

Now we are in a position to prove Theorem \ref{thm:derbounds}.

\section{Estimates for the density matrix: proof of Theorem \ref{thm:derbounds}
}\label{sect:estim}

To cover both inequalities \eqref{eq:der2} and \eqref{eq:der1} in one proof we 
study the following function:
\begin{align}\label{eq:rp}
\g_{r, p}(x, y) := \p_x^r\p_y^p\g(x, y) 
= \int \p_x^r\psi(x, \hat\bx) \overline{\p_y^p\psi(y, \hat\bx)}\, d\hat\bx, 
\end{align}
where $r, p\in\mathbb N_0^3$ are such that $|r|\le 1$, $|p|\le 1$. 
By Corollary \ref{cor:clusterpsi} the integrand in \eqref{eq:rp} does not exceed  
$f_\infty(x, \hat\bx; R) f_\infty(y, \hat\bx; R)$, see \eqref{eq:finfinity} for the definition of 
$f_\infty(\bx; R)$. Therefore, the estimate \eqref{eq:infinf} entails the bound 
\begin{align}\label{eq:rpest}
|\g_{r, p}(x, y)|\lesssim 
\big(\|\rho\|_{\plainL1(B(x, 2R))}\big)^{\frac{1}{2}}
\big(\|\rho\|_{\plainL1(B(y, 2R))}\big)^{\frac{1}{2}}.
\end{align}
In order to estimate the derivatives of $\g_{r, p}(x, y)$ 
we begin by studying ``local" quantities. For a fixed $\varepsilon>0$ 
let $\Phi = \Phi(x, y, \hat\bx; \varepsilon)$ 
be an extended cut-off as defined in \eqref{eq:Phi}.
Define 
\begin{align}\label{eq:rpphi}
\g_{r, p}(x, y; \Phi, \varepsilon)  
= \int \Phi(x, y, \hat\bx; \varepsilon) \p_x^r\psi(x, \hat\bx) 
\overline{\p_y^p\psi(y, \hat\bx)}\, d\hat\bx. 
\end{align} 

\begin{rem}\label{rem:grp}
The motivation to study the function \eqref{eq:rp} instead of the original one-particle density matrix 
\eqref{eq:gamma} can be explained by the following intuitive argument. 
As we saw earlier (see Corollary \ref{cor:clusterpsi}) higher order cluster 
derivatives of $\psi$ become ever more singular 
near the coalescence set $\Sigma_{\BSP}$ (i.e. for small $\l_{\BSP}$)  
as the order of derivative grows. 
This however does not apply to 
first order derivatives -- together with the function $\psi$ itself, 
they remain bounded near $\Sigma_{\BSP}$. 
Thus, taking derivatives of first order under the integral \eqref{eq:gamma} does not make 
the integral ``worse". At the same time, 
by virtue of the formula 
\begin{align}\label{eq:obv}
\p_x^m\p_y^l \g(x, y) = \p_x^{m-r}\p_y^{l-p}\gamma_{r, p}(x, y),
\end{align}
this reduces the order of the remaining 
derivatives. In particular, when proving \eqref{eq:der2} further in this section, 
where $\g$ is differentiated with respect to 
both variables $x$ and $y$, we use \eqref{eq:obv} with $|p|=|r|=1$, thereby reducing the order of the  derivatives  with respect to $x$ and $y$ by one each.  
%This observation serves as the basis of the proof of the bound \eqref{eq:der2} in Theorem 
%\ref{thm:derbounds} later in this section.  
On the other hand, in the proof of \eqref{eq:der1} where the differentiation is conducted with respect to one 
of the variables $x$ or $y$ only, we cannot take full  
advantage  of this  
order reduction. Rather, we derive \eqref{eq:der1} 
from \eqref{eq:der2} using the fundamental theorem of calculus which introduces the ``missing" derivative  
artificially.  

\end{rem}

Let us estimate the derivatives of $\g_{r, p}$ for arbitrary $|r|\le 1$, $|p|\le 1$. 

\begin{lem}\label{lem:11}
Let $|r|\le 1$, $|p|\le 1$. 
Suppose that for some $\varepsilon$ we have 
$|x|>\varepsilon$, $|y|>\varepsilon$ and $|x-y| > \varepsilon$. 
Then for all $\a, \b\in \mathbb N_0^3$, and all $R>0$, 
\begin{align}\label{eq:11}
|\p_x^\a \p_y^\b  \g_{r, p}(x,& y, \Phi; \varepsilon)|\notag\\
\lesssim &\ \big( 1 + |x|^{-|\a|-|\b|} + |y|^{-|\a|-|\b|} 
+ h_{|\a|+|\b|+2}(\varepsilon)\big)\,
\big(\|\rho\|_{\plainL1(B(x, R))}\big)^{\frac{1}{2}}
\big(\|\rho\|_{\plainL1(B(y, R))}\big)^{\frac{1}{2}}.
%
%\|\psi\|_{\plainL2(\R^{3N})},
%
\end{align} 
with an implicit constant independent of $\psi$, $\varepsilon$, but dependent on $R$.  
\end{lem}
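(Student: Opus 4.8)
The plan is to differentiate the integral \eqref{eq:rpphi} under the integral sign, convert the resulting $x$- and $y$-derivatives into cluster derivatives $\SD_{\SP}$ and $\SD_{\SfS}$ acting on the eigenfunction $\psi$ (as indicated in the introduction via the change of variables $\hat\bx = \hat\bw + \hat\bz$), and then bound the emerging integrand using Corollary \ref{cor:clusterpsi} together with the integral estimates of Lemma \ref{lem:ja}. First I would apply the argument sketched in the introduction: writing $\g_{r,p}(x,y;\Phi,\varepsilon)$ with the substitution $\hat\bx = \hat\bw + \hat\bz_\SP$ (where $z_j = x$ for $j\in\SP$ and $z_j = 0$ otherwise) turns $\p_x^\a$ applied to the factors depending on $x$ into the cluster derivative $\SD_{x,\SP}^\a$; similarly $\p_y^\b$ becomes $\SD_{y,\SfS}^\b$. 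Then the Leibniz rule distributes these cluster derivatives among the three factors $\Phi(x,y,\hat\bx;\varepsilon)$, $\p_x^r\psi(x,\hat\bx)$ and $\overline{\p_y^p\psi(y,\hat\bx)}$. The derivatives hitting $\Phi$ are controlled by \eqref{eq:phib}--\eqref{eq:phibm}, producing factors $\varepsilon^{-j}\CM_\varepsilon$; the derivatives hitting $\p_x^r\psi$ become $\SD_\SP^{\a'}\nabla^{|r|}\psi$, bounded via \eqref{eq:clustmod} by $\l_\SP(x,\hat\bx)^{1-|\a'|-|r|}f_\infty(x,\hat\bx;R)$, and symmetrically for the $y$-factor.

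The next step is to collect these pointwise bounds and integrate over $\supp_0\Phi(x,y,\cdot;\varepsilon)$. A typical term in the Leibniz expansion is a product of: a power $\varepsilon^{-j}\CM_\varepsilon(x,y,\hat\bx)^{?}$ from the $\Phi$-derivatives (here one uses that $\CM_\varepsilon$ is bounded and $\xi$-supported so powers of it are harmless), a factor $\l_{\{\SfS^*,\SP\}}(x,\hat\bx)^{-a_1}$ of the type appearing in $\mathcal J^{(1)}_{a_1}$ — note that on $\supp_0\Phi$ the relevant distance for the $x$-cluster derivative involves both $\SP$ and $\SfS^*$ because of the support inclusion \eqref{eq:phimux}, so $\l_\SP$ can be replaced by $\l_{\{\SfS^*,\SP\}}$ — a factor $\l_{\{\SfS,\SP^*\}}(y,\hat\bx)^{-a_2}$ of the type in $\mathcal J^{(2)}_{a_2}$, and the product $f_\infty(x,\hat\bx;R)f_\infty(y,\hat\bx;R)$. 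Since $|r|,|p|\le 1$, the exponents $a_1, a_2$ satisfy $a_1 \le |\a|$, $a_2\le |\b|$ (the $+1-|r|$ and $+1-|p|$ are nonnegative), and the total order lost to $\Phi$-derivatives plus to cluster derivatives sums to $|\a|+|\b|$; by Cauchy--Schwarz in $\hat\bx$ one reduces to the quantities $\mathcal J^{(1)}_{a_1}$, $\mathcal J^{(2)}_{a_2}$, $\mathcal J^{(3)}_{a_3}$ of \eqref{eq:ja}, whose sum is estimated by Lemma \ref{lem:ja} as $\big(1+|x|^{-a}+|y|^{-a}+h_{a+2}(\varepsilon)\big)$ times the $\rho$-norms, with $a = |\a|+|\b|$. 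Using monotonicity of $h_b$ and the fact that $|x|,|y|,\varepsilon$-powers with smaller exponents are dominated when the quantities are $\le 1$ (and absorbed into the constant $1$ otherwise), everything collapses to the claimed bound, after replacing $2R$ by $R$ by a harmless rescaling of $R$.

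The main obstacle I anticipate is the bookkeeping in the Leibniz expansion: one must verify carefully that every term produced — mixing $\Phi$-derivatives, $\SD_\SP$-derivatives on the $x$-slot of $\psi$, and $\SD_\SfS$-derivatives on the $y$-slot — has its singular factors $\l^{-a_i}$ supported precisely on the region where the corresponding $\mathcal J^{(i)}$-estimate applies, i.e. that the support inclusions \eqref{eq:phimux} and \eqref{eq:phihat} genuinely allow the replacement of $\l_\SP(x,\cdot)$ by $\l_{\{\SfS^*,\SP\}}(x,\cdot)$ and $\l_\SfS(y,\cdot)$ by $\l_{\{\SfS,\SP^*\}}(y,\cdot)$ under the cut-off, and that cross terms (where, say, an $x$-derivative lands on the $y$-factor of $\psi$, which cannot happen since that factor does not depend on $x$, or vice versa) are correctly seen to vanish. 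A secondary technical point is justifying differentiation under the integral sign, which follows from the local $\plainC1$-regularity of cluster derivatives of $\psi$ (Corollary \ref{cor:clusterpsi}) and the domination provided by the integrable bounds just described; and one must keep track that the condition $|\a'|\ge 1$ or $|r|\ge 1$ needed to invoke \eqref{eq:clustmod} is met, the remaining case (no derivative on a $\psi$-factor) being handled directly by $|\psi|\le f_\infty$.
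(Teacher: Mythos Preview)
Your plan has the right overall architecture --- change of variables, Leibniz rule, cluster-derivative bounds on $\psi$, then the integral estimates of Lemma \ref{lem:ja} --- but there is a genuine gap in the bookkeeping, precisely at the point you flag as a concern and then resolve incorrectly.

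After the change of variables $\hat\bx = \hat\bw + \hat\bz$ (whether you shift only by $x$ on $\SP^*$, or, as the paper does, simultaneously by $x$ on $\SP^*$ and by $y$ on $\SfS^*$, which is consistent since $\SP^*\cap\SfS^* = \varnothing$ under $|x-y|>\varepsilon$), the factor $\overline{\p_y^p\psi(y,\hat\bw+\hat\bz)}$ \emph{does} depend on $x$: the $\SP^*$-components of $\hat\bz$ equal $x$, so $\p_x$ acting on this factor yields $\SD_{\SP^*}\psi(y,\cdot)$, not zero. Symmetrically, $\p_y$ lands on $\p_x^r\psi(x,\cdot)$ and yields $\SD_{\SfS^*}\psi(x,\cdot)$. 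Hence the cross terms do not vanish, and the derivatives falling on each $\psi$-factor are cluster-\emph{set} derivatives: $\SD_{\{\SP,\SfS^*\}}^{\bn}$ on $\psi(x,\cdot)$ and $\SD_{\{\SP^*,\SfS\}}^{\bk}$ on $\psi(y,\cdot)$, with $|\bn|+|\bk|+|\bm|=|\a|+|\b|$ but no separate control such as $|\bn|\le|\a|$. This is exactly why Corollary \ref{cor:clusterpsi} is stated for cluster sets rather than single clusters, and why the distance $\l_{\{\SfS^*,\SP\}}$ appears: it comes directly from \eqref{eq:clustmod} applied to the two-cluster derivative, not from a support argument (indeed, replacing $\l_\SP^{-a}$ by $\l_{\{\SfS^*,\SP\}}^{-a}$ on grounds of \eqref{eq:phimux} would go the wrong way as an upper bound, since $\l_{\{\SfS^*,\SP\}}\le\l_\SP$).

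With the correct accounting, a typical integrand carries a product $\l_{\{\SfS^*,\SP\}}(x,\hat\bx)^{-|\bn|}\,\l_{\{\SfS,\SP^*\}}(y,\hat\bx)^{-|\bk|}$. Cauchy--Schwarz in $\hat\bx$ does not reduce this cleanly to $\mathcal J^{(1)}_{|\a|+|\b|}$ and $\mathcal J^{(2)}_{|\a|+|\b|}$; the paper instead applies Young's inequality to separate the two singular factors (and analogously the $\varepsilon^{-|\bm|}$ factor when $|\bm|\ge 1$) before invoking Lemma \ref{lem:ja}.
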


\begin{proof} Throughout the proof for the brevity of notation 
we often omit the dependence on $\varepsilon$.  
Let $\SP$ and $\SfS$ be the clusters for the 
admissible cut-offs $\phi$ and $\mu$ associated with $\Phi$.  By Proposition 
\ref{prop:empty} we may assume that 
$\SfS^*\subset \SP^{\rm c}$, which is equivalent to $\SP^*\subset \SfS^{\rm c}$. 

If $\a = \b = 0$, then \eqref{eq:11} holds because of \eqref{eq:rpest}. Thus we may assume that 
$|\a|+|\b|\ge 1$. Let us 
make the following change of variables 
under the integral $\g_{r, p}(x, y; \Phi)$. Define 
$\hat\bz = (z_2, z_3, \dots, z_N)\in\R^{3N-3}$ by 
\begin{align*}
z_j = 
\begin{cases}
x,\ j\in \SP^*,\\
y,\ j\in \SfS^*,\\
0,\ j\in\SP^{\rm c}\cap\SfS^{\rm c}. 
\end{cases}
\end{align*}
Change the variable in \eqref{eq:rpphi}: $\hat\bx = \hat\bw + \hat\bz$, so that 
\eqref{eq:rpphi} rewrites as
\begin{align*}
\g_{r, p}(x, y; \Phi)  
= &\ \int \Phi(x, y, \hat\bw + \hat\bz) \p_x^r\psi(x, \hat\bw+\hat\bz) 
\overline{\p_y^p\psi(y, \hat\bw + \hat\bz)}\, d\hat\bw.
\end{align*}
For any function $g = g(x, \hat\bx)$ and 
all $l, m_1, m_2\in\mathbb N_0^3$, we have  
\begin{align*}
\p_x^{l}\big(g(x, \hat\bw+\hat\bz)\big)  = 
(\SD_{\SP}^l g)(x, \hat\bw+\hat\bz),\quad  
\p_x^{l}\big(g(y, \hat\bw+\hat\bz)\big)  = 
(\SD_{\SP^*}^l g)(y, \hat\bw+\hat\bz)
\end{align*}
and 
\begin{align*}
\p_x^{m_1}\p_y^{m_2}\big(\Phi(x, y, \hat\bw + \hat\bz)\big)
 = \big(\SD_{x,\SP}^{m_1}\SD_{y, \SfS}^{m_2} \Phi\big)(x, y,  \hat\bw + \hat\bz), 
\end{align*}
where we have used the notation \eqref{eq:clustermul} for the cluster derivatives 
of the function $\Phi$. 
Denote 
\begin{align*}
Z(x, y) = \supp_0\, \Phi(x, y, \,\cdot\,).
\end{align*}
Thus we conclude that $\p_x^\a \p_y^\b\g_{r, p}(x, y; \Phi), \a, \b\in\mathbb N_0^3,$ 
is a linear combination of terms of the form
\begin{align}\label{eq:jmnk}
{\mathcal J}_{\bm, \bn, \bk}(x, y; \varepsilon) = \int_{Z(x, y)}
\big(\SD_{x,\SP}^{m_1}\SD_{y, \SfS}^{m_2} &\ \Phi(x, y,  \hat\bx) \big)\notag\\
&\ \times\big(\SD_{\{\SP, \SfS^*\}}^{\bn} \p_x^r\psi(x, \hat\bx)\big) 
\overline{\big(\SD_{\{\SP^*, \SfS\}}^{\bk}\p_y^p\psi(y, \hat\bx)\big)}\, d\hat\bx
\end{align} 
with 
\begin{align*}
\bm = (m_1, m_2),\ \bn = &\ (n_1, n_2), \bk = (k_1, k_2),\quad \textup{where}\\
|m_1|+|n_1|+|k_1| = |\a|,\quad &\ |m_2|+|n_2|+|k_2| = |\b|. 
\end{align*}
The cluster derivatives of $\Phi$ are defined in 
\eqref{eq:clustermul}. For these derivatives we use the bound \eqref{eq:phib}:
\begin{align}\label{eq:dphi}
|\SD_{x,\SP}^{m_1} \SD_{y,\SfS}^{m_2}\Phi(x, y, \hat\bx)| 
\lesssim 
\begin{cases}
1, \quad \textup{if}\quad |\bm|=0,\\[0.2cm]
\varepsilon^{-|\bm|} \CM_\varepsilon(x, y, \hat\bx),\quad \textup{if}\quad |\bm|\ge 1,
\end{cases}
\end{align}  
For the derivatives of $\psi$ we use Corollary \ref{cor:clusterpsi}: 
\begin{align}\label{eq:dnkpsi}
\begin{cases}
|\SD_{\{\SfS^*,\SP\}}^{\bn}\p_x^r\psi(x, \hat\bx) |
\lesssim &\ \big(\l_{\{\SfS^*,\SP\}}(x, \hat\bx)\big)^{-|\bn|}
f_\infty(x, \hat\bx; R/2),\\[0.2cm]
|\SD_{\{\SfS,\SP^*\}}^{\bk}\p_y^p\psi(y, \hat\bx) |
\lesssim &\ \big(\l_{\{\SfS,\SP^*\}}(y, \hat\bx)\big)^{-|\bk|} f_\infty(y, \hat\bx; R/2).
\end{cases}
\end{align}
In order to avoid cumbersome expressions, in the following calculations we use the notation 
\begin{align*}
\mu(x, \hat\bx) = \l_{\{\SfS^*,\SP\}}(x, \hat\bx),\,   
\tilde\mu(y, \hat\bx) = \l_{\{\SfS,\SP^*\}}(y, \hat\bx).
\end{align*}
Assume first that $\bm = \bold0$. In this case, by virtue of 
\eqref{eq:dphi} and \eqref{eq:dnkpsi}, 
 the integral \eqref{eq:jmnk} satisfies the estimate
\begin{align*}
|{\mathcal J}_{\bold0, \bn, \bk}(x, y; \varepsilon)| \lesssim 
\int_{Z(x, y)} \big(\mu(x,\hat\bx)\big)^{-|\bn|}\,\big(\tilde\mu(y,\hat\bx)\big)^{-|\bk|}
f_\infty(x, \hat\bx; R/2)\,  f_\infty(y, \hat\bx; R/2)\, d\hat\bx.
\end{align*}
By Young's inequality, for all $\bn, \bk$, such that $|\bn|+|\bk| = |\a|+|\b|\ge 1$, we have
\begin{align*}
\big(\mu(x, \hat\bx)\big)^{-|\bn|}
&\ \big(\tilde\mu(y, \hat\bx)\big)^{-|\bk|} \\
&  \le \frac{|\bn|}{|\a|+|\b|}
\big(\mu(x, \hat\bx)\big)^{-|\a| - |\b|}
+ \frac{|\bk|}{|\a|+|\b|}
\big(\tilde\mu(y, \hat\bx)\big)^{-|\a| - |\b|}.
\end{align*}
Consequently, using the notation \eqref{eq:ja} we can estimate,
\begin{align*}
|{\mathcal J}_{\bold0, \bn, \bk}(x, y; \varepsilon)|
\lesssim  {\mathcal J}_{|\a|+|\b|}^{(1)}(x, y; \varepsilon, R/2) 
+  {\mathcal J}_{|\a|+|\b|}^{(2)}(x, y; \varepsilon, R/2),
\end{align*}
with a constant independent of $\varepsilon$. 
By \eqref{eq:ja3}, the right-hand side satisfies \eqref{eq:11}, as required. 

Now assume that $|\bm|\ge 1$. Using again \eqref{eq:dphi} and \eqref{eq:dnkpsi} we get 
the estimate 
\begin{align*}
|{\mathcal J}_{\bm, \bn, \bk}(x, y; \varepsilon)| \lesssim 
\varepsilon^{-|\bm|}\,\int_{Z(x, y)} 
\CM_\varepsilon(x, y, \hat\bx)
&\ \big(\mu(x,\hat\bx)\big)^{-|\bn|}\,\big(\tilde\mu(y,\hat\bx)\big)^{-|\bk|}\\
&\ f_\infty(x, \hat\bx; R/2)\,  f_\infty(y, \hat\bx; R/2)\, d\hat\bx.
\end{align*}
By Young's inequality again, for all $\bm, \bn, \bk$ such that 
$|\bm|+|\bn|+|\bk|=|\a|+|\b|\ge 1$, 
we have
\begin{align*}
\varepsilon^{-|\bm|}\big(\mu(x, \hat\bx)\big)^{-|\bn|}
&\ \big(\tilde\mu(y, \hat\bx)\big)^{-|\bk|}
\le  
\frac{|\bm|}{|\a|+|\b|}\varepsilon^{-|\a|-|\b|}\\
&  + \frac{|\bn|}{|\a|+|\b|}
\big(\mu(x, \hat\bx)\big)^{-|\a| - |\b|}
+ \frac{|\bk|}{|\a|+|\b|}
\big(\tilde\mu (y, \hat\bx)\big)^{-|\a| - |\b|}.
\end{align*}
Therefore,
\begin{align*}
|{\mathcal J}_{\bm, \bn, \bk}(x, y; \varepsilon)| \lesssim 
&\ \varepsilon^{-|\a| - |\b|}\,\int_{Z(x, y)} 
\CM_\varepsilon(x, y, \hat\bx)
f_\infty(x, \hat\bx; R/2)\,  f_\infty(y, \hat\bx; R/2)\, d\hat\bx\\
&\ + \int_{Z(x, y)} \big(\mu(x,\hat\bx)\big)^{-|\a|-|\b|}
\CM_\varepsilon(x, y, \hat\bx)\,f_\infty(x, \hat\bx; R/2)\,  f_\infty(y, \hat\bx; R/2)\, d\hat\bx\\
&\ + \int_{Z(x, y)} \big(\tilde\mu(y,\hat\bx)\big)^{-|\a| - |\b|}
\CM_\varepsilon(x, y, \hat\bx)\,f_\infty(x, \hat\bx; R/2)\,  f_\infty(y, \hat\bx; R/2)\, d\hat\bx.
\end{align*}
In the second and in the third integral estimate $|\CM_\varepsilon(x, y, \hat\bx)|\lesssim 1$ and use 
the notation \eqref{eq:ja}:
\begin{align*}%\label{eq:jbe}
|{\mathcal J}_{\bm, \bn, \bk}(x, y; \varepsilon)|
\lesssim &\  {\mathcal J}_{|\a|+|\b|}^{(1)}(x, y; \varepsilon, R/2)\\
&\ + {\mathcal J}_{|\a|+|\b|}^{(2)}(x, y; \varepsilon, R/2)
+  {\mathcal J}_{|\a| + |\b|}^{(3)}(x, y; \varepsilon, R/2), 
\end{align*} 
with a constant independent of $\varepsilon$. 
By \eqref{eq:ja3}, the right-hand side satisfies \eqref{eq:11}. 

Putting the estimates for $|\bm|=0$ and $|\bm|\ge 1$ together, and summing 
over $\bm, \bn, \bk$, we arrive at \eqref{eq:11}, thereby completing the proof. 
\end{proof}

\begin{cor}\label{cor:11tot}
Suppose that for some $\varepsilon>0$ we have $|x|>\varepsilon$, 
$|y|>\varepsilon$ and $|x-y|>\varepsilon$. 
Then for all $\a, \b$ and $|p|\le 1, |r|\le 1,$ 
and all $R>0$ the bound holds:
\begin{align}\label{eq:11tot}
|\p_x^\a \p_y^\b  \g_{r, p}(x,& y)|\notag\\
\lesssim & \big(1 + |x|^{-|\a|-|\b|} + |y|^{-|\a|-|\b|} 
+ h_{|\a|+|\b|+2}(\varepsilon)\big)\,
\big(\|\rho\|_{\plainL1(B(x, R))}\big)^{\frac{1}{2}}
\big(\|\rho\|_{\plainL1(B(y, R))}\big)^{\frac{1}{2}},
%
%\|\psi\|_{\plainL2(\R^{3N})}.
%
\end{align}
with an implicit constant that is independent of $x, y$ and $\varepsilon$ but may depend on $R$. 
\end{cor}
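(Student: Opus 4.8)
The plan is to reduce \eqref{eq:11tot} to the local estimate \eqref{eq:11} of Lemma \ref{lem:11} by decomposing the density matrix derivative into a \emph{finite} sum over a partition of unity built from extended cut-offs. First I would fix $\varepsilon>0$ subject to $|x|>\varepsilon$, $|y|>\varepsilon$, $|x-y|>\varepsilon$, so that the hypotheses of Lemma \ref{lem:11} hold for this $\varepsilon$. Recall from \eqref{eq:Phi} that an extended cut-off is a product of $L:=2(N-1)+\binom{N-1}{2}$ factors, each of which is independently one of $\z=\z_\varepsilon$ or $\t=\t_\varepsilon=1-\z_\varepsilon$. Summing the product \eqref{eq:Phi} over all $2^L$ assignments of the factors and using $\z+\t\equiv1$ gives the partition of unity
\begin{align*}
\sum_{\Phi}\Phi(x,y,\hat\bx;\varepsilon)=1,\qquad (x,y,\hat\bx)\in\R^3\times\R^3\times\R^{3N-3},
\end{align*}
the (finite) sum running over the $2^L$ extended cut-offs; crucially $2^L$ depends only on $N$.

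Next I would split the integral \eqref{eq:rp}. Since by Corollary \ref{cor:clusterpsi} the integrand of \eqref{eq:rp} is dominated pointwise by $f_\infty(x,\hat\bx;R)f_\infty(y,\hat\bx;R)$, which is integrable in $\hat\bx$ by \eqref{eq:infinf}, inserting the partition of unity under the integral yields
\begin{align*}
\g_{r,p}(x,y)=\sum_{\Phi}\g_{r,p}(x,y;\Phi,\varepsilon),
\end{align*}
with $\g_{r,p}(x,y;\Phi,\varepsilon)$ defined in \eqref{eq:rpphi}. Because this is a finite sum, $\p_x^\a\p_y^\b$ commutes with it, so $\p_x^\a\p_y^\b\g_{r,p}(x,y)=\sum_\Phi\p_x^\a\p_y^\b\g_{r,p}(x,y;\Phi,\varepsilon)$. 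Applying Lemma \ref{lem:11} to each of the $2^L$ terms and summing, the common factor $\big(1+|x|^{-|\a|-|\b|}+|y|^{-|\a|-|\b|}+h_{|\a|+|\b|+2}(\varepsilon)\big)\big(\|\rho\|_{\plainL1(B(x,R))}\big)^{1/2}\big(\|\rho\|_{\plainL1(B(y,R))}\big)^{1/2}$ gets multiplied by $2^L$, which is absorbed into the implicit constant, producing \eqref{eq:11tot}.

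There is no serious obstacle here; the one point needing attention is that the number $2^L$ of cut-offs in the partition of unity depends only on $N$, and not on $\varepsilon$, $x$ or $y$, so that summing the $\varepsilon$-uniform bounds of Lemma \ref{lem:11} preserves the uniformity in $x$, $y$ and $\varepsilon$ of the resulting constant. (One should also note that the differentiation under the integral sign in each $\g_{r,p}(\,\cdot\,;\Phi,\varepsilon)$ has already been justified in the proof of Lemma \ref{lem:11}, via the change of variables converting $\p_x,\p_y$ into the cluster derivatives $\SD_{x,\SP}$, $\SD_{y,\SfS}$, which are continuous on $\supp_0\Phi$ by Corollary \ref{cor:clusterpsi}.)
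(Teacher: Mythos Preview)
Your proposal is correct and follows essentially the same approach as the paper: both build a partition of unity out of all extended cut-offs (the paper parametrizes them as $\Phi_{\U,\SfS}$ with $\U\subset\Xi$, $\SfS\subset\SR^*$, which amounts to your $2^L$ choices), decompose $\g_{r,p}$ accordingly, and apply Lemma~\ref{lem:11} term by term. The only cosmetic difference is the bookkeeping of the index set for the partition.
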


\begin{proof} 
In order to use Lemma \ref{lem:11} 
we build a partition of unity consisting of extended cut-offs. 
Recall the notation $\SR = \{1, 2, \dots, N\}$. 
Let $\Xi = \{(j, k)\in\SR\times\SR: j < k\}$. For each subset $\U\subset\Xi$ 
introduce the admissible cut-off (see \eqref{eq:canon} for the definition of admissible cut-offs)
\begin{align*}
\phi_{\U}(\bx; \varepsilon) = \prod_{(j, k)\in \U} \z_\varepsilon(x_j-x_k) 
\prod_{(j, k)\in\U^{\rm c}} \t_\varepsilon(x_j-x_k).
\end{align*} 
It is clear that 
\begin{align*}
\sum_{\U\subset{\Xi}} \phi_{\Upsilon}(\bx; \varepsilon) 
= \prod_{(j, k)\in \Xi}\big(\z_\varepsilon(x_j-x_k)+ \t_\varepsilon(x_j-x_k)\big) 
= 1.
\end{align*}
Furthermore, for every cluster $\SfS\subset \SR^*$ define 
\begin{align*}
\tau_{\SfS}(y, \hat\bx; \varepsilon) 
= \prod_{j\in \SfS} \z_\varepsilon(y-x_j) 
\prod_{j\in(\SfS^{\rm c})^*} \t_\varepsilon(y-x_j).
\end{align*}
It is clear that 
\begin{align*}
\sum_{\SfS\subset \SR^*}\tau_{\SfS}(y, \hat\bx; \varepsilon) 
= \prod_{j\in\SR^*}\big(\z_\varepsilon(x_1-x_j)+ \t_\varepsilon(x_1-x_j)\big) = 1.  
\end{align*}
Define
\begin{align*}
\Phi_{\U, \SfS}(x, y, \hat\bx; \varepsilon) 
= \phi_{\U}(x, \hat\bx; \varepsilon) 
\tau_{\SfS}(y, \hat\bx; \varepsilon),\quad 
(x, y)\in \R^3\times \R^3,\, \hat\bx\in \R^{3N-3},
\end{align*}
so that 
\begin{align*}
\sum\limits_{\U\subset\Xi,\ \SfS\subset\SR^*}\Phi_{\U, \SfS}(x, y, \hat\bx; \varepsilon) = 1.
\end{align*}
Each function $\Phi_{\U, \SfS}$ is an extended cut-off function, as 
defined in Subsect. \ref{subsect:extended}. 
Using the definition \eqref{eq:rpphi}, 
the function \eqref{eq:rp} can be represented as 
 \begin{align*}
 \g_{r, p}(x, y) = \sum_{\U\subset\Xi, \ \SfS\subset\SR^*} \g(x, y; \Phi_{\U, \SfS}, \varepsilon).
 \end{align*}
 Applying Lemma \ref{lem:11} to each summand we arrive at \eqref{eq:11tot}. 
\end{proof}

\begin{proof}[Proof of the bound \eqref{eq:der2}] 
Assume that $x\not = 0, y\not = 0, x\not = y$ and that $|l|\ge 1$, $|m|\ge 1$. 
Represent $l = \a+r$, $m = \b+p$ with $|r|=|p|=1$, so that $|l| = |\a|+1, |m| = |\b| +1$.
Furthermore, denote 
\begin{align*}
\varepsilon = \frac{1}{2}\min\{|x|, |y|, |x-y|\},
\end{align*}
so that $|x-y|>\varepsilon$ and $|x|>\varepsilon, |y|>\varepsilon$. 
Thus it follows from \eqref{eq:11tot} that 
\begin{align}\label{eq:11totn}
|\p_x^l \p_y^m  \g(x,& y)|\notag\\
\lesssim & \big(|x|^{2-|l|-|m|} + |y|^{2-|l|-|m|} 
+ h_{|l|+|m|}(\varepsilon)\big)\,
\big(\|\rho\|_{\plainL1(B(x, R))}\big)^{\frac{1}{2}}
\big(\|\rho\|_{\plainL1(B(y, R))}\big)^{\frac{1}{2}}.
%
% \|\psi\|_{\plainL2(\R^{3N})}.
%
\end{align}
Since for all $a\ge 0$ we have   
\begin{align*}
h_a(\varepsilon)\lesssim &\ 
h_a(|x-y|) + h_a(|x|) + h_a(|y|)\\
\lesssim &\ 1 + |x|^{2-a} + |y|^{2-a} + h_a(|x-y|),
\end{align*} 
the bound \eqref{eq:11totn} implies \eqref{eq:der2}. 
\end{proof}

\begin{proof}[Proof of \eqref{eq:der1}] 
For convenience denote
\begin{align*}
A(x, y; R_1, R_2) = 
\big(\|\rho\|_{\plainL1(B(x, R_1))}\big)^{\frac{1}{2}}
\big(\|\rho\|_{\plainL1(B(y, R_2))}\big)^{\frac{1}{2}}.
\end{align*}
Assume that $|l|\ge 1$. Representing $l = \a + r$, with some $r:|r|=1$,  
and taking $\b = p = 0$,  
%
%as in the proof of \eqref{eq:der2} 
%
we obtain from \eqref{eq:11tot} that   
\begin{align}\label{eq:xonly}
|\p_x^l \g(x, y)| + &\ |\p_y^l \g(x, y)|\notag\\
\lesssim &\ \big(1+|x|^{1-|l|} + |y|^{1-|l|} 
+ h_{|l|+1}(|x-y|)\big)\, A(x, y; R/2, R/2). 
\end{align}
To "upgrade" this bound to \eqref{eq:der1} we use the 
Fundamental Theorem of Calculus. Let $z\in\R^3$, $z\not = x,$ be such that 
the segment 
\begin{align*}
\{y_s = y+s(z-y),\, s\in [0, 1]\},
\end{align*}
does not contain the point $x$. Then 
\begin{align}\label{eq:calc}
\p_x^l \g(x, y) - \p_x^l \g(x, z) = - \int_0^1 (z-y)\cdot\nabla_y(\p_x^l\g)(x, y_s)\, ds. 
\end{align}
A similar formula holds for the derivative 
$\p_y^l \g(x, y)$, but we omit this part of the argument and complete the proof 
for the derivative $\p_x^l\g(x, y)$ only. 
Since the integrand in \eqref{eq:calc} contains derivatives both w.r.t. $x$ and $y$, 
we can use the bound \eqref{eq:der2} proved previously. 
First we make a convenient choice of $z$. Denote  
\begin{align*}
|x-y| = d,\quad \d = \frac{1}{4}\min\{|x|, |y|\}, 
\quad \d_1 = \min\{1, \d, R/2\}, \quad 
{\sf e} = \frac{x-y}{|x-y|}, 
\end{align*}
so 
\begin{align*}
x = y + |x-y|\,\se = y+d\,\se.
\end{align*}
Take $z = y - \d_1\,\se$, so that 
\begin{align*}
|y-y_s| = s\d_1,\,  |z-y| = \d_1,\ 
|z-x| = \d_1+ d, \quad %\textup{and}\quad 
|z|\ge |y| - \d_1 \ge  \frac{3}{4}|y|,
\end{align*}
and 
\begin{align*}
|y_s|\ge |y| - s\d_1 \ge \frac{3}{4}|y|,\quad
|x-y_s|  = d+s\d_1, \quad s\in [0, 1]. 
\end{align*}
Now apply \eqref{eq:der2} with the radius $R/2$ to estimate the integrand in  
\eqref{eq:calc}:
\begin{align}\label{eq:underint}
|(z-y)\cdot\nabla_y(&\p_x^l\g)(x, y_s)|\notag\\ 
\lesssim &\ |z-y|
\big( 1 + 
|y_s|^{1-|l|} + |x|^{1-|l|} + 
 h_{|l|+1}(|x-y_s|)\big) A(x, y_s; R/2, R/2)\notag\\[0.2cm]
 \lesssim &\ \d_1 \big(1 + 
|y|^{1-|l|} + |x|^{1-|l|} + 
 h_{|l|+1}(d+s\d_1)\big) A(x, y; R/2, R).
\end{align} 
An elementary calculation shows that for all $a\ge 0$ the bound holds:
\begin{align*}
\d_1 \int_0^1 h_{a+1}(d+s\d_1) ds\lesssim h_{a}(d).
\end{align*}
Consequently, integrating the bound \eqref{eq:underint} in $s\in [0, 1]$ we obtain from 
\eqref{eq:calc} that 
%
%and substituting the result 
%in \eqref{eq:calc} leads to the bound
%
\begin{align}\label{eq:wz}
|\p_x^l \g(x, y)|\lesssim |\p_x^l \g(x, z)| 
+ \big(1 + |y|^{1-|l|} + |x|^{1-|l|} + h_{|l|}(|x-y|)\big) A(x, y; R, R).
\end{align}
To estimate $|\p_x^l \g(x, z)|$ we use \eqref{eq:xonly}:
\begin{align*}
 |\p_x^l \g(x, z)| \lesssim  \big(1 + |x|^{1-|l|} + |z|^{1-|l|} 
+ h_{|l|+1}(|x-z|)\big)\, A\big(x, z; R/2, R/2\big).
\end{align*}
As $|z-x| = d+\d_1\ge \d_1$ we can estimate:
\begin{align*}
h_{|l|+1}(|x-z|)\le h_{|l|+1}(\d_1)\lesssim 1 + |x|^{1-|l|} + |y|^{1-|l|}. 
\end{align*} 
Furthermore, as $|y-z| = \d_1\le R/2$, we have $ A\big(x, z; R/2, R/2\big)
\le  A\big(x, y; R/2, R\big)$. 
Consequently, 
\begin{align*}
 |\p_x^l \g(x, z)| \lesssim  \big(1 + |x|^{1-|l|} + |y|^{1-|l|}\big)\, A\big(x, y; R/2, R\big).
\end{align*}
Together with \eqref{eq:wz} this bound entails \eqref{eq:der1}.
\end{proof}

This completes the proof of Theorem \ref{thm:derbounds}.

\section{Appendix}\label{sect:app}

Here we provide an elementary fact concerning extensions of Sobolev spaces. The proof can be found 
in the appendix to \cite{Sobolev2022a}.  

Consider spaces of functions that depend either 
on one variable $x\in\R^d$ or on two variables $(t, x)\in \R^l\times\R^d$. 
Denote $K = \{t\in\R^l: |t| < 1\}$,\ $B = \{x\in\R^d: |x| < 1\}$ and 
$B_0 = B\setminus\{0\}$. 

\begin{prop} \label{prop:remove}
Let the dimension $l$ be arbitrary, let 
$d\ge 2$, $m\ge 1$, and 
%
%$p\ge d(d-1)^{-1}$
%
$p\in [d(d-1)^{-1}, \infty]$. Then 
$\plainW{m, p}(B_0) 
= \plainW{m, p}(B)$ and $\plainW{m, p}\big(K\times B_0\big) 
= \plainW{m, p}\big(K\times B\big)$.
\end{prop}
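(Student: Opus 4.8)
The inclusion $\plainW{m,p}(B)\subseteq\plainW{m,p}(B_0)$, and likewise $\plainW{m,p}(K\times B)\subseteq\plainW{m,p}(K\times B_0)$, is immediate: restriction to an open subset of full measure preserves both the weak derivatives and the norm. So the task is the reverse inclusion, and the plan is to dispose of the case $m=1$ by a capacity (cut-off) argument and then bootstrap to all $m\ge1$ by a one-line induction.

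For $m=1$, I would fix $u\in\plainW{1,p}(B_0)$ with weak gradient $g=\nabla u\in\plainL{p}(B_0)$, a test function $\varphi\in\co(B)$, and a radial cut-off $\eta\in\plainC\infty(\R^d)$ with $\eta\equiv0$ on $B(0,1)$ and $\eta\equiv1$ off $B(0,2)$; put $\eta_\varepsilon(x)=\eta(x/\varepsilon)$, so that $\varphi\eta_\varepsilon\in\co(B_0)$ for every $\varepsilon>0$. Testing the weak-derivative identity on $B_0$ against $\varphi\eta_\varepsilon$ and using the Leibniz rule gives
\[
\int_B u\,\eta_\varepsilon\,\partial_i\varphi\,dx+\int_B u\,\varphi\,\partial_i\eta_\varepsilon\,dx=-\int_B g_i\,\varphi\,\eta_\varepsilon\,dx .
\]
Letting $\varepsilon\to0$, dominated convergence handles the first integral on the left and the one on the right, yielding $\int_B u\,\partial_i\varphi=-\int_B g_i\,\varphi$ provided the middle term tends to $0$. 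That term is supported in the annulus $A_\varepsilon=\{\varepsilon<|x|<2\varepsilon\}$, and with $p'=p/(p-1)$ it is bounded by
\[
\Big|\int_B u\,\varphi\,\partial_i\eta_\varepsilon\Big|\lesssim\varepsilon^{-1}\int_{A_\varepsilon}|u|
\le\varepsilon^{-1}\,\|u\|_{\plainL{p}(A_\varepsilon)}\,|A_\varepsilon|^{1/p'}
\lesssim\varepsilon^{-1+d/p'}\,\|u\|_{\plainL{p}(A_\varepsilon)} .
\]
Here the hypothesis $p\ge d(d-1)^{-1}$ enters exactly: it is equivalent to $d/p'=d-d/p\ge1$, so the exponent of $\varepsilon$ is $\ge0$, and since $\|u\|_{\plainL{p}(A_\varepsilon)}\to0$ by absolute continuity of the integral (for $p<\infty$; when $p=\infty$ one instead uses $\varepsilon^{-1}|A_\varepsilon|\lesssim\varepsilon^{d-1}\to0$, which is where $d\ge2$ is used), the middle term vanishes. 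This gives $u\in\plainW{1,p}(B)$.

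For $m\ge2$ I would induct: if $u\in\plainW{m,p}(B_0)$ then $u\in\plainW{1,p}(B_0)=\plainW{1,p}(B)$ by the previous step, so $\nabla u\in\plainL{p}(B)$; and each $\partial_i u$, viewed on $B_0$, lies in $\plainW{m-1,p}(B_0)=\plainW{m-1,p}(B)$ by the inductive hypothesis, hence $\nabla u\in\plainW{m-1,p}(B)$, i.e. $u\in\plainW{m,p}(B)$. The parameterized statement is proved by repeating the $m=1$ argument with $\eta_\varepsilon$ depending on the $x$-variable only: the error term is then supported on $K\times A_\varepsilon$, of measure $|K|\,|A_\varepsilon|\lesssim\varepsilon^d$, and only $x$-derivatives ever land on $\eta_\varepsilon$, so the same estimate applies; the induction on $m$ is identical. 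The only genuinely analytic point — and thus the main obstacle — is the $m=1$ error estimate, i.e. the fact that a single point carries no first-order $p$-capacity for $p\ge d(d-1)^{-1}$; the rest is the Leibniz rule and bookkeeping. One should take care at the endpoint $p=d(d-1)^{-1}$, where the $\varepsilon$-power is exactly $0$ and one must rely on $\|u\|_{\plainL{p}(A_\varepsilon)}\to0$ rather than on any gain from $|A_\varepsilon|$.
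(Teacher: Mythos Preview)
Your argument is correct. Note, however, that the paper itself does not supply a proof of this proposition: it merely states the result in the appendix and refers the reader to \cite{Sobolev2022a} for the details, so there is no in-paper proof to compare against. Your cut-off/capacity argument is the standard route to this removable-singularity statement, and you have handled the one genuinely analytic step correctly: the commutator term $\int u\,\varphi\,\partial_i\eta_\varepsilon$ is supported on an annulus of measure $\lesssim\varepsilon^d$, and the hypothesis $p\ge d/(d-1)$ gives the exponent $-1+d/p'\ge0$, after which absolute continuity of $\|u\|_{\plainL{p}(A_\varepsilon)}$ (or the raw volume factor $\varepsilon^{d-1}$ when $p=\infty$) kills it. The induction on $m$ and the extension to the product $K\times B$ using a cut-off in the $x$-variable only are routine and correctly executed.
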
  

We use this fact  in Remark \ref{rem:main} to describe smoothness properties of 
the one-particle density matrix $\gamma(x, y)$.

\vskip 0.5cm

\textbf{Acknowledgments.} The authors are grateful to A. Nazarov for his help regarding Proposition 
\ref{prop:reg}, and to D. Edmunds, V. Kozlov, V. Maz'ya, G. Rozenblum, D. Vassiliev 
for their advice on Sobolev spaces.


\begin{thebibliography}{10}
\providecommand{\url}[1]{\texttt{#1}}
\providecommand{\urlprefix}{URL }
\providecommand{\eprint}[2][]{\url{#2}}

\bibitem{BS1970}
M.~S. Birman and M.~Z. Solomyak, \emph{Asymptotics of the spectrum of weakly
  polar integral operators}. Izv. Akad. Nauk SSSR Ser. Mat. \textbf{34}:
  1142--1158, 1970.

\bibitem{Cioslowski2020}
J.~Cioslowski, \emph{Off-diagonal derivative discontinuities in the reduced
  density matrices of electronic systems}. The Journal of Chemical Physics
  \textbf{153(15)}: 154108, 2020.

\bibitem{Cioslowski2021}
J.~Cioslowski, \emph{Reverse engineering in quantum chemistry: How to reveal
  the fifth-order off-diagonal cusp in the one-electron reduced density matrix
  without actually calculating it}. International Journal of Quantum Chemistry
  \textbf{e26651}, 2021.

\bibitem{RDM2000}
A.~Coleman and V.~Yukalov, \emph{Reduced Density Matrices}, \emph{Lecture Notes
  in Chemistry}, vol.~72. Springer-Verlag Berlin Heidelberg, 2000.

\bibitem{Davidson1976}
E.~Davidson, \emph{Reduced Density Matrices in Quantum Chemistry}. Academic
  Press, 1976.

\bibitem{FHOS2002}
S.~Fournais, et~al., \emph{The electron density is smooth away from the
  nuclei}. Comm. Math. Phys. \textbf{228(3)}: 401--415, 2002.

\bibitem{FHOS2004}
S.~Fournais, et~al., \emph{Analyticity of the density of electronic
  wavefunctions}. Ark. Mat. \textbf{42(1)}: 87--106, 2004.

\bibitem{FS2021}
S.~Fournais and T.~Østergaard Sørensen, \emph{Estimates on derivatives of
  Coulombic wave functions and their electron densities}. Journal für die
  reine und angewandte Mathematik (Crelles Journal) \textbf{2021(775)}: 1--38,
  2021.

\bibitem{GilTru2001}
D.~Gilbarg and N.~S. Trudinger, \emph{Elliptic partial differential equations
  of second order}. Classics in Mathematics, Springer-Verlag, Berlin, 2001.
  Reprint of the 1998 edition.

\bibitem{HaKlKoTe2012}
C.~H\"attig, et~al., \emph{Explicitly Correlated Electrons in Molecules}. Chem.
  Rev \textbf{112(1)}: 4–74, 2012.

\bibitem{HearnSob2022}
S.~A. Hearnshaw, P., \emph{Analyticity of the One-Particle Density Matrix}.
  Ann. Henri Poincar\'e \textbf{23}: 707–738, 2022.

\bibitem{Kato1957}
T.~Kato, \emph{On the eigenfunctions of many-particle systems in quantum
  mechanics}. Comm. Pure Appl. Math. \textbf{10}: 151--177, 1957.

\bibitem{LadUra1968}
O.~A. Ladyzhenskaya and N.~N. Ural'tseva, \emph{Linear and quasilinear elliptic
  equations}. Academic Press, New York-London, 1968. Translated from the
  Russian by Scripta Technica, Inc, Translation editor: Leon Ehrenpreis.

\bibitem{LLS2019}
M.~Lewin, E.~H. Lieb, and R.~Seiringer, \emph{Universal Functionals in Density
  Functional Theory}. 2019. \eprint{1912.10424}.

\bibitem{LiebSei2010}
E.~H. Lieb and R.~Seiringer, \emph{The stability of matter in quantum
  mechanics}. Cambridge University Press, Cambridge, 2010.

\bibitem{ReedSimon2}
M.~Reed and B.~Simon, \emph{Methods of modern mathematical physics. {II}.
  {F}ourier analysis, self-adjointness}. Academic Press [Harcourt Brace
  Jovanovich, Publishers], New York-London, 1975.

\bibitem{Sobolev2022}
A.~V. Sobolev, \emph{Eigenvalue asymptotics for the one-particle density
  matrix}. To appear in Duke Math. J. 2022. \eprint{arXiv 2103.11896}.

\bibitem{Sobolev2022a}
A.~V. Sobolev, \emph{Eigenvalue asymptotics for the one-particle kinetic energy
  density operator}. J. Funct. Anal. 2022. \eprint{arXiv 2105.01986}.

\end{thebibliography}
\end{document}